\def\real{\mathbb R}
\def\hc{\mathcal{\hat C}}
\newcommand\mydots{\ifmmode\ldots\else\makebox[1em][c]{.\hfil.\hfil.}\fi}
\newcommand{\indep}{\rotatebox[origin=c]{90}{$\models$}}
\newtheorem{prop}{Proposition}
\newtheorem{theorem}{Theorem}
\newtheorem{lemma}{Lemma}
\newtheorem{definition}{Definition}
\newtheorem{corollary}{Corollary}
\newtheorem{assumption}{Assumption}
\newcommand{\xmark}{\ding{55}}%
\newcommand*{\addFileDependency}[1]{
  \typeout{(#1)}
  \@addtofilelist{#1}
  \IfFileExists{#1}{}{\typeout{No file #1.}}
}
\newcommand*{\myexternaldocument}[1]{%
    \externaldocument{#1}%
    \addFileDependency{#1.tex}%
    \addFileDependency{#1.aux}%
}
\begin{document}

\def\spacingset#1{\renewcommand{\baselinestretch}%
{#1}\small\normalsize} \spacingset{1}
\newcommand*{\Scale}[2][4]{\scalebox{#1}{$#2$}}%

\title{Selective Inference for Hierarchical Clustering}
\author{Lucy L. Gao$^{\dagger}$\footnote{Corresponding author: lucy.gao@stat.ubc.ca}, Jacob Bien$^{\circ}$, and Daniela Witten $^{\ddagger}$ \\~\\
{\small $\dagger$ Department of Statistics, University of British Columbia} \\ 
{\small $\circ$ Department of Data Sciences and Operations, University of Southern California} \\ 
{\small $\ddagger$ Departments of Statistics and Biostatistics, University of Washington}  \\
}

\maketitle
\begin{abstract}
Classical tests for a difference in means control the type I error rate when the groups are defined \emph{a priori}. However, when the groups are instead defined via clustering, then applying a classical test yields an extremely inflated type I error rate. Notably, this problem persists even if two separate and independent data sets are used to define the groups and to test for a difference in their means. To address this problem, in this paper, we propose a selective inference approach to test for a difference in means between two clusters. Our procedure controls the selective type I error rate by accounting for the fact that the choice of null hypothesis was made based on the data. We describe how to efficiently compute exact p-values for clusters obtained using agglomerative hierarchical clustering with many commonly-used linkages. We apply our method to simulated data and to single-cell RNA-sequencing data. 
\end{abstract}

{\it Keywords:} post-selection inference, hypothesis testing, difference in means, type I error
\vfill

\newpage
\spacingset{1.5} 

\section{Introduction} 
\label{sec:intro}

Testing for a difference in means between groups is fundamental to answering research questions across virtually every scientific area. Classical tests control the type I error rate when the groups are defined \emph{a priori}. However, it is increasingly common for researchers to instead define the groups via a clustering algorithm. In the context of single-cell RNA-sequencing data, researchers often cluster the cells to identify putative cell types, then test for a difference in means between the putative cell types in that same data set; see e.g. \citet{hwang2018single}. Unfortunately, available tests do not properly account for the double-use of data, which invalidates the resulting inference.  One example of this problematic issue can be seen in the \verb+FindMarkers+ function in the popular and highly-cited \verb+R+ package \verb+Seurat+ \citep{seuratv1, seuratv2, seuratv3, seuratv4}. Many recent papers have described the issues associated with the use of data for both clustering and downstream testing, without proposing suitable solutions \citep{luecken2019current, lahnemann2020eleven, deconinck2021recent}. In fact, testing for a difference in means between a pair of estimated clusters while controlling the type I error rate has even been described as one of eleven ``grand challenges" in the entire field of single-cell data science \citep{lahnemann2020eleven}. Similar issues arise in the field of neuroscience \citep{kriegeskorte2009circular}. 

In this paper, we develop a valid test for a difference in means between two clusters estimated from the data. We consider the following model for $n$ observations of $q$ features:
\begin{align} 
{\bf X} \sim \mathcal{MN}_{n \times q}(\bm \mu, {\bf I}_n, \sigma^2 {\bf I}_q), \label{eq:mod}
\end{align} 
where $\bm \mu \in \real^{n \times q}$, with rows $\mu_i$, is unknown, and $\sigma^2 > 0$ is known. (We discuss the case where $\sigma^2$ is unknown in Section \ref{sec:unknown}.) For $\mathcal{G}  \subseteq \{1,2, \ldots, n\}$, let
\begin{align} 
\bar{\mu}_{\mathcal{G}} \equiv \frac{1}{|\mathcal{G}|} \sum \limits_{i \in \mathcal{G}} \mu_i \quad \text{and} \quad    \bar{X}_{\mathcal{G}} \equiv \frac{1}{|\mathcal{G}|} \sum \limits_{i \in \mathcal{G}} X_i, \label{eq:def-mu}
\end{align} 
which we refer to as the mean of $\mathcal{G}$ and the empirical mean of $\mathcal{G}$ in ${\bf X}$, respectively. Given a realization ${\bf x} \in \mathbb{R}^{n \times q}$ of ${\bf X}$, we first apply a clustering algorithm $\mathcal{C}$ to obtain $\mathcal{C}({\bf x})$, a partition of $\{1, 2, \ldots, n\}$. We then use ${\bf x}$ to test, for a pair of clusters $\hc_1, \hc_2 \in \mathcal{C}({\bf x})$, 
\begin{align} H_0^{\{\hc_1, \hc_2\}}: \bar{\mu}_{\hc_1} = \bar{\mu}_{\hc_2} \quad \text{versus} \quad H_1^{\{\hc_1, \hc_2\}}: \bar{\mu}_{\hc_1} \neq  \bar{\mu}_{\hc_2}. \label{eq:null} 
\end{align}
It is tempting to simply apply a Wald test of
\eqref{eq:null}, with p-value given by
\begin{align} \mathbb{P}_{H_0^{\{\hc_1, \hc_2\}}} \left ( \|\bar{X}_{\hc_1} - \bar{X}_{\hc_2} \|_2 \geq \|\bar{x}_{\hc_1} - \bar{x}_{\hc_2}\|_2 \right ), \label{eq:wald}
\end{align}
where $ \|\bar{X}_{\hc_1} - \bar{X}_{\hc_2} \|_2 \overset{H_0^{\{\hc_1, \hc_2\}}}{\sim} \left ( \sigma \sqrt{\frac{1}{|\hc_1|} + \frac{1}{|\hc_2|}} \right )  \cdot \chi_q$. However, since we clustered ${\bf x}$ to get $\mathcal{C}({\bf x}) = \{\hc_k\}_{k=1}^K$, we will observe substantial differences between $\{\bar{x}_{\hc_k}\}_{k=1}^K$ even when there is no signal in the data, as is shown in Figure \ref{fig:pval-naive}(a). That is, in \eqref{eq:wald}, the random variable on the left-hand side of the inequality follows a scaled $\chi_q$ distribution, but the right-hand side of the inequality is \emph{not} drawn from a scaled $\chi_q$ distribution, because $\hc_1$ and $\hc_2$ are functions of ${\bf x}$. In short, the problem is that we used the data to select a null hypothesis to test. Since the Wald test does not account for this hypothesis selection procedure, it is extremely anti-conservative, as is shown in Figure \ref{fig:pval-naive}(b).

\begin{figure}[h!]
\centering
\includegraphics[scale=0.5]{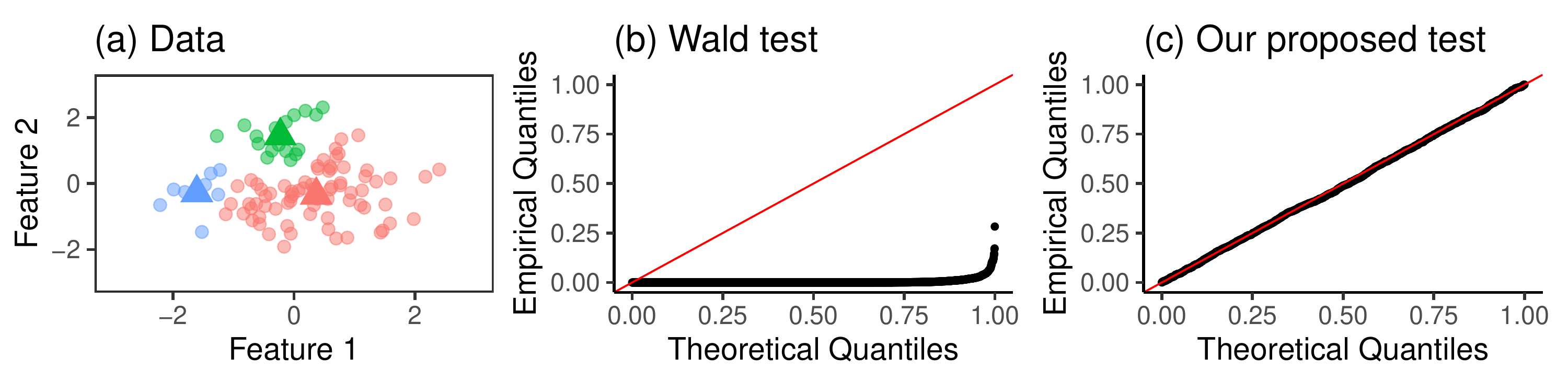}
\caption{\label{fig:pval-naive} (a) A simulated data set from \eqref{eq:mod} with $\bm \mu = \bm 0_{100 \times 2}$ and $\sigma^2 = 1$. We apply average linkage hierarchical clustering to get three clusters. The empirical means (defined in \eqref{eq:def-mu}) of the three clusters are displayed as triangles. QQ-plots of the Uniform$(0, 1)$ distribution against the p-values from (b) the Wald test in \eqref{eq:wald} and (c) our proposed test, over 2000 simulated data sets from \eqref{eq:mod} with $\bm \mu = \bm 0_{100 \times 2}$ and $\sigma^2 = 1$. For each simulated data set, a p-value was computed for a randomly chosen pair of estimated clusters. }
\end{figure} 

At first glance, it seems that we might be able to overcome this problem via sample splitting. That is, we  divide the observations into a training and a test set, cluster the observations in the training set,  and then assign the test set observations to those clusters, as in Figures \ref{fig:split}(a)--(c). Then, we  apply the Wald test in \eqref{eq:wald} to the test set. Unfortunately, by assigning test observations to clusters, we have once again used the data to select a null hypothesis to test, in the sense that $H_0^{\{\hc_1, \hc_2\}}$ in \eqref{eq:null} is a function of the test observations. Thus, the Wald test is extremely anti-conservative (Figure \ref{fig:split}(d)). In other words, sample splitting does not provide a valid way to test the hypothesis in \eqref{eq:null}. 

\begin{figure}[h!]
\centering
\includegraphics[scale=0.5]{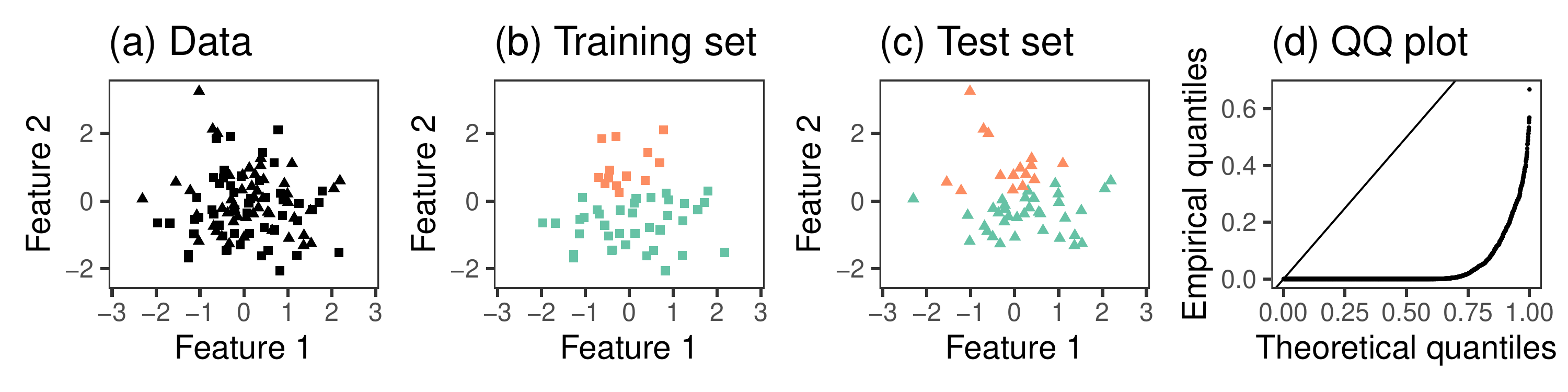}
\caption{\label{fig:split} (a) A simulated data set from \eqref{eq:mod} with $\bm \mu = \bm 0_{100 \times 2}$ and $\sigma^2 = 1$.  (b) We cluster the training set using average linkage hierarchical clustering. (c) We assign clusters in the test set by training a 3-nearest neighbors classifier on the training set. (d) QQ-plot of the Uniform$(0, 1)$ distribution against the Wald p-values in the test set, over 2000 simulated data sets for which each cluster in the test set was assigned at least one observation. }
\end{figure} 

In this paper, we develop a \emph{selective inference} framework to test for a difference in means after clustering. This framework exploits ideas from the recent literature on selective inference for regression and changepoint detection \citep{fithian2014optimal, loftus2015selective,  lee2016exact,yang2016selective, hyun2018exact,  jewell2019testing, mehrizi2021valid}. The key idea is as follows: since we chose to test $H_0^{\{\hc_1, \hc_2\}}: \bar{\mu}_{\hc_1} = \bar{\mu}_{\hc_2}$  because $\hc_1, \hc_2 \in \mathcal{C}({\bf x})$, we can account for this hypothesis selection procedure by defining a p-value that conditions on the event $\{\hc_1, \hc_2 \in \mathcal{C}({\bf X})\}$. This yields a correctly-sized test, as seen in Figure \ref{fig:pval-naive}(c). 

A large body of work  evaluates the statistical significance of a clustering by testing the goodness-of-fit of models under the misspecification of the number of clusters \citep{chen2001modified, liu2008statistical, chen2012inference, maitra2012bootstrapping, kimes2017statistical} or by assessing the stability of estimated clusters \citep{suzuki2006pvclust}. Most of these papers conduct bootstrap sampling or asymptotic approximations to the null distribution. Our proposed framework avoids the need for resampling and provides exact finite-sample inference for the difference in means between  a pair of estimated clusters, under the assumption that $\sigma$ in \eqref{eq:mod} is known. Chapter 3 of \citet{campbell2018statistical} considers testing for a difference in means after convex clustering \citep{hocking2011clusterpath}, a relatively esoteric form of clustering. Our framework is particularly efficient when applied to hierarchical clustering, which is one of the most popular types of clustering across a number of fields. \citet{zhang2019valid} proposes splitting the data, clustering the training set, and applying these clusters to the test set as illustrated in Figures \ref{fig:split}(a)--(c). They develop a selective inference framework that yields valid p-values for a difference in the mean of a single feature between two clusters in the test set. Our framework avoids the need for sample splitting, and thereby allows inference on the set of clusters obtained from \emph{all} (rather than a subset of) the data.

The rest of the paper is organized as follows. In Section \ref{sec:frame}, we develop a framework to test for a difference in means after clustering. We apply this framework to compute p-values for hierarchical clustering in Section \ref{sec:hier}.  We describe extensions, simulation results, and applications to real data in Sections \ref{sec:practical}, \ref{sec:sim}, and \ref{sec:app}. The discussion is in Section \ref{sec:discuss}. 

\section{Selective inference for clustering}
\label{sec:frame}

\subsection{A test of no difference in means between two clusters}
\label{sec:frame-test}
Let ${\bf x} \in \mathbb{R}^{n \times q}$ be an arbitrary realization from \eqref{eq:mod}, and let $\hc_1$ and $\hc_2$ be an arbitrary pair of clusters in $\mathcal{C}({\bf x})$. 
Since we chose to test $H_0^{\{\hc_1, \hc_2\}}$ because $\hc_1, \hc_2 \in \mathcal{C}({\bf x})$, it is natural to define the p-value as a conditional version of \eqref{eq:wald},
\begin{align} 
\mathbb{P}_{H_0^{\{\hc_1, \hc_2\}}}\left ( \|\bar{X}_{\hc_1} - \bar{X}_{\hc_2}\|_2 \geq \|\bar{x}_{\hc_1} - \bar{x}_{\hc_2} \|_2 ~\Big |  ~\hc_1, \hc_2 \in  \mathcal{C}({\bf X})  \right ). \label{eq:pval-hard} 
\end{align} 
This amounts to asking, ``Among all realizations of ${\bf X}$ that result in clusters $\hc_1$ and $\hc_2$, what proportion have a difference in cluster means at least as large as the difference in cluster means in our observed data set, when in truth $\bar{\mu}_{\hc_1} = \bar{\mu}_{\hc_2}$?" One can show that rejecting $H_0^{\{\hc_1, \hc_2\}}$ when \eqref{eq:pval-hard} is below $\alpha$ controls the \emph{selective type I error rate} \citep{fithian2014optimal} at level $\alpha$.
\begin{definition}[Selective type I error rate for clustering] 
\label{def:selective}
For any non-overlapping groups of observations  $\mathcal{G}_1, \mathcal{G}_2 \subseteq \{1, 2, \ldots, n\}$, let $H_0^{\{\mathcal{G}_1, \mathcal{G}_2\}}$ denote the null hypothesis that $ \bar{\mu}_{\mathcal{G}_1} = \bar{\mu}_{\mathcal{G}_2}$. We say that a test of $H_0^{\{\mathcal{G}_1, \mathcal{G}_2\}}$ based on ${\bf X}$ controls the selective type I error rate for clustering at level $\alpha$ if
\begin{align} \mathbb{P}_{H_0^{\{\mathcal{G}_1, \mathcal{G}_2\}}} \left (\text{reject } H_0^{\{\mathcal{G}_1, \mathcal{G}_2\}} \text{ based on } {\bf X} \text{ at level } \alpha ~\Big | ~ \mathcal{G}_1, \mathcal{G}_2  \in \mathcal{C}({\bf X}) \right ) \leq \alpha, \quad \forall~ 0 \leq \alpha \leq 1. \label{eq:selective} 
\end{align} 
That is, if $H_0^{\{\mathcal{G}_1, \mathcal{G}_2\}}$ is true, then the conditional probability of rejecting $H_0^{\{\mathcal{G}_1, \mathcal{G}_2\}}$ based on ${\bf X}$ at level $\alpha$, given that $\mathcal{G}_1$ and $\mathcal{G}_2$ are clusters in $\mathcal{C}({\bf X})$, is bounded by $\alpha$. 
\end{definition}
However, \eqref{eq:pval-hard} cannot be calculated, since the conditional distribution of $\|\bar{X}_{\hc_1} - \bar{X}_{\hc_2} \|_2$ given $\hc_1, \hc_2 \in  \mathcal{C}({\bf X})$ involves the unknown nuisance parameters $\bm \pi_{\nu(\hc_1, \hc_2)}^\perp \bm \mu$, where $\bm \pi_{\nu}^\perp = {\bf I}_n - \frac{\nu \nu^T}{\| \nu\|_2^2}$ projects onto the orthogonal complement of the vector $\nu$, and where 
\begin{align} 
[\nu(\hc_1, \hc_2)]_i = \mathds{1} \{i \in \hc_1\}/|\hc_1| - \mathds{1} \{i \in \hc_2\}/|\hc_2| \label{eq:def-nu}. 
\end{align} 
 In other words, it requires knowing aspects of $\bm \mu$ that are not known under the null. Instead, we will define the p-value for $H_0^{\{\hc_1, \hc_2\}}$ in \eqref{eq:null} to be 
\begin{align}
\Scale[0.9]{p({\bf x}; \{\hc_1, \hc_2\}  )} &= \Scale[0.9]{\mathbb{P}_{H_0^{\{\hc_1, \hc_2\}}} \Big ( \|\bar{X}_{\hc_1} - \bar{X}_{\hc_2}\|_2 \geq  \|\bar{x}_{\hc_1} - \bar{x}_{\hc_2}\|_2 ~ \Big | ~\hc_1, \hc_2 \in \mathcal{C}({\bf X}), \bm \pi_{\nu(\hc_1, \hc_2)}^\perp {\bf X} = \bm \pi_{\nu(\hc_1, \hc_2)}^\perp {\bf x}}, \nonumber \\ 
&\hspace{72mm} \Scale[0.9]{\text{dir} \left (\bar{X}_{\hc_1} - \bar{X}_{\hc_2} \right ) = \text{dir}\left (\bar{x}_{\hc_1} - \bar{x}_{\hc_2} \right )  \Big )}, \label{eq:pval-def}
\end{align}
 where $\text{dir}(w) = \frac{w}{\|w\|_2} \mathds{1} \{w \neq 0\}$.
The following result shows that conditioning on these additional events makes \eqref{eq:pval-def} computationally tractable by constraining the randomness in ${\bf X}$ to a scalar random variable, while maintaining control of the selective type I error rate. 
\begin{theorem} 
\label{thm:pval}
For any realization ${\bf x}$ from \eqref{eq:mod} and for any non-overlapping groups of observations $\mathcal{G}_1, \mathcal{G}_2 \subseteq \{1,2, \ldots, n\}$, 
\begin{align} 
p({\bf x}; \{\mathcal{G}_1, \mathcal{G}_2\}) = 1 - \mathbb{F} \left (\|\bar{x}_{\mathcal{G}_1} - \bar{x}_{\mathcal{G}_2}\|_2  ; \sigma \sqrt{\frac{1}{|\mathcal{G}_1|} + \frac{1}{|\mathcal{G}_2|}}, \mathcal{S}({\bf x}; \{ \mathcal{G}_1, \mathcal{G}_2 \} ) \right ), \label{eq:pval-phi}
\end{align} 
where $p(\cdot; \cdot)$ is defined in \eqref{eq:pval-def}, $\mathbb{F}(t; c, \mathcal{S})$ denotes the cumulative distribution function of a $c \cdot \chi_q$ random variable truncated to the set $\mathcal{S}$, and
\begin{align} 
\Scale[0.9]{\mathcal{S}({\bf x}; \{\mathcal{G}_1, \mathcal{G}_2\} ) = \left \{ \phi \geq 0: \mathcal{G}_1, \mathcal{G}_2 \in  \mathcal{C} \left (\bm \pi_{\nu(\mathcal{G}_1, \mathcal{G}_2)}^\perp {\bf x} + \left ( \frac{\phi}{\frac{1}{|\mathcal{G}_1|} + \frac{1}{|\mathcal{G}_2|}} \right ) \nu(\mathcal{G}_1, \mathcal{G}_2) \text{dir}(\bar{x}_{\mathcal{G}_1} - \bar{x}_{\mathcal{G}_2})^T  \right ) \right \}}.  \label{eq:defS-general}
\end{align} 
Furthermore, if $H_0^{\{\mathcal{G}_1, \mathcal{G}_2\}}$ is true, then 
\begin{align} 
\mathbb{P}_{H_0^{\{\mathcal{G}_1, \mathcal{G}_2\}}}\left ( p({\bf X}; \{\mathcal{G}_1, \mathcal{G}_2\}) \leq \alpha ~\Big |  ~\mathcal{G}_1, \mathcal{G}_2 \in  \mathcal{C}({\bf X})  \right ) = \alpha, \quad \forall ~ 0 \leq \alpha \leq 1. \label{eq:selective-ours}
\end{align} 
That is, rejecting $H_0^{\{\mathcal{G}_1, \mathcal{G}_2\}}$ whenever $p({\bf x}; \{\mathcal{G}_1, \mathcal{G}_2\})$ is below $\alpha$ controls the selective type I error rate (Definition \ref{def:selective}) at level $\alpha$.
\end{theorem}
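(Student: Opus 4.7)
The plan is to decompose ${\bf X}$ orthogonally along the contrast vector $\nu = \nu(\mathcal{G}_1, \mathcal{G}_2)$, exploit the rotational invariance of the matrix normal law under this decomposition, and show that once we condition on everything in \eqref{eq:pval-def} except the scalar $\phi = \|\bar{X}_{\mathcal{G}_1} - \bar{X}_{\mathcal{G}_2}\|_2$, the clustering event $\{\mathcal{G}_1, \mathcal{G}_2 \in \mathcal{C}({\bf X})\}$ is literally the set $\{\phi \in \mathcal{S}\}$ appearing in \eqref{eq:defS-general}.

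First I would write ${\bf X} = \bm \pi_\nu^\perp {\bf X} + \frac{\nu \nu^T {\bf X}}{\|\nu\|_2^2}$ and observe that $\nu^T {\bf X} = \bar X_{\mathcal{G}_1} - \bar X_{\mathcal{G}_2}$ (viewed as a length-$q$ vector) and $\|\nu\|_2^2 = 1/|\mathcal{G}_1| + 1/|\mathcal{G}_2|$. Splitting this length-$q$ vector into its magnitude $\phi$ and direction gives
\begin{align*}
{\bf X} \;=\; \bm \pi_\nu^\perp {\bf X} \;+\; \frac{\phi}{\|\nu\|_2^2}\, \nu\, \text{dir}(\bar X_{\mathcal{G}_1} - \bar X_{\mathcal{G}_2})^T.
\end{align*}
Under model \eqref{eq:mod}, the $n$-vector $\bm \pi_\nu^\perp {\bf X}$ and the $n$-vector $\nu^T {\bf X}$ are jointly Gaussian with zero cross-covariance (being orthogonal projections in $\real^n$), hence independent. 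Under $H_0^{\{\mathcal{G}_1, \mathcal{G}_2\}}$, $\nu^T \bm \mu = 0$, so $\nu^T {\bf X} \sim \mathcal{N}_q(0, \sigma^2 \|\nu\|_2^2 {\bf I}_q)$; spherical symmetry then gives $\phi \sim \sigma \|\nu\|_2 \cdot \chi_q$, independent of $\text{dir}(\nu^T {\bf X})$, which is uniform on the unit sphere in $\real^q$. Thus $\phi$ is independent of $(\bm \pi_\nu^\perp {\bf X},\, \text{dir}(\bar X_{\mathcal{G}_1} - \bar X_{\mathcal{G}_2}))$ under the null.

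Conditioning on $\bm \pi_\nu^\perp {\bf X} = \bm \pi_\nu^\perp {\bf x}$ and $\text{dir}(\bar X_{\mathcal{G}_1} - \bar X_{\mathcal{G}_2}) = \text{dir}(\bar x_{\mathcal{G}_1} - \bar x_{\mathcal{G}_2})$ turns ${\bf X}$ into the affine function of $\phi \geq 0$ that appears inside $\mathcal{C}(\cdot)$ in \eqref{eq:defS-general}, so the event $\{\mathcal{G}_1, \mathcal{G}_2 \in \mathcal{C}({\bf X})\}$ becomes exactly $\{\phi \in \mathcal{S}({\bf x}; \{\mathcal{G}_1, \mathcal{G}_2\})\}$. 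Combined with the independence established above, the conditional law of $\phi$ given all three conditioning events is a $\sigma \|\nu\|_2 \cdot \chi_q$ distribution truncated to $\mathcal{S}$, and \eqref{eq:pval-phi} is just its upper-tail probability at the observed $\phi$. Then \eqref{eq:selective-ours} follows by the probability integral transform: $1 - \mathbb{F}(\phi; \sigma \|\nu\|_2, \mathcal{S})$ is Uniform$(0,1)$ conditional on the three events, and hence also after marginalizing $(\bm \pi_\nu^\perp {\bf X}, \text{dir}(\bar X_{\mathcal{G}_1} - \bar X_{\mathcal{G}_2}))$ out, yielding the conditional probability statement given only $\{\mathcal{G}_1, \mathcal{G}_2 \in \mathcal{C}({\bf X})\}$.

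I expect the main obstacle to be the bookkeeping that verifies the clustering event and the truncation set $\mathcal{S}$ coincide after reparametrization: one must check that evaluating the affine map $\phi \mapsto \bm \pi_\nu^\perp {\bf x} + (\phi/\|\nu\|_2^2)\, \nu\, \text{dir}(\bar x_{\mathcal{G}_1} - \bar x_{\mathcal{G}_2})^T$ at the realized $\phi$ exactly reproduces ${\bf x}$, and address the mild measure-theoretic point that $\bar X_{\mathcal{G}_1} - \bar X_{\mathcal{G}_2} = 0$ is a null event (so conditioning on the direction is well-defined). The remaining steps are direct consequences of the Gaussian orthogonal decomposition, spherical symmetry, and the probability integral transform.
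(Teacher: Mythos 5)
Your proposal is correct and follows essentially the same route as the paper's proof in Appendix A.1: the same orthogonal decomposition of ${\bf X}$ along $\nu(\mathcal{G}_1,\mathcal{G}_2)$, the same two independence facts (projection versus contrast, and length versus direction of a centered Gaussian under the null), the same identification of the clustering event with $\{\phi \in \mathcal{S}\}$, and the same probability-integral-transform-plus-tower-property argument for \eqref{eq:selective-ours}. The only cosmetic slip is calling $\bm\pi_\nu^\perp{\bf X}$ an ``$n$-vector'' (it is an $n\times q$ matrix), which does not affect the argument.
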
 
We prove Theorem \ref{thm:pval} in Appendix \ref{sec:proof-pval}. 
 It follows from \eqref{eq:pval-phi} that to compute the p-value $p({\bf x}; \{\hc_1, \hc_2\})$ in \eqref{eq:pval-def}, it suffices to characterize the one-dimensional set 
\begin{align} 
\mathcal{\hat S} \equiv \mathcal{S}({\bf x}; \{\hc_1, \hc_2\}) = \{ \phi \geq 0 : \hc_1, \hc_2  \in  \mathcal{C}({\bf x}'(\phi)) \}, \label{eq:defS}
\end{align} 
where $\mathcal{S}({\bf x}; \cdot)$ is defined in \eqref{eq:defS-general}, and where 
\begin{align} {\bf x}'(\phi) = \bm \pi_{\hat \nu}^\perp {\bf x} + \left ( \frac{\phi}{1/|\hc_1| + 1/|\hc_2|} \right ) \hat \nu ~ \text{dir}(\bar{x}_{\hc_1} - \bar{x}_{\hc_2})^T, \quad \hat \nu = \nu(\hc_1, \hc_2), \label{eq:xphi} 
\end{align} 
for $\nu(\cdot, \cdot)$ defined in \eqref{eq:def-nu}. 

While the test based on \eqref{eq:pval-def} controls the selective type I error rate, the extra conditioning may lead to lower power than a test based on \eqref{eq:pval-hard}\citep{lee2016exact, jewell2019testing, mehrizi2021valid}. However, \eqref{eq:pval-def} has a major advantage over \eqref{eq:pval-hard}: Theorem \ref{thm:pval} reveals that computing \eqref{eq:pval-def} simply requires characterizing $\mathcal{\hat S}$ in \eqref{eq:defS}. This is the focus of Section \ref{sec:hier}. 

\subsection{Interpreting ${\bf x}'(\phi)$ and $\mathcal{\hat S}$}
\label{sec:phi-picture}
Since ${\bf x}^T \hat \nu = \bar{x}_{\hc_1} - \bar{x}_{\hc_2}$, where $\bar{x}_{\hc_1}$ is defined in \eqref{eq:def-mu} and $\hat \nu$ is defined in \eqref{eq:xphi}, it follows that the $i$th row of ${\bf x}'(\phi)$ in \eqref{eq:xphi} is 
\begin{align} 
[{\bf x}'(\phi)]_i = \begin{cases} x_i + \left ( \frac{|\hc_2|}{|\hc_1| + |\hc_2|} \right ) \left ( \phi - \|\bar{x}_{\hc_1} - \bar{x}_{\hc_2}\|_2 \right )\text{dir}\left (\bar{x}_{\hc_1} - \bar{x}_{\hc_2} \right ) , & \text{if } i \in \hc_1, \\ 
x_i  - \left ( \frac{|\hc_1|}{|\hc_1| + |\hc_2|} \right ) \left ( \phi - \|\bar{x}_{\hc_1} - \bar{x}_{\hc_2}\|_2 \right )\text{dir}\left (\bar{x}_{\hc_1} - \bar{x}_{\hc_2} \right ), & \text{if } i \in \hc_2, \\ 
x_i, & \text{if } i \not \in \hc_1 \cup \hc_2.
\end{cases} \label{eq:xphirow}
\end{align}
We can interpret ${\bf x}'(\phi)$ as a perturbed version of ${\bf x}$, where observations in clusters $\hc_1$ and $\hc_2$ have been ``pulled apart" (if $\phi > \|\bar{x}_{\hc_1} - \bar{x}_{\hc_2}\|_2$) or ``pushed together" (if $0 \leq \phi < \|\bar{x}_{\hc_1} - \bar{x}_{\hc_2}\|_2$)  in the direction of $ \bar{x}_{\hc_1} - \bar{x}_{\hc_2}$. Furthermore, $\mathcal{\hat S}$ in \eqref{eq:defS} describes the set of non-negative $\phi$ for which applying the clustering algorithm $\mathcal{C}$ to the perturbed data set ${\bf x}'(\phi)$ yields $\hc_1$ and $\hc_2$. To illustrate this interpretation, we apply average linkage hierarchical clustering to a realization from  \eqref{eq:mod} to obtain three clusters. Figure \ref{fig:xphi}(a)-(c) displays ${\bf x} = {\bf x}'(\phi)$ for  $\phi = \|\bar{x}_{\hc_1} - \bar{x}_{\hc_{2}}\|_2 = 4$, along with  ${\bf x}'(\phi)$ for $\phi = 0$ and $\phi = 8$, respectively. The clusters $\hc_1, \hc_{2} \in \mathcal{C}({\bf x})$ are shown in blue and orange. In Figure \ref{fig:xphi}(b), since $\phi=0$, the blue and orange clusters have been ``pushed together" so that there is no difference between their empirical means, and average linkage hierarchical clustering no longer estimates these clusters.  By contrast, in Figure \ref{fig:xphi}(c), the blue and orange clusters have been ``pulled apart", and average linkage hierarchical clustering still estimates these clusters. In this example, $\mathcal{\hat S}  = [2.8, \infty)$. 
 
\begin{figure}
\centering
\includegraphics[scale=0.4]{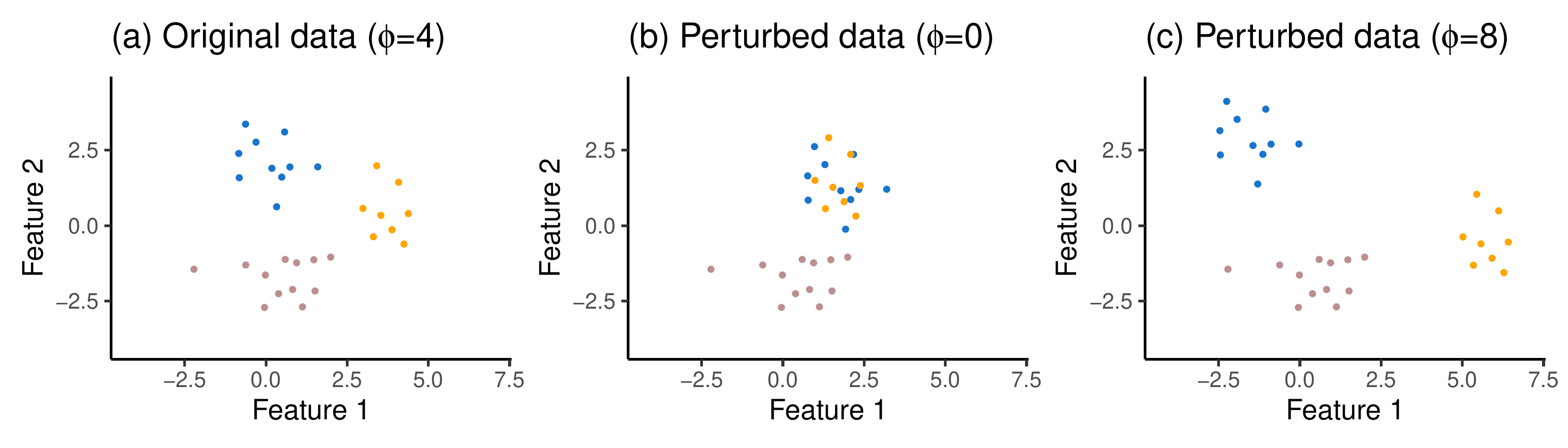}
\caption{\label{fig:xphi} The observations belonging to $\hc_1, \hc_{2}\in \mathcal{C}({\bf x})$ are displayed in blue and orange for: (a) the original data set ${\bf x} = {\bf x}'(\phi)$ with $\phi = \|\bar{x}_{\hc_1} - \bar{x}_{\hc_{2}}\|_2 = 4$, (b) a perturbed data set  ${\bf x}'(\phi)$ with $\phi = 0$, and (c) a perturbed data set  ${\bf x}'(\phi)$ with $\phi = 8$.  }
\end{figure} 

\section{Computing  $\mathcal{\hat S}$ for hierarchical clustering} 
\label{sec:hier}
We now  consider computing $\mathcal{\hat S}$ defined in \eqref{eq:defS} for clusters defined via hierarchical clustering. After reviewing hierarchical clustering (Section \ref{sec:review}), we explicitly characterize $\mathcal{\hat S}$ (Section \ref{sec:key}), and then show how specific properties, such as the dissimilarity and linkage used, lead to substantial computational savings in computing $\mathcal{\hat S}$ (Sections \ref{sec:lance}--\ref{sec:single}). 
\subsection{A brief review of agglomerative hierarchical clustering} 
\label{sec:review}

Agglomerative hierarchical clustering produces a sequence of clusterings. The first clustering, $\mathcal{C}^{(1)}({\bf x})$,   contains $n$ clusters, each with a single observation. The $(t+1)$th clustering, $\mathcal{C}^{(t+1)}({\bf x})$,   is created by merging the two most similar (or least dissimilar) clusters in the $t$th clustering, for $t = 1, \ldots, n-1$. Details are provided in Algorithm~\ref{alg:hier}. 

\begin{algorithm}[h!]  
\caption{Agglomerative hierarchical clustering of a data set ${\bf x}$ \label{alg:hier}} 
~ Let $\mathcal{C}^{(1)}({\bf x}) = \{ \{1\}, \{2\}, \ldots, \{n\} \}$. For $t = 1, \ldots, n-1$:
\begin{enumerate} 
\item Define $\left \{\mathcal{W}_1^{(t)}({\bf x}), \mathcal{W}_2^{(t)}({\bf x}) \right \} = \underset{\mathcal{G}, \mathcal{G}' \in \mathcal{C}^{(t)}({\bf x}),  \mathcal{G} \neq \mathcal{G}'}{\arg\min}  d(\mathcal{G}, \mathcal{G}'; {\bf x})$. (We assume throughout this paper that the minimizer is unique.) 
\item Merge $\mathcal{W}_1^{(t)}({\bf x})$ and $\mathcal{W}_2^{(t)}({\bf x})$ at the height of $d \left (\mathcal{W}_1^{(t)}({\bf x}), \mathcal{W}_2^{(t)}({\bf x}); {\bf x} \right )$ in the dendrogram, and let $\mathcal{C}^{(t+1)}({\bf x}) =  \mathcal{C}^{(t)}({\bf x}) \cup \left \{\mathcal{W}_1^{(t)}({\bf x}) \cup \mathcal{W}_2^{(t)}({\bf x})\right \} \backslash \left \{\mathcal{W}_1^{(t)}({\bf x}), \mathcal{W}_2^{(t)}({\bf x}) \right \}.$
\end{enumerate} 
\end{algorithm}

Algorithm \ref{alg:hier} involves a function $d(\mathcal{G}, \mathcal{G}'; {\bf x})$, which quantifies the dissimilarity between two groups of observations. We assume throughout this paper that the dissimilarity between the $i$th and $i$'th observations, $d(\{i\}, \{i'\}; {\bf x})$, depends on the data through $x_i - x_i'$ only. For example, we could define $d(\{i\}, \{i'\}; {\bf x}) = \|x_i - x_{i'}\|_2^2$. When $\max \{ |\mathcal{G}|, |\mathcal{G}'|\} > 1$, then we extend the pairwise similarity to the dissimilarity between groups of obervations using the notion of \emph{linakge}, to be discussed further in Section \ref{sec:lance}.

\subsection{An explicit characterization of  $\mathcal{\hat S}$  for hierarchical clustering}
\label{sec:key} 
We saw in Sections~\ref{sec:frame-test}--\ref{sec:phi-picture} that to compute the p-value $p({\bf x}; \{\hc_1, \hc_2\})$ defined in \eqref{eq:pval-def}, we must characterize the set  $\mathcal{\hat S} = \{ \phi \geq 0 : \hc_1, \hc_2 \in \mathcal{C}({\bf x}'(\phi)) \}$ in \eqref{eq:defS}, where ${\bf x}'(\phi)$ in \eqref{eq:xphi} is a perturbed version of  ${\bf x}$ in which observations in the clusters $\hc_1$ and $\hc_2$ have been ``pulled together" or ``pushed apart". We do so now for hierarchical clustering. 
\begin{lemma} 
\label{lem:heights}
Suppose that $\mathcal{C} = \mathcal{C}^{(n-K+1)}$, i.e. we perform hierarchical clustering to obtain $K$ clusters. Then, 
\begin{align} 
d\left ( \mathcal{W}_1^{(t)}({\bf x}), \mathcal{W}_2^{(t)}({\bf x}); {\bf x}'(\phi) \right ) = d\left ( \mathcal{W}_1^{(t)}({\bf x}), \mathcal{W}_2^{(t)}({\bf x}); {\bf x} \right ),  ~\forall ~ \phi \geq 0, \forall~ t = 1,  \ldots, n-K, \label{eq:heights}
\end{align} 
where $\left (\mathcal{W}_1^{(t)}({\bf x}),\mathcal{W}_2^{(t)}({\bf x}) \right )$ is the ``winning pair" of clusters that merged at the $t^{th}$ step of the hierarchical clustering algorithm applied to ${\bf x}$. Furthermore, for any $\phi \geq 0$,
\begin{gather} 
\hc_1, \hc_2 \in \mathcal{C}({\bf x}'(\phi)) \quad \text{ if and only if } \quad \mathcal{C}^{(t)}({\bf x}'(\phi)) = \mathcal{C}^{(t)}({\bf x}) ~  \forall ~ t = 1,  \ldots, n-K+1. \label{eq:merges} 
\end{gather} 
\end{lemma}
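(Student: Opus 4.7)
The plan is to prove (heights) first and then deduce (merges) from it by contrapositive, using (heights) as the main technical tool. The workhorse throughout is the elementary \emph{refinement property} of agglomerative clustering: for every $t \leq n-K+1$, each cluster in $\mathcal{C}^{(t)}({\bf x})$ is a subset of some cluster in $\mathcal{C}^{(n-K+1)}({\bf x}) = \mathcal{C}({\bf x})$, and in particular for $t \leq n-K$, both $\mathcal{W}_1^{(t)}({\bf x})$ and $\mathcal{W}_2^{(t)}({\bf x})$ lie inside a single final cluster (because their union $\mathcal{W}_1^{(t)}({\bf x}) \cup \mathcal{W}_2^{(t)}({\bf x}) \in \mathcal{C}^{(t+1)}({\bf x})$ does).

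For (heights), I would first read off from (\ref{eq:xphirow}) that the perturbation ${\bf x}'(\phi) - {\bf x}$ is constant across rows indexed by $\hc_1$, constant across rows indexed by $\hc_2$, and zero outside $\hc_1 \cup \hc_2$. Consequently, whenever indices $i, i'$ lie in the same final cluster, $[{\bf x}'(\phi)]_i - [{\bf x}'(\phi)]_{i'} = x_i - x_{i'}$. Since the dissimilarity between two subsets depends on the data only through pairwise differences among observations in the two subsets (assumed for singletons in Section \ref{sec:review}, and inherited by the Lance--Williams linkages treated in Section \ref{sec:hier}), applying this observation to $(\mathcal{W}_1^{(t)}({\bf x}), \mathcal{W}_2^{(t)}({\bf x}))$ for $t \leq n-K$ yields (heights).

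For (merges), the $(\Leftarrow)$ direction is immediate, and I would prove $(\Rightarrow)$ by contrapositive. Assuming $t^* \leq n-K+1$ is the smallest index with $\mathcal{C}^{(t^*)}({\bf x}'(\phi)) \neq \mathcal{C}^{(t^*)}({\bf x})$, the step-$(t^*-1)$ winning pairs differ between the two runs even though $\mathcal{C}^{(t^*-1)}$ agrees. Writing $(\mathcal{W}_1, \mathcal{W}_2) = (\mathcal{W}_1^{(t^*-1)}({\bf x}), \mathcal{W}_2^{(t^*-1)}({\bf x}))$ and $d_1 = d(\mathcal{W}_1, \mathcal{W}_2; {\bf x})$, (heights) gives $d(\mathcal{W}_1, \mathcal{W}_2; {\bf x}'(\phi)) = d_1$, so uniqueness of the minimizer forces the ${\bf x}'(\phi)$-winner $(\mathcal{G}, \mathcal{G}')$ to satisfy $d(\mathcal{G}, \mathcal{G}'; {\bf x}'(\phi)) < d_1$. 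I would then classify each cluster in $\mathcal{C}^{(t^*-1)}({\bf x})$ by which of the three regions $\hc_1$, $\hc_2$, $(\hc_1 \cup \hc_2)^c$ it sits in (by the refinement property, each cluster lies in exactly one region). If $\mathcal{G}$ and $\mathcal{G}'$ share a region, the same argument as for (heights) applies to give $d(\mathcal{G}, \mathcal{G}'; {\bf x}'(\phi)) = d(\mathcal{G}, \mathcal{G}'; {\bf x}) < d_1$, contradicting the minimality of $(\mathcal{W}_1, \mathcal{W}_2)$ on ${\bf x}$.

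The remaining case---and the main obstacle---is when $\mathcal{G}$ and $\mathcal{G}'$ sit in distinct regions; at least one must then be contained in $\hc_1$ or $\hc_2$ (else both lie in $(\hc_1 \cup \hc_2)^c$ and share that region). Taking WLOG $\mathcal{G} \subseteq \hc_1$, the merged cluster $\mathcal{G} \cup \mathcal{G}' \in \mathcal{C}^{(t^*)}({\bf x}'(\phi))$ contains observations both in $\hc_1$ (from $\mathcal{G}$) and outside $\hc_1$ (from $\mathcal{G}'$). Here I would argue that subsequent agglomeration steps can only enlarge this cluster, so the cluster of $\mathcal{C}({\bf x}'(\phi))$ containing $\mathcal{G} \cup \mathcal{G}'$ still contains observations both in and outside $\hc_1$, and is therefore not $\hc_1$; since the observations of $\mathcal{G} \subseteq \hc_1$ belong to this cluster, $\hc_1 \notin \mathcal{C}({\bf x}'(\phi))$, contradicting the hypothesis $\hc_1, \hc_2 \in \mathcal{C}({\bf x}'(\phi))$. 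Beyond this irreversibility argument, everything else is bookkeeping once the refinement property and the piecewise-constant structure of ${\bf x}'(\phi) - {\bf x}$ across the final clusters are in place.
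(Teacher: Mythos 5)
Your proposal is correct and follows essentially the same route as the paper's proof: the same preliminary observation that dissimilarities are preserved for pairs of sets lying in a common region ($\hc_1$, $\hc_2$, or the complement), the same refinement-property argument for \eqref{eq:heights}, and the same minimal-disagreeing-step contradiction for \eqref{eq:merges}. The only difference is that you explicitly work out the ``straddling'' case (where the winning pair in ${\bf x}'(\phi)$ spans two regions) via the irreversibility of merges, whereas the paper disposes of that case in a single sentence by asserting that the hypothesis $\hc_1, \hc_2 \in \mathcal{C}({\bf x}'(\phi))$ forces the winning pair to share a region --- the same argument, stated more tersely.
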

We prove Lemma \ref{lem:heights} in Appendix \ref{sec:proof-hier}.
The right-hand side of \eqref{eq:merges} says that the same merges occur in the first $n-K$ steps of the hierarchical clustering algorithm applied to ${\bf x}'(\phi)$ and ${\bf x}$. To characterize the set of merges that occur in ${\bf x}$, consider the set of all ``losing pairs", i.e. all cluster pairs that co-exist but are not the ``winning pair" in the first $n-K$ steps: 
\begin{align} 
\mathcal{L}({\bf x}) = \bigcup \limits_{t=1}^{n-K} \left \{ \{\mathcal{G}, \mathcal{G}'\}: \mathcal{G}, \mathcal{G}' \in \mathcal{C}^{(t)}({\bf x}), \mathcal{G} \neq \mathcal{G}', \{ \mathcal{G}, \mathcal{G}' \} \neq \left \{ \mathcal{W}_1^{(t)}({\bf x}), \mathcal{W}_2^{(t)}({\bf x}) \right \} \right \}. \label{eq:losers}
\end{align} 
Each pair $\{\mathcal{G}, \mathcal{G}' \} \in \mathcal{L}({\bf x})$ has a ``lifetime" that starts at the step where both have been created, $l_{\mathcal{G}, \mathcal{G}'}({\bf x}) \equiv \min \left \{1 \leq t \leq n-K: \mathcal{G}, \mathcal{G}' \in \mathcal{C}^{(t)}({\bf x}), \{\mathcal{G}, \mathcal{G}' \} \neq \{ \mathcal{W}_1^{(t)}({\bf x}), \mathcal{W}_2^{(t)}({\bf x})\}  \right \}$, and ends at step
$u_{\mathcal{G}, \mathcal{G}'}({\bf x}) \equiv  \max \left \{ 1 \leq t \leq n-K: \mathcal{G}, \mathcal{G}' \in \mathcal{C}^{(t)}({\bf x}), \{\mathcal{G}, \mathcal{G}' \} \neq \{ \mathcal{W}_1^{(t)}({\bf x}), \mathcal{W}_2^{(t)}({\bf x})\}  \right \}.$ By construction, each pair $\{\mathcal{G}, \mathcal{G}' \} \in \mathcal{L}({\bf x})$ is never the winning pair at any point in its lifetime, i.e. $d(\mathcal{G}, \mathcal{G}'; {\bf x}) > d \left (\mathcal{W}_1^{(t)}({\bf x}), \mathcal{W}_2^{(t)}({\bf x}); {\bf x}\right )$ for all $l_{\mathcal{G},\mathcal{G}'}({\bf x}) \leq t \leq u_{\mathcal{G},\mathcal{G}'}({\bf x})$. Therefore, ${\bf x}'(\phi)$ preserves the merges that occur in the first $n-K$ steps in ${\bf x}$ if and only if $d(\mathcal{G}, \mathcal{G}'; {\bf x}'(\phi)) > d \left (\mathcal{W}_1^{(t)}({\bf x}), \mathcal{W}_2^{(t)}({\bf x}); {\bf x}'(\phi) \right )$ for all $l_{\mathcal{G},\mathcal{G}'}({\bf x}) \leq t \leq u_{\mathcal{G},\mathcal{G}'}({\bf x})$ and for all $\{\mathcal{G}, \mathcal{G}' \} \in \mathcal{L}({\bf x})$. Furthermore, \eqref{eq:heights} says that $d \left (\mathcal{W}_1^{(t)}({\bf x}), \mathcal{W}_2^{(t)}({\bf x}); {\bf x}'(\phi) \right ) = d \left (\mathcal{W}_1^{(t)}({\bf x}), \mathcal{W}_2^{(t)}({\bf x}); {\bf x} \right )$ for all $\phi \geq 0$ and $1 \leq t \leq n-K$.  This leads to the following result. 
\begin{theorem} 
\label{thm:hierS}
Suppose that $\mathcal{C} = \mathcal{C}^{(n-K+1)}$, i.e. we perform hierarchical clustering  to obtain $K$ clusters. Then, for $\mathcal{\hat S}$ defined in \eqref{eq:defS},
\begin{align} 
\mathcal{\hat S} = \bigcap \limits_{\{\mathcal{G}, \mathcal{G}'\} \in \mathcal{L}({\bf x})} \left \{ \phi \geq 0: d(\mathcal{G}, \mathcal{G}'; {\bf x}'(\phi)) > \underset{l_{\mathcal{G},\mathcal{G}'}({\bf x}) \leq t \leq u_{\mathcal{G},\mathcal{G}'}({\bf x})}{\max} d\left ( \mathcal{W}_1^{(t)}({\bf x}),\mathcal{W}_2^{(t)}({\bf x})  ; {\bf x} \right ) \right \}, \label{eq:hierS}
\end{align} 
where $\left \{ \mathcal{W}_1^{(t)}({\bf x}),\mathcal{W}_2^{(t)}({\bf x}) \right \}$ is the pair of clusters that merged at the $t^{th}$ step of the hierarchical clustering algorithm applied to ${\bf x}$, $\mathcal{L}({\bf x})$ is defined in \eqref{eq:losers} to be the set of ``losing pairs" of clusters in ${\bf x}$, and $[l_{\mathcal{G},\mathcal{G}'}({\bf x}) , u_{\mathcal{G},\mathcal{G}'}({\bf x}) ]$ is the lifetime of such a pair of clusters in ${\bf x}$. Furthermore, \eqref{eq:hierS} is the intersection of $\mathcal{O}(n^2)$ sets. 
\end{theorem}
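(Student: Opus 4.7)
The plan is to build on Lemma~\ref{lem:heights}, which identifies membership in $\mathcal{\hat S}$ with preservation of the entire sequence of merges during the first $n-K$ steps of hierarchical clustering. My first step would be to convert that global equivalence into step-by-step conditions on pairwise dissimilarities. By \eqref{eq:merges} we have $\phi \in \mathcal{\hat S}$ if and only if $\mathcal{C}^{(t)}({\bf x}'(\phi)) = \mathcal{C}^{(t)}({\bf x})$ for every $1 \leq t \leq n-K+1$. Since $\mathcal{C}^{(t+1)}$ is determined by $\mathcal{C}^{(t)}$ together with the identity of the winning pair at step $t$, a short induction on $t$ reduces this to the requirement that, for each $t = 1, \ldots, n-K$, the pair $\{\mathcal{W}_1^{(t)}({\bf x}), \mathcal{W}_2^{(t)}({\bf x})\}$ is the unique argmin of $d(\cdot, \cdot; {\bf x}'(\phi))$ among pairs in $\mathcal{C}^{(t)}({\bf x})$.

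Second, I would translate this ``same winning pair at every step'' condition into the strict-inequality form of \eqref{eq:hierS}. For each step $t \leq n-K$ and each losing pair $\{\mathcal{G}, \mathcal{G}'\}$ of distinct clusters in $\mathcal{C}^{(t)}({\bf x})$, the unique-argmin condition is
\[ d(\mathcal{G}, \mathcal{G}'; {\bf x}'(\phi)) > d\bigl(\mathcal{W}_1^{(t)}({\bf x}), \mathcal{W}_2^{(t)}({\bf x}); {\bf x}'(\phi)\bigr). \]
Invariance of winning-pair dissimilarities under the perturbation, \eqref{eq:heights} of Lemma~\ref{lem:heights}, then lets me replace the right-hand side by $d(\mathcal{W}_1^{(t)}({\bf x}), \mathcal{W}_2^{(t)}({\bf x}); {\bf x})$, which no longer depends on $\phi$.

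Third, I would reorganize this doubly-indexed family of constraints. The constraints are indexed by pairs $(t, \{\mathcal{G}, \mathcal{G}'\})$ with $\{\mathcal{G}, \mathcal{G}'\}$ a losing pair at step $t$; equivalently, by $\{\mathcal{G}, \mathcal{G}'\} \in \mathcal{L}({\bf x})$ together with steps $t$ in the lifetime $[l_{\mathcal{G},\mathcal{G}'}({\bf x}), u_{\mathcal{G},\mathcal{G}'}({\bf x})]$. Because $d(\mathcal{G}, \mathcal{G}'; {\bf x}'(\phi))$ depends only on the two clusters and on $\phi$, intersecting the constraints over $t$ in the lifetime collapses to a single inequality comparing $d(\mathcal{G}, \mathcal{G}'; {\bf x}'(\phi))$ against the maximum winning dissimilarity over that lifetime, giving exactly \eqref{eq:hierS}. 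The cardinality count then follows from the fact that hierarchical clustering creates at most $2n-1$ distinct clusters in total ($n$ singletons plus $n-1$ merges), so $|\mathcal{L}({\bf x})| \leq \binom{2n-1}{2} = \mathcal{O}(n^2)$.

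The main obstacle I expect is the careful bookkeeping in the second step. One must verify both directions of the equivalence: that strict inequalities against all losing pairs force the same winning pair at each step (invoking the uniqueness assumption in Algorithm~\ref{alg:hier}), and conversely that any failure of a strict inequality---a tie or reversal---would produce a different merge and thereby exclude $\phi$ from $\mathcal{\hat S}$. The induction structure is essential: the argmin comparison at step $t$ only makes sense once one knows that $\mathcal{C}^{(t)}({\bf x}'(\phi)) = \mathcal{C}^{(t)}({\bf x})$, so each step's conclusion must feed into the next step's hypothesis before Lemma~\ref{lem:heights} can be invoked to equate the winning dissimilarities.
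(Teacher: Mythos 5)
Your proposal is correct and follows essentially the same route as the paper's proof: convert \eqref{eq:merges} into per-step strict inequalities against losing pairs, use \eqref{eq:heights} to make the winning-pair dissimilarities $\phi$-free, and then reindex by losing pair and lifetime to collapse the over-$t$ constraints into the single max in \eqref{eq:hierS}. The only cosmetic difference is the cardinality bound, where you bound $|\mathcal{L}({\bf x})|$ by $\binom{2n-1}{2}$ via the total number of clusters ever created while the paper computes $|\mathcal{L}({\bf x})|$ exactly; both give $\mathcal{O}(n^2)$.
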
 
We prove Theorem \ref{thm:hierS} in Appendix \ref{sec:proofS}.
 Theorem \ref{thm:hierS} expresses $\mathcal{\hat S}$  in \eqref{eq:defS} as the intersection of $\mathcal{O}(n^2)$ sets of the form $ \{\phi \geq 0: d(\mathcal{G}, \mathcal{G}'; {\bf x}'(\phi)) > h_{\mathcal{G}, \mathcal{G}'}({\bf x}) \}$, where 
\begin{align} 
h_{\mathcal{G}, \mathcal{G}'} ({\bf x}) \equiv \underset{l_{\mathcal{G},\mathcal{G}'}({\bf x}) \leq t \leq u_{\mathcal{G},\mathcal{G}'}({\bf x})}{\max} d\left ( \mathcal{W}_1^{(t)}({\bf x}),\mathcal{W}_2^{(t)}({\bf x}); {\bf x} \right ) \label{eq:hab}
\end{align} 
is the maximum merge height in the dendrogram of ${\bf x}$ over the lifetime of $\{\mathcal{G}, \mathcal{G}'\}$. The next subsection is devoted to understanding when and how these sets can be efficiently computed. In particular, by specializing to squared Euclidean distance and a certain class of linkages, we will show that each of these sets is defined by a single quadratic inequality, and that the coefficients of all of these quadratic inequalities can be efficiently computed.

\subsection{Squared Euclidean distance and 	``linear update" linkages} 
\label{sec:lance}
Consider hierarchical clustering with squared Euclidean distance and a linkage that satisfies a linear Lance-Williams update \citep{lance1967general} of the form 
\begin{align} 
d(\mathcal{G}_1 \cup \mathcal{G}_2, \mathcal{G}_3; {\bf x})&= \alpha_1 d(\mathcal{G}_1, \mathcal{G}_3; {\bf x}) + \alpha_2 d(\mathcal{G}_2, \mathcal{G}_3; {\bf x}) + \beta d(\mathcal{G}_1, \mathcal{G}_2; {\bf x}). \label{eq:lance1}
\end{align}
This includes average, weighted, Ward, centroid, and median linkage (Table \ref{tab:lance}). 
\begin{table}[h!]
\begin{tabular}{l|lllllll}
                            & Average    & Weighted & Ward       & Centroid   & Median     & Single     & Complete   \\ \hline
Satisfies \eqref{eq:lance1} & \checkmark & \checkmark       & \checkmark & \checkmark & \checkmark & \xmark     & \xmark    \\ 
\hline
\begin{tabular}{@{}l@{}}Does not produce \\inversions \end{tabular}
   & \checkmark & \checkmark   & \checkmark  & \xmark   & \xmark      & \checkmark  & \checkmark  
\end{tabular}
\caption{\label{tab:lance} Properties of seven linkages in the case of squared Euclidean distance \citep{murtagh2012algorithms}. Table 1 of \citet{murtagh2012algorithms} specifies $\alpha_1, \alpha_2$, and $\beta$ in \eqref{eq:lance1}. }
\end{table}

We have seen in Section \ref{sec:key} that to evaluate \eqref{eq:hierS}, we must evaluate $\mathcal{O}(n^2)$ sets of the form $ \{\phi \geq 0: d(\mathcal{G}, \mathcal{G}'; {\bf x}'(\phi)) > h_{\mathcal{G}, \mathcal{G}'}({\bf x}) \}$ with $\{\mathcal{G}, \mathcal{G}' \} \in \mathcal{L}({\bf x})$, where $\mathcal{L}({\bf x})$ in \eqref{eq:losers} is the set of losing cluster pairs in ${\bf x}$. We now present results needed to characterize these sets. 
\begin{lemma} 
\label{prop:quad}
Suppose that we define $d(\{i\}, \{i'\}; {\bf x}) = \|x_i - x_{i'}\|_2^2$. Then, for all $i \neq i'$, $d(\{i\}, \{i'\}; {\bf x}'(\phi)) = a_{ii'} \phi^2 + b_{ii'}\phi  + c_{ii'}$, 
where for $\hat \nu$ defined in \eqref{eq:xphi}, $a_{ii'} = \left ( \frac{\hat \nu_i - \hat \nu_{i'}}{\|\hat \nu\|_2^2} \right )^2$,  \\ 
$b_{ii'} = 2 \left ( \left (  \frac{\hat \nu_i - \hat \nu_{i'}}{\|\hat \nu\|_2^2} \right )  \langle \text{dir}({\bf x}^T \hat \nu), x_i - x_{i'} \rangle - a_{ii'} \|{\bf x}^T \hat \nu\|_2\right ) $, and $c_{ii'} = \left \|x_i - x_{i'} - \left (  \frac{\hat \nu_i - \hat \nu_{i'}}{\|\hat \nu\|_2^2} \right ) ({\bf x}^T \hat \nu) \right \|_2^2$. 
\end{lemma}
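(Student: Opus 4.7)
The plan is to prove the lemma by direct algebraic expansion of the squared Euclidean distance. As a first simplification, I will rewrite the case-by-case formula \eqref{eq:xphirow} as a single uniform expression. Using the definition \eqref{eq:def-nu} of $\hat \nu = \nu(\hc_1,\hc_2)$, I observe that $\|\hat\nu\|_2^2 = 1/|\hc_1| + 1/|\hc_2|$ and that ${\bf x}^T \hat \nu = \bar{x}_{\hc_1} - \bar{x}_{\hc_2}$. A case check on $i\in\hc_1$, $i\in\hc_2$, $i\not\in\hc_1\cup\hc_2$ (using $\hat\nu_i/\|\hat\nu\|_2^2 = |\hc_2|/(|\hc_1|+|\hc_2|)$, $-|\hc_1|/(|\hc_1|+|\hc_2|)$, $0$ respectively) then shows that \eqref{eq:xphirow} can be uniformly written as
$$[{\bf x}'(\phi)]_i \;=\; x_i + \frac{\hat\nu_i}{\|\hat\nu\|_2^2}\bigl(\phi - \|{\bf x}^T\hat\nu\|_2\bigr)\,\mathrm{dir}({\bf x}^T\hat\nu)^T, \qquad i = 1,\ldots,n.$$
In particular, each row $[{\bf x}'(\phi)]_i$ is affine in $\phi$.

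Next, I will subtract rows and take squared norms. Introducing the shorthand $A = (\hat\nu_i - \hat\nu_{i'})/\|\hat\nu\|_2^2$, $r = \|{\bf x}^T\hat\nu\|_2$, $v = \mathrm{dir}({\bf x}^T\hat\nu)$ (so that $\|v\|_2 = 1$ and $rv = {\bf x}^T\hat\nu$), the uniform row formula gives
$$[{\bf x}'(\phi)]_i - [{\bf x}'(\phi)]_{i'} \;=\; (x_i - x_{i'}) + A(\phi - r)\,v^T.$$
Squaring and using $\|v\|_2^2 = 1$ yields
$$d(\{i\},\{i'\};{\bf x}'(\phi)) \;=\; A^2(\phi - r)^2 + 2A(\phi - r)\,\langle v,\, x_i - x_{i'}\rangle + \|x_i - x_{i'}\|_2^2,$$
which is manifestly a quadratic in $\phi$.

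To finish, I will collect powers of $\phi$. The coefficient of $\phi^2$ is $A^2$, matching $a_{ii'}$. The coefficient of $\phi$ is $2A\langle v, x_i - x_{i'}\rangle - 2A^2 r$, which, after substituting for $A$, $v$, $r$ and factoring out the $2$, matches the stated form of $b_{ii'}$. For the constant term, I get $A^2 r^2 - 2Ar\langle v, x_i - x_{i'}\rangle + \|x_i - x_{i'}\|_2^2$; recognizing that $rv = {\bf x}^T\hat\nu$, this regroups as the expansion of the squared norm $\|x_i - x_{i'} - A({\bf x}^T\hat\nu)\|_2^2$, matching $c_{ii'}$. There is no real obstacle here: the computation is routine, and the only mild bookkeeping steps are collapsing the three-case formula \eqref{eq:xphirow} into a single expression and recognizing the constant term as a perfect square (which simultaneously verifies the stated form of $c_{ii'}$ and its non-negativity).
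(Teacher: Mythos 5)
Your proof is correct: the uniform rewriting of \eqref{eq:xphirow} as $[{\bf x}'(\phi)]_i = x_i + \tfrac{\hat\nu_i}{\|\hat\nu\|_2^2}(\phi - \|{\bf x}^T\hat\nu\|_2)\,\mathrm{dir}({\bf x}^T\hat\nu)^T$ is right, and the expansion and coefficient-matching for $a_{ii'}$, $b_{ii'}$, and $c_{ii'}$ all check out. The paper gives no explicit proof, stating only that the lemma ``follows directly from the definition of ${\bf x}'(\phi)$ in \eqref{eq:xphi}''; your argument is exactly that direct computation, carried out in full.
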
 
Lemma \ref{prop:quad} follows directly from the definition of ${\bf x}'(\phi)$ in \eqref{eq:xphi}, and 
does not require \eqref{eq:lance1} to hold. Next, we specialize to squared Euclidean distance \emph{and} linkages satisfying \eqref{eq:lance1}, and characterize $d(\mathcal{G}, \mathcal{G}'; {\bf x}'(\phi))$, the dissimilarity between pairs of clusters in ${\bf x}'(\phi)$. The following result follows immediately from Lemma \ref{prop:quad} and the fact that linear combinations of quadratic functions of $\phi$ are also quadratic functions of $\phi$. 
\begin{prop} 
\label{prop:average}
Suppose we define $d(\{i\}, \{i'\}; {\bf x}) = \|x_i - x_{i'}\|_2^2$, and we define $d(\mathcal{G}, \mathcal{G}'; {\bf x})$ using a linkage that satisfies \eqref{eq:lance1}. Then, 
 $d(\mathcal{G}, \mathcal{G}'; {\bf x}'(\phi))$ is a quadratic function of $\phi$ for all $\mathcal{G} \neq \mathcal{G}'$. Furthermore, given the coefficients corresponding to the quadratic functions $d(\mathcal{G}_1, \mathcal{G}_3; {\bf x}'(\phi))$,  $d(\mathcal{G}_2, \mathcal{G}_3; {\bf x}'(\phi))$, and $d(\mathcal{G}_1, \mathcal{G}_2; {\bf x}'(\phi))$, we can compute the coefficients corresponding to the quadratic function $d(\mathcal{G}_1 \cup \mathcal{G}_2, \mathcal{G}_3; {\bf x}'(\phi))$ in $\mathcal{O}(1)$ time, using \eqref{eq:lance1}. 
\end{prop}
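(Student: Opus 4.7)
The plan is to prove both claims by strong induction on the total cluster size $|\mathcal{G}| + |\mathcal{G}'|$, using Lemma~\ref{prop:quad} as the base case and the Lance--Williams recursion \eqref{eq:lance1} to drive the inductive step. The computational claim will then fall out immediately from the linearity of \eqref{eq:lance1}.

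For the base case $|\mathcal{G}| + |\mathcal{G}'| = 2$, both clusters are singletons $\{i\}, \{i'\}$ with $i \neq i'$, and Lemma~\ref{prop:quad} directly gives $d(\{i\}, \{i'\}; {\bf x}'(\phi)) = a_{ii'}\phi^2 + b_{ii'}\phi + c_{ii'}$, which is quadratic in $\phi$. For the inductive step, assume $|\mathcal{G}| + |\mathcal{G}'| \geq 3$ and that the claim holds for all pairs of smaller total size. Without loss of generality assume $|\mathcal{G}| \geq 2$, so we may write $\mathcal{G} = \mathcal{G}_1 \cup \mathcal{G}_2$ with $\mathcal{G}_1, \mathcal{G}_2$ disjoint and nonempty. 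Applying \eqref{eq:lance1} with $\mathcal{G}_3 = \mathcal{G}'$ evaluated at ${\bf x}'(\phi)$ yields
\begin{equation*}
d(\mathcal{G}, \mathcal{G}'; {\bf x}'(\phi)) = \alpha_1\, d(\mathcal{G}_1, \mathcal{G}'; {\bf x}'(\phi)) + \alpha_2\, d(\mathcal{G}_2, \mathcal{G}'; {\bf x}'(\phi)) + \beta\, d(\mathcal{G}_1, \mathcal{G}_2; {\bf x}'(\phi)).
\end{equation*}
Each pair on the right has strictly smaller total size than $|\mathcal{G}| + |\mathcal{G}'|$: the first two because $|\mathcal{G}_k| \leq |\mathcal{G}| - 1$, and the third because $|\mathcal{G}_1| + |\mathcal{G}_2| = |\mathcal{G}| < |\mathcal{G}| + |\mathcal{G}'|$. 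By the inductive hypothesis each of the three terms is quadratic in $\phi$, and since the Lance--Williams constants $\alpha_1, \alpha_2, \beta$ do not depend on $\phi$, the right-hand side is a linear combination of quadratics in $\phi$, hence quadratic.

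For the $\mathcal{O}(1)$ update claim, given coefficient triples $(a_k, b_k, c_k)$ for the three input quadratics indexed by $k = 1, 2, 3$, the output coefficients are $(\alpha_1 a_1 + \alpha_2 a_2 + \beta a_3,\ \alpha_1 b_1 + \alpha_2 b_2 + \beta b_3,\ \alpha_1 c_1 + \alpha_2 c_2 + \beta c_3)$, which requires only a constant number of arithmetic operations.

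The only mild obstacle is setting up the induction carefully. Inducting on $|\mathcal{G}|$ alone would fail because the third term $\beta\, d(\mathcal{G}_1, \mathcal{G}_2; {\bf x}'(\phi))$ no longer involves $\mathcal{G}'$, so one needs a monovariant that decreases for all three terms simultaneously; the sum $|\mathcal{G}| + |\mathcal{G}'|$ does the job. Beyond this bookkeeping, no new computation is required: Lemma~\ref{prop:quad} supplies the quadratic form at the leaves, and \eqref{eq:lance1} propagates quadratic-ness (and the corresponding coefficients) up through the Lance--Williams tree.
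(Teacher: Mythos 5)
Your proof is correct and takes essentially the same route as the paper, which simply asserts that the result ``follows immediately from Lemma~\ref{prop:quad} and the fact that linear combinations of quadratic functions of $\phi$ are also quadratic functions of $\phi$''; your contribution is to make the implicit induction over the merge tree explicit, with $|\mathcal{G}|+|\mathcal{G}'|$ as the decreasing quantity. The only tiny refinement worth noting is that the decomposition $\mathcal{G}=\mathcal{G}_1\cup\mathcal{G}_2$ should be taken to be the one given by the merge history (since \eqref{eq:lance1} is the update rule along the dendrogram, not an identity for arbitrary partitions of $\mathcal{G}$), which is always available for any non-singleton cluster arising in Algorithm~\ref{alg:hier}.
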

Lastly, we characterize the cost of computing $h_{\mathcal{G}, \mathcal{G}'}({\bf x})$ in \eqref{eq:hab}. Naively, computing $h_{\mathcal{G}, \mathcal{G}'}({\bf x})$ could require $\mathcal{O}(n)$ operations. However, if the dendrogram of ${\bf x}$ has no inversions below the $(n-K)$th merge, i.e. if 
$d\left ( \mathcal{W}_1^{(t)}({\bf x}),\mathcal{W}_2^{(t)}({\bf x}); {\bf x} \right ) < d \left ( \mathcal{W}_1^{(t+1)}({\bf x}),\mathcal{W}_2^{(t+1)}({\bf x}); {\bf x} \right ) $ for all $t < n-K$, then
$ h_{\mathcal{G}, \mathcal{G}'}({\bf x}) = d \left ( \mathcal{W}_1^{ \left (u_{\mathcal{G}, \mathcal{G}'}({\bf x}) \right )}({\bf x}),\mathcal{W}_2^{ \left (u_{\mathcal{G}, \mathcal{G}'}({\bf x}) \right )}({\bf x})  ; {\bf x} \right )$. More generally, $h_{\mathcal{G}, \mathcal{G}'}({\bf x}) = \underset{t \in \mathcal{M}_{\mathcal{G}, \mathcal{G}'}({\bf x}) \cup \{u_{\mathcal{G}, \mathcal{G}'}({\bf x}) \}}{\max} d \left (\mathcal{W}_1^{(t)}({\bf x}),   \mathcal{W}_2^{(t)}({\bf x}); {\bf x} \right ), $
where  
$\mathcal{M}_{\mathcal{G}, \mathcal{G}'}({\bf x}) = \Big \{t : l_{\mathcal{G}, \mathcal{G}'}({\bf x}) \leq t < u_{\mathcal{G}, \mathcal{G}'}({\bf x}),  d\left ( \mathcal{W}_1^{(t)}({\bf x}),\mathcal{W}_2^{(t)}({\bf x}); {\bf x} \right ) > d \left ( \mathcal{W}_1^{(t+1)}({\bf x}),\mathcal{W}_2^{(t+1)}({\bf x}); {\bf x} \right )  \Big \}$  is the set of steps where inversions occur in the dendrogram of ${\bf x}$ during the lifetime of the cluster pair $\{\mathcal{G}, \mathcal{G}'\}$.  This leads to the following result. 
\begin{prop} 
\label{lem:computeH}
For any $\{\mathcal{G}, \mathcal{G}'\} \in \mathcal{L}({\bf x})$, given its lifetime $l_{\mathcal{G}, \mathcal{G}'}({\bf x})$ and $u_{\mathcal{G}, \mathcal{G}'}({\bf x})$, and given $\mathcal{M}({\bf x}) = \left \{1 \leq t \leq n-K : d\left ( \mathcal{W}_1^{(t)}({\bf x}),\mathcal{W}_2^{(t)}({\bf x}); {\bf x} \right ) < d \left ( \mathcal{W}_1^{(t+1)}({\bf x}),\mathcal{W}_2^{(t+1)}({\bf x}); {\bf x} \right )  \right \}$, i.e. the set of steps where inversions occur in the dendrogram of ${\bf x}$ below the $(n-K)$th merge, we can compute $h_{\mathcal{G}, \mathcal{G}'}({\bf x})$  in $\mathcal{O}(|\mathcal{M}({\bf x})|+1)$ time. 
\end{prop}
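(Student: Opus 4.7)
The plan is to leverage the identity
\[
h_{\mathcal{G}, \mathcal{G}'}({\bf x}) \;=\; \max_{t \,\in\, \mathcal{M}_{\mathcal{G}, \mathcal{G}'}({\bf x}) \,\cup\, \{u_{\mathcal{G}, \mathcal{G}'}({\bf x})\}}\, d\!\left(\mathcal{W}_1^{(t)}({\bf x}),\, \mathcal{W}_2^{(t)}({\bf x});\, {\bf x}\right),
\]
already stated in the paragraph immediately preceding the proposition, and to bound the cost of evaluating the right-hand side given the precomputed quantities in the hypothesis.

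First, I would give a one-line monotonicity justification of the identity for completeness: on any maximal run of consecutive indices $s$ in the lifetime $[l_{\mathcal{G},\mathcal{G}'}({\bf x}), u_{\mathcal{G},\mathcal{G}'}({\bf x})]$ along which the merge-height sequence $d(\mathcal{W}_1^{(s)}({\bf x}), \mathcal{W}_2^{(s)}({\bf x}); {\bf x})$ is non-decreasing, the maximum over the run is attained at its right endpoint. Any overall argmax over the lifetime is the right endpoint of such a run, and so must be either $u_{\mathcal{G}, \mathcal{G}'}({\bf x})$ itself or an index immediately preceding a strict drop in merge height, which is by definition an element of $\mathcal{M}_{\mathcal{G}, \mathcal{G}'}({\bf x})$. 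An inspection of the two definitions then yields the key containment $\mathcal{M}_{\mathcal{G}, \mathcal{G}'}({\bf x}) = \mathcal{M}({\bf x}) \cap [l_{\mathcal{G}, \mathcal{G}'}({\bf x}), u_{\mathcal{G}, \mathcal{G}'}({\bf x})-1]$: the two sets encode the same drop condition, and $\mathcal{M}_{\mathcal{G}, \mathcal{G}'}({\bf x})$ merely restricts it to the lifetime window.

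With these ingredients the algorithm is immediate. Initialize $h$ to $d(\mathcal{W}_1^{(u_{\mathcal{G}, \mathcal{G}'}({\bf x}))}({\bf x}), \mathcal{W}_2^{(u_{\mathcal{G}, \mathcal{G}'}({\bf x}))}({\bf x}); {\bf x})$; make one linear pass over $\mathcal{M}({\bf x})$; and for each $t$ falling in $[l_{\mathcal{G}, \mathcal{G}'}({\bf x}), u_{\mathcal{G}, \mathcal{G}'}({\bf x})-1]$, update $h \leftarrow \max\{h,\, d(\mathcal{W}_1^{(t)}({\bf x}), \mathcal{W}_2^{(t)}({\bf x}); {\bf x})\}$. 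Since the merge heights are cached during Algorithm~\ref{alg:hier}, each access and each comparison is $\mathcal{O}(1)$, so the total cost is $\mathcal{O}(1)$ of initialization plus $|\mathcal{M}({\bf x})|$ comparisons, matching the claimed $\mathcal{O}(|\mathcal{M}({\bf x})| + 1)$ bound. I expect the only mildly non-trivial step to be the monotonicity justification of the identity; once that is in place, everything else is routine bookkeeping.
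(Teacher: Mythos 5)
Your proposal is correct and follows essentially the same route as the paper's proof: both rely on the identity $h_{\mathcal{G},\mathcal{G}'}({\bf x}) = \max_{t \in \mathcal{M}_{\mathcal{G},\mathcal{G}'}({\bf x}) \cup \{u_{\mathcal{G},\mathcal{G}'}({\bf x})\}} d(\mathcal{W}_1^{(t)}({\bf x}), \mathcal{W}_2^{(t)}({\bf x}); {\bf x})$ and then filter $\mathcal{M}({\bf x})$ against the lifetime window in a single linear pass, giving the $\mathcal{O}(|\mathcal{M}({\bf x})|+1)$ bound. Your added monotonicity justification of the identity is a nice touch that the paper leaves implicit, and your treatment correctly uses the ``strict drop'' direction of the inversion condition, matching the appendix's definition of $\mathcal{M}({\bf x})$.
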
 
We prove Proposition \ref{lem:computeH} in Appendix \ref{sec:proof-computeH}.
 Proposition \ref{lem:computeH}  does not require defining $d(\{i\}, \{i'\}; {\bf x}) = \|x_i - x_{i'}\|_2^2$ and does not require \eqref{eq:lance1} to hold. We now characterize the cost of computing $\mathcal{\hat S}$ defined in \eqref{eq:defS}, in the case of  
squared Euclidean distance and linkages that satisfy \eqref{eq:lance1}. 
\begin{prop} 
\label{prop:lance} 
Suppose we define $d(\{i\}, \{i'\}; {\bf x}) = \|x_i - x_{i'}\|_2^2$, and we define $d(\mathcal{G}, \mathcal{G}'; {\bf x})$ using a linkage that satisfies \eqref{eq:lance1}. Then, we can compute $\mathcal{\hat S}$ defined in \eqref{eq:defS} in $\mathcal{O}\Big ((|\mathcal{M}({\bf x})| + \log(n)) n^2 \Big )$ time. 
\end{prop}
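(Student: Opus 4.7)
The plan is to combine Theorem \ref{thm:hierS} with the quadratic-in-$\phi$ structure from Proposition \ref{prop:average} and the cost bound from Proposition \ref{lem:computeH}, so that the overall computation of $\hc S$ reduces to: (i) a preprocessing phase that, via Lance--Williams recursion, produces the coefficients of the $\mathcal{O}(n^2)$ quadratics $\phi\mapsto d(\mathcal{G},\mathcal{G}';{\bf x}'(\phi))$ indexed by $\{\mathcal{G},\mathcal{G}'\}\in\mathcal{L}({\bf x})$; (ii) a per-pair phase that determines the threshold $h_{\mathcal{G},\mathcal{G}'}({\bf x})$ and the interval structure of $\{\phi\ge 0: d(\mathcal{G},\mathcal{G}';{\bf x}'(\phi))>h_{\mathcal{G},\mathcal{G}'}({\bf x})\}$; and (iii) an intersection phase that combines these $\mathcal{O}(n^2)$ sets.

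For phase (i), I first run hierarchical clustering once on ${\bf x}$ to record the sequence of merges, the birth/death step of each cluster, and (for $\mathcal{M}({\bf x})$) the places where inversions occur. Using Lemma \ref{prop:quad}, I precompute $a_{ii'}, b_{ii'}, c_{ii'}$ for all $\mathcal{O}(n^2)$ singleton pairs in $\mathcal{O}(n^2)$ time (after an $\mathcal{O}(nq)$ precomputation of ${\bf x}^T\hat\nu$). Then I walk through the merge sequence: when $\mathcal{G}_1$ and $\mathcal{G}_2$ merge at step $t$, Proposition \ref{prop:average} lets me produce the quadratic coefficients of $d(\mathcal{G}_1\cup\mathcal{G}_2,\mathcal{G}_3;{\bf x}'(\phi))$ in $\mathcal{O}(1)$ from three previously-stored quadratics for each of the $n-t-1$ surviving $\mathcal{G}_3$. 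Summed across merges this is $\mathcal{O}(n^2)$. At the same time I record the lifetimes $l_{\mathcal{G},\mathcal{G}'}({\bf x}), u_{\mathcal{G},\mathcal{G}'}({\bf x})$ for every losing pair in $\mathcal{O}(1)$ per pair.

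For phase (ii), for each of the $\mathcal{O}(n^2)$ losing pairs I invoke Proposition \ref{lem:computeH} to obtain $h_{\mathcal{G},\mathcal{G}'}({\bf x})$ in $\mathcal{O}(|\mathcal{M}({\bf x})|+1)$ time, and then, because the stored quadratic $a\phi^2+b\phi+c-h_{\mathcal{G},\mathcal{G}'}({\bf x})$ is available, I read off the at most two intervals on $[0,\infty)$ where it is strictly positive in $\mathcal{O}(1)$ time. The total cost of this phase is $\mathcal{O}(n^2(|\mathcal{M}({\bf x})|+1))$.

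For phase (iii), $\hc S$ is the intersection of $\mathcal{O}(n^2)$ sets, each of which is a union of at most two intervals on $[0,\infty)$; equivalently, the complement in $[0,\infty)$ of a union of $\mathcal{O}(n^2)$ bounded intervals. Collecting the $\mathcal{O}(n^2)$ endpoints, sorting them, and performing a linear sweep to merge overlapping forbidden intervals and take the complement costs $\mathcal{O}(n^2\log n)$. Combining all three phases yields the claimed $\mathcal{O}((|\mathcal{M}({\bf x})|+\log n)n^2)$ bound. The only delicate step is phase (i): I need to argue carefully that the Lance--Williams update in Proposition \ref{prop:average} can be iterated so that at the moment the pair $\{\mathcal{G},\mathcal{G}'\}$ enters $\mathcal{L}({\bf x})$ its quadratic coefficients are already stored, and that the bookkeeping of lifetimes and the inversion set $\mathcal{M}({\bf x})$ fit within the $\mathcal{O}(n^2)$ budget; this is the main place where care is required but presents no genuine obstacle, since each merge touches only the $n-t$ surviving clusters.
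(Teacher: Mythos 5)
Your proposal is correct and follows essentially the same route as the paper's algorithm in Appendix \ref{sec:lance-algo}: precompute the merge history, lifetimes, and inversion set; obtain all $\mathcal{O}(n^2)$ quadratics via Lemma \ref{prop:quad} and the Lance--Williams recursion of Proposition \ref{prop:average}; compute each threshold $h_{\mathcal{G},\mathcal{G}'}({\bf x})$ in $\mathcal{O}(|\mathcal{M}({\bf x})|+1)$ time via Proposition \ref{lem:computeH}; and intersect the resulting $\mathcal{O}(n^2)$ solution sets of quadratic inequalities in $\mathcal{O}(n^2\log n)$ time. The only cosmetic difference is that you separate the coefficient precomputation from the per-pair threshold computation, whereas the paper interleaves them by iterating over the sets $\mathcal{L}_t({\bf x})$; the complexity analysis is identical.
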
 
A detailed algorithm for computing $\mathcal{\hat S}$ is provided in Appendix \ref{sec:lance-algo}. If the linkage we use does not produce inversions, then $|\mathcal{M}({\bf x})| = 0$ for all ${\bf x}$. Average, weighted, and Ward linkage satisfy \eqref{eq:lance1} and are guaranteed not to produce inversions (Table \ref{tab:lance}), thus ${\mathcal{\hat S}}$ can be computed in $\mathcal{O}(n^2 \log(n))$ time.

\subsection{Squared Euclidean distance and single linkage} 
\label{sec:single}
Single linkage does not satisfy \eqref{eq:lance1}  (Table \ref{tab:lance}), and the inequality that defines the set $ \{\phi \geq 0: d(\mathcal{G}, \mathcal{G}'; {\bf x}'(\phi)) > h_{\mathcal{G}, \mathcal{G}'}({\bf x}) \}$ is not quadratic in $\phi$ for $|\mathcal{G}| > 1$ or $|\mathcal{G}'| > 1$. Consequently, in the case of single linkage with squared Euclidean distance, we cannot efficiently evaluate the expression of $\mathcal{\hat S}$  in \eqref{eq:hierS} using the approach outlined in Section \ref{sec:lance}. 

Fortunately, the definition of single linkage leads to an even simpler expression of $\mathcal{\hat S}$ than \eqref{eq:hierS}. Single linkage defines $d(\mathcal{G}, \mathcal{G}'; {\bf x}) = \underset{i \in \mathcal{G}, i' \in \mathcal{G}'}{\min}~ d(\{i\}, \{i'\}; {\bf x})$. Applying this definition to \eqref{eq:hierS} yields 
$\mathcal{\hat S} = \bigcap \limits_{\{\mathcal{G}, \mathcal{G}'\} \in \mathcal{L}({\bf x})} \bigcap \limits_{i \in \mathcal{G}} \bigcap \limits_{i' \in \mathcal{G}'} \left \{ \phi \geq 0 : d(\{i\}, \{i'\}; {\bf x}'(\phi)) >h_{\mathcal{G}, \mathcal{G}'}({\bf x}) \right \},$ where $\mathcal{L}({\bf x})$ in \eqref{eq:losers} is the set of losing cluster pairs in ${\bf x}$. Therefore, in the case of single linkage,
$ \mathcal{\hat S} = \bigcap \limits_{\{i, i'\} \in \mathcal{L}'({\bf x})} \mathcal{\hat S}_{i, i'},$  
where $\mathcal{L}'({\bf x}) = \{ \{i, i'\}: i \in \mathcal{G}, i' \in \mathcal{G}', \{\mathcal{G}, \mathcal{G}' \} \in \mathcal{L}({\bf x})\}$ and \\
$\mathcal{\hat S}_{i, i'} = \Big \{ \phi \geq 0: d(\{i\}, \{i'\}; {\bf x}'(\phi)) > \underset{\{\mathcal{G}, \mathcal{G}'\} \in \mathcal{L}({\bf x}): i \in \mathcal{G}, i' \in \mathcal{G}'}{\max}~ h_{\mathcal{G}, \mathcal{G}'}({\bf x}) \Big \}.$ The following result characterizes the sets of the form $\mathcal{\hat S}_{i, i'}$.
\begin{prop} 
\label{prop:single}
Suppose that $\mathcal{C} = \mathcal{C}^{(n-K+1)}$ , i.e. we perform hierarchical clustering to obtain $K$ clusters. Let $i \neq i'$. If $i, i' \in \hc_1$ or $i, i' \in \hc_2$ or $i, i' \not  \in \hc_1 \cup \hc_2$, then $\mathcal{\hat S}_{i, i'} = [0, \infty)$. Otherwise, 
$\mathcal{\hat S}_{i, i'} = \left \{ \phi \geq 0: d(\{i\}, \{i'\}; {\bf x}'(\phi)) > d\left (\mathcal{W}^{(n-K)}_1({\bf x}), \mathcal{W}^{(n-K)}_2({\bf x}); {\bf x} \right ) \right \}. $
\end{prop}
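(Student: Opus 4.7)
The plan is to exploit two specific properties of single linkage with squared Euclidean distance. First, single linkage produces no inversions (Table \ref{tab:lance}); combined with the unique-minimizer assumption in Algorithm \ref{alg:hier}, the merge heights $d(\mathcal{W}_1^{(t)}({\bf x}), \mathcal{W}_2^{(t)}({\bf x}); {\bf x})$ are \emph{strictly} increasing in $t$, so for any $\{\mathcal{G}, \mathcal{G}'\} \in \mathcal{L}({\bf x})$ the quantity $h_{\mathcal{G}, \mathcal{G}'}({\bf x})$ in \eqref{eq:hab} collapses to $d(\mathcal{W}_1^{(u)}({\bf x}), \mathcal{W}_2^{(u)}({\bf x}); {\bf x})$ with $u = u_{\mathcal{G}, \mathcal{G}'}({\bf x})$. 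Second, this merge height equals the minimum pairwise distance between observations in the two merging clusters. Abbreviating the maximum appearing in the definition of $\mathcal{\hat S}_{i, i'}$ by $h_{\max}$, strict monotonicity implies that $h_{\max}$ is the merge height at step $u_{\max} \equiv \max\{u_{\mathcal{G}, \mathcal{G}'}({\bf x}) : \{\mathcal{G}, \mathcal{G}'\} \in \mathcal{L}({\bf x}),\, i \in \mathcal{G},\, i' \in \mathcal{G}'\}$ (with the convention $h_{\max} = -\infty$ if this set is empty).

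For the ``otherwise'' case, in every subcase $i$ and $i'$ lie in two \emph{distinct} final clusters of $\mathcal{C}({\bf x}) = \mathcal{C}^{(n-K+1)}({\bf x})$. Taking $\mathcal{G}, \mathcal{G}'$ to be those two final clusters produces a losing pair with $u_{\mathcal{G}, \mathcal{G}'}({\bf x}) = n-K$, forcing $u_{\max} = n-K$ and $h_{\max} = d(\mathcal{W}_1^{(n-K)}({\bf x}), \mathcal{W}_2^{(n-K)}({\bf x}); {\bf x})$, which gives exactly the claimed form of $\mathcal{\hat S}_{i, i'}$.

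For the first case, I would first note, via the explicit formula \eqref{eq:xphirow}, that $d(\{i\}, \{i'\}; {\bf x}'(\phi)) = d(\{i\}, \{i'\}; {\bf x})$ for all $\phi \geq 0$: rows $i$ and $i'$ are shifted by the same vector when both lie in $\hc_1$ or both lie in $\hc_2$, and they are untouched when both lie outside $\hc_1 \cup \hc_2$. It then suffices to show $d(\{i\}, \{i'\}; {\bf x}) > h_{\max}$. If $i$ and $i'$ lie in a common final cluster, they first merge at some step $t^* \leq n-K$, every losing pair straddling them has upper lifetime strictly less than $t^*$, and strict monotonicity together with property (ii) yield $h_{\max} < d(\mathcal{W}_1^{(t^*)}({\bf x}), \mathcal{W}_2^{(t^*)}({\bf x}); {\bf x}) \leq d(\{i\}, \{i'\}; {\bf x})$. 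If instead $i$ and $i'$ lie in different final clusters (which, inside the first case, can only occur when $i, i' \not\in \hc_1 \cup \hc_2$), then $h_{\max} = d(\mathcal{W}_1^{(n-K)}({\bf x}), \mathcal{W}_2^{(n-K)}({\bf x}); {\bf x})$ but $i, i'$ are first merged at some step $t^* \geq n-K+1$, so strict monotonicity and (ii) again give $h_{\max} < d(\mathcal{W}_1^{(t^*)}({\bf x}), \mathcal{W}_2^{(t^*)}({\bf x}); {\bf x}) \leq d(\{i\}, \{i'\}; {\bf x})$.

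The main technical point I expect is justifying \emph{strict} (rather than merely non-decreasing) monotonicity of the single-linkage merge heights: non-decreasing is immediate from ``no inversions,'' but strictness relies on the unique-minimizer assumption to rule out a tie between the old winning distance and a candidate minimum at the next step. Once that is in hand, the remainder is bookkeeping over the subcases.
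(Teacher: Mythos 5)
Your proposal is correct, and your treatment of the ``otherwise'' case coincides with the paper's: both identify the two distinct clusters of $\mathcal{C}^{(n-K)}({\bf x})$ containing $i$ and $i'$ as a losing pair with upper lifetime $n-K$, and use the no-inversion property of single linkage to collapse the maximum of the $h_{\mathcal{G}, \mathcal{G}'}({\bf x})$'s to the $(n-K)$th merge height. Where you genuinely diverge is the first case. The paper works at the cluster level: it first shows (Lemma \ref{lem:single-preserve}) that any losing pair $\{\mathcal{G}, \mathcal{G}'\}$ with $i \in \mathcal{G}$, $i' \in \mathcal{G}'$ must have both clusters contained in the same one of $\hc_1$, $\hc_2$, or the complement, so that $d(\mathcal{G}, \mathcal{G}'; {\bf x}'(\phi)) = d(\mathcal{G}, \mathcal{G}'; {\bf x}) > h_{\mathcal{G}, \mathcal{G}'}({\bf x})$ follows directly from the defining property of losing pairs, and then passes to the singleton distance via $d(\{i\},\{i'\};\cdot) \geq d(\mathcal{G},\mathcal{G}';\cdot)$; no monotonicity of merge heights is needed there. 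You instead work only with the singleton-level invariance $d(\{i\},\{i'\};{\bf x}'(\phi)) = d(\{i\},\{i'\};{\bf x})$ and then lower-bound that constant by $h_{\max}$ using \emph{strict} monotonicity of single-linkage merge heights plus the min-over-pairs characterization of the merge height at the step uniting $i$ and $i'$. This is sound: strictness does follow from the unique-minimizer assumption exactly as you indicate, because every inter-cluster distance at step $t+1$ is a minimum of distances between non-winning pairs at step $t$. But it makes you prove a stronger monotonicity fact than the argument requires, and in the sub-case where $i, i' \notin \hc_1 \cup \hc_2$ lie in different final clusters it forces you to reason about merges beyond step $n-K$; a shortcut there is that the clusters of $\mathcal{C}^{(n-K)}({\bf x})$ containing $i$ and $i'$ form a non-winning pair at step $n-K$, so their single-linkage distance, and hence $d(\{i\},\{i'\};{\bf x})$, already exceeds the $(n-K)$th merge height by unique minimization. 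Same skeleton, different engine for the first case; the paper's version is a little leaner.
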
 
We prove Proposition \ref{prop:single} in Appendix \ref{sec:proof-single}.
Therefore, 
\begin{align} 
\mathcal{\hat S} &= \bigcap \limits_{\{i, i'\} \in \mathcal{L}'({\bf x})} \mathcal{\hat S}_{i, i'} \nonumber \\ 
&=   \bigcap \limits_{\{i, i'\} \in \mathcal{I}({\bf x})} \left \{ \phi \geq 0: ~ d(\{i\}, \{i'\}; {\bf x}'(\phi)) >  d\left (\mathcal{W}^{(n-K)}_1({\bf x}), \mathcal{W}^{(n-K)}_2({\bf x}); {\bf x} \right ) \right \}, \label{eq:Ssingle}
\end{align} 
where $\mathcal{I}({\bf x}) = \mathcal{L}'({\bf x}) \backslash \left [ \left \{\{i, i'\}: i, i' \in \hc_1 \right \} \cup \left \{\{i, i'\}: i, i' \in \hc_2 \right \} \cup \left \{\{i, i'\}: i, i' \not \in \hc_1 \cup \hc_2 \right \} \right ] $. Recall from Lemma \ref{prop:quad} that in the case of squared Euclidean distance, $d(\{i\}, \{i'\}; {\bf x}'(\phi))$ is a quadratic function of $\phi$ whose coefficients can be computed in $\mathcal{O}(1)$ time. Furthermore,  $\mathcal{O}(n^2)$ sets are intersected in \eqref{eq:Ssingle}. Therefore, we can evaluate \eqref{eq:Ssingle} in $\mathcal{O}(n^2\log (n))$ time by solving $\mathcal{O}(n^2)$ quadratic inequalities and intersecting their solution sets. 

\section{Extensions} 
\label{sec:practical}
\subsection{Monte Carlo approximation to  the p-value }
\label{sec:approx}
We may be interested in computing the p-value $p({\bf x}; \{\hc_1, \hc_2\})$ defined in \eqref{eq:pval-def} for clustering methods 
 where $\mathcal{\hat S}$ in \eqref{eq:defS} cannot be efficiently computed (e.g. complete linkage hierarchical clustering or non-hierarchical clustering methods). Thus, we develop a Monte Carlo approximation to the p-value that does not require us to compute $\mathcal{\hat S}$. Recalling from \eqref{eq:defS} that $\mathcal{\hat S} = \mathcal{S}({\bf x}; \{\hc_1, \hc_2\})$, it follows from Theorem \ref{thm:pval} that 
\begin{align} 
\Scale[0.95]{ p({\bf x};  \{\hc_1, \hc_2\} ) = \mathbb{E} \left [ \mathds{1} \left \{ \phi \geq  \|\bar{x}_{\hc_1} - \bar{x}_{\hc_2}\|_2, \hc_1, \hc_2 \in  \mathcal{C} ({\bf x}'(\phi)) \right \} \right ] / \mathbb{E} \left [  \mathds{1} \left \{ \hc_1, \hc_2 \in  \mathcal{C} ({\bf x}'(\phi)) \right \}\right ],} \label{eq:rat}
\end{align}
for $\phi \sim \sigma \left (\sqrt{\frac{1}{|\hc_1|} + \frac{1}{|\hc_2|}} \right )\cdot \chi_q$, and for ${\bf x}'(\phi)$ defined in \eqref{eq:xphi}. Thus, we could naively sample $\phi_1, \ldots, \phi_N \overset{i.i.d.}{\sim} \sigma \left (\sqrt{\frac{1}{|\hc_1|} + \frac{1}{|\hc_2|}} \right )\cdot \chi_q$, and approximate the expectations in \eqref{eq:rat} with averages over the samples.
However, when $\|\bar{x}_{\hc_1} - \bar{x}_{\hc_2}\|_2$ is in the tail of the $\sigma \left (\sqrt{\frac{1}{|\hc_1|} + \frac{1}{|\hc_2|}} \right )\cdot \chi_q$ distribution, the naive approximation of the expectations in \eqref{eq:rat} is poor for finite values of $N$. Instead, we use an importance sampling approach, as in \citet{yang2016selective}. We sample $\omega_1, \ldots, \omega_N \overset{i.i.d}{\sim} N \left (\|\bar{x}_{\hc_1} - \bar{x}_{\hc_2} \|_2, \sigma^2\left (\frac{1}{|\hc_1|} + \frac{1}{|\hc_2|} \right ) \right )$, and approximate \eqref{eq:rat} with 
\begin{align*} 
\Scale[0.95]{p({\bf x};  \{\hc_1, \hc_2\} ) \approx  \left ( \sum \limits_{i=1}^N \pi_i \mathds{1}\left \{ \omega_i \geq  \|\bar{x}_{\hc_1} - \bar{x}_{\hc_2}\|_2, \hc_1, \hc_2 \in  \mathcal{C} ({\bf x}'(\omega_i)) \right \} \right ) \big / \left (\sum \limits_{i=1}^N \pi_i \mathds{1} \left \{\hc_1, \hc_2 \in  \mathcal{C} ({\bf x}'(\omega_i)) \right \} \right )},
\end{align*} 
for $\pi_i = \frac{f_1(\omega_i)}{f_2(\omega_i)}$, where $f_1$ is the density of a $\sigma \left (\sqrt{\frac{1}{|\hc_1|} + \frac{1}{|\hc_2|}} \right )\cdot \chi_q$ random variable, and $f_2$ is the density of a $N \left (\|\bar{x}_{\hc_1} - \bar{x}_{\hc_2} \|_2, \sigma^2\left (\frac{1}{|\hc_1|} + \frac{1}{|\hc_2|} \right ) \right )$ random variable. 

\subsection{Non-spherical covariance matrix}
\label{sec:general-cov} 
The selective inference framework in Section \ref{sec:frame} assumes that ${\bf x}$ is a realization from \eqref{eq:mod}, so that $\text{Cov}(X_i) = \sigma^2 {\bf I}_q$. In this subsection, we instead assume that ${\bf x}$ is a realization from 
\begin{align} 
{\bf X} \sim \mathcal{MN}_{n \times q}(\bm \mu, {\bf I}_n,  \bm \Sigma) \label{eq:mod2}, 
\end{align} 
where $\bm \Sigma$ is a known $q \times q$ positive definite matrix. We define the p-value of interest as
\begin{align*} 
&\Scale[0.9]{p_\Sigma({\bf x}; \{\hc_1, \hc_2\})  = \mathbb{P}_{H_0^{\{\hc_1, \hc_2\}}} \Bigg ( \|\bm \Sigma^{-\frac{1}{2}}(\bar{X}_{\hc_1} - \bar{X}_{\hc_2})\|_2^2 \geq \|\bm \Sigma^{-\frac{1}{2}}(\bar{x}_{\hc_1} - \bar{x}_{\hc_2})\|_2^2~\Big |  ~ \hc_1, \hc_2 \in  \mathcal{C}({\bf X}), }  \\ 
&\hspace{46mm}\Scale[0.9]{\bm \pi_{\nu(\hc_1, \hc_2)}^\perp {\bf X}  =  \bm \pi_{\nu(\hc_1, \hc_2)}^\perp {\bf x}, ~ \text{dir}\left ( \bm \Sigma^{-1/2} (\bar{X}_{\hc_1} - \bar{X}_{\hc_2}) \right ) = \text{dir}\left ( \bm \Sigma^{-1/2} (\bar{x}_{\hc_1} - \bar{x}_{\hc_2}) \right )  \Bigg ). }
\end{align*} 
\begin{theorem} 
\label{thm:pval-general}
For any realization ${\bf x}$ from \eqref{eq:mod2}, and for any non-overlapping groups of observations $\mathcal{G}_1, \mathcal{G}_2 \subseteq \{1, 2, \ldots, n\}$, 
\begin{align} 
p_\Sigma({\bf x}; \{\mathcal{G}_1, \mathcal{G}_2\}) = 1 - \mathbb{F}\left (\|\bm \Sigma^{\scriptscriptstyle -\frac{1}{2}} {\bf x}^T  \nu(\mathcal{G}_1, \mathcal{G}_2)\|_2; \|\nu(\mathcal{G}_1, \mathcal{G}_2)\|_2, \mathcal{S}_\Sigma({\bf x}; \{\mathcal{G}_1, \mathcal{G}_2\}) \right ), \label{eq:pval-phi-general}
\end{align} 
where $\mathcal{S}_\Sigma({\bf x}; \{\mathcal{G}_1, \mathcal{G}_2\}) =  \left \{ \phi \geq 0:  \mathcal{G}_1, \mathcal{G}_2  \in \mathcal{C}\left ( \bm \pi^\perp_{\nu(\mathcal{G}_1, \mathcal{G}_2)} {\bf x} + \phi \left (\frac{\nu(\mathcal{G}_1, \mathcal{G}_2)}{\|\nu(\mathcal{G}_1, \mathcal{G}_2)\|_2^2} \right ) \text{dir}(  \bm \Sigma^{\scriptscriptstyle -\frac{1}{2}} {\bf x}^T  \nu(\mathcal{G}_1, \mathcal{G}_2))^T \bm \Sigma^{\scriptscriptstyle \frac{1}{2}} \right ) \right \}$.
 Furthermore, if $H_0^{\{\mathcal{G}_1, \mathcal{G}_2\}}$ is true, then 
 $$ \mathbb{P}_{H_0^{\{\mathcal{G}_1, \mathcal{G}_2\}}}(p_\Sigma({\bf X}; \{\mathcal{G}_1, \mathcal{G}_2\}) \leq \alpha \mid \mathcal{G}_1, \mathcal{G}_2 \in \mathcal{C}({\bf X})) = \alpha, \quad \text{for all } 0 \leq \alpha \leq 1.$$ 
 That is, rejecting $H_0^{\{\mathcal{G}_1, \mathcal{G}_2\}}$ whenever $p_\Sigma({\bf X}; \{\mathcal{G}_1, \mathcal{G}_2\})$ is below $\alpha$ controls the selective type I error rate (Definition \ref{def:selective}). 
\end{theorem}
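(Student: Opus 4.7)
The plan is to reduce Theorem~\ref{thm:pval-general} to Theorem~\ref{thm:pval} via a whitening transformation. First, I would define ${\bf Y} = {\bf X} \bm \Sigma^{-1/2}$, which under \eqref{eq:mod2} satisfies ${\bf Y} \sim \mathcal{MN}_{n \times q}(\bm \mu \bm \Sigma^{-1/2}, {\bf I}_n, {\bf I}_q)$; that is, ${\bf Y}$ follows the spherical model \eqref{eq:mod} with $\sigma = 1$ and mean matrix $\bm \mu \bm \Sigma^{-1/2}$. I would pair this with the whitened clustering rule $\mathcal{C}'({\bf y}) \equiv \mathcal{C}({\bf y} \bm \Sigma^{1/2})$, so that $\{\mathcal{G}_1, \mathcal{G}_2 \in \mathcal{C}({\bf X})\} = \{\mathcal{G}_1, \mathcal{G}_2 \in \mathcal{C}'({\bf Y})\}$, and the null $\bar \mu_{\mathcal{G}_1} = \bar \mu_{\mathcal{G}_2}$ is equivalent to $\bar \mu_{\mathcal{G}_1} \bm \Sigma^{-1/2} = \bar \mu_{\mathcal{G}_2} \bm \Sigma^{-1/2}$ by invertibility of $\bm \Sigma^{-1/2}$.

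Next I would translate each component of $p_\Sigma$ into ${\bf Y}$-coordinates. Since ${\bf y}^T \nu(\mathcal{G}_1, \mathcal{G}_2) = \bm \Sigma^{-1/2} {\bf x}^T \nu(\mathcal{G}_1, \mathcal{G}_2)$ and $({\bf x}^T \nu(\mathcal{G}_1, \mathcal{G}_2))_j = \bar x_{\mathcal{G}_1, j} - \bar x_{\mathcal{G}_2, j}$, the test statistic $\|\bm \Sigma^{-1/2}(\bar x_{\mathcal{G}_1} - \bar x_{\mathcal{G}_2})\|_2$ becomes $\|\bar y_{\mathcal{G}_1} - \bar y_{\mathcal{G}_2}\|_2$; the condition $\bm \pi^\perp_{\nu(\mathcal{G}_1,\mathcal{G}_2)} {\bf X} = \bm \pi^\perp_{\nu(\mathcal{G}_1,\mathcal{G}_2)} {\bf x}$ is equivalent to $\bm \pi^\perp_{\nu(\mathcal{G}_1,\mathcal{G}_2)} {\bf Y} = \bm \pi^\perp_{\nu(\mathcal{G}_1,\mathcal{G}_2)} {\bf y}$ (multiply through by $\bm \Sigma^{-1/2}$); and $\text{dir}\bigl(\bm \Sigma^{-1/2}(\bar X_{\mathcal{G}_1} - \bar X_{\mathcal{G}_2})\bigr) = \text{dir}\bigl(\bm \Sigma^{-1/2}(\bar x_{\mathcal{G}_1} - \bar x_{\mathcal{G}_2})\bigr)$ becomes exactly $\text{dir}(\bar Y_{\mathcal{G}_1} - \bar Y_{\mathcal{G}_2}) = \text{dir}(\bar y_{\mathcal{G}_1} - \bar y_{\mathcal{G}_2})$. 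Therefore $p_\Sigma({\bf x}; \{\mathcal{G}_1, \mathcal{G}_2\})$ coincides with the p-value assigned by Theorem~\ref{thm:pval} to ${\bf y}$ under the rule $\mathcal{C}'$ at noise level $\sigma=1$.

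Invoking Theorem~\ref{thm:pval} then yields \eqref{eq:pval-phi-general} after two identifications: (i) from \eqref{eq:def-nu}, $\|\nu(\mathcal{G}_1, \mathcal{G}_2)\|_2 = \sqrt{1/|\mathcal{G}_1| + 1/|\mathcal{G}_2|}$, so the scale parameter delivered by Theorem~\ref{thm:pval} (which equals $1 \cdot \sqrt{1/|\mathcal{G}_1| + 1/|\mathcal{G}_2|}$ here) matches the one in \eqref{eq:pval-phi-general}; and (ii) the truncation set. For (ii), writing $\nu = \nu(\mathcal{G}_1, \mathcal{G}_2)$ and ${\bf y}'(\phi) = \bm \pi^\perp_{\nu} {\bf y} + (\phi/\|\nu\|_2^2)\, \nu\, \text{dir}(\bar y_{\mathcal{G}_1} - \bar y_{\mathcal{G}_2})^T$, right-multiplication by $\bm \Sigma^{1/2}$ gives
$${\bf y}'(\phi)\, \bm \Sigma^{1/2} = \bm \pi^\perp_{\nu} {\bf x} + \phi \bigl(\nu / \|\nu\|_2^2\bigr) \text{dir}\bigl(\bm \Sigma^{-1/2} {\bf x}^T \nu\bigr)^T \bm \Sigma^{1/2},$$
which, via $\mathcal{C}'({\bf y}'(\phi)) = \mathcal{C}({\bf y}'(\phi)\, \bm \Sigma^{1/2})$, shows $\mathcal{S}({\bf y}; \{\mathcal{G}_1, \mathcal{G}_2\}) = \mathcal{S}_\Sigma({\bf x}; \{\mathcal{G}_1, \mathcal{G}_2\})$. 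The selective type I error guarantee then carries over verbatim from \eqref{eq:selective-ours}.

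The main obstacle I anticipate is purely bookkeeping in step (ii): carefully aligning the rank-one perturbation of ${\bf y}$ with the form of $\mathcal{S}_\Sigma$ stated in the theorem, tracking on which side the $\bm \Sigma^{\pm 1/2}$ factors act and whether $\bar x_{\mathcal{G}_1} - \bar x_{\mathcal{G}_2}$ is being treated as a row or as a column. Conceptually nothing is difficult once the whitening is in place, because ${\bf Y}$ and $\mathcal{C}'$ fall within the scope of Theorem~\ref{thm:pval} and every conditioning event transforms bijectively under the invertible map $\bm \Sigma^{-1/2}$.
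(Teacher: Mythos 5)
Your reduction is correct, and it takes a different route from the paper: the paper omits the proof of Theorem~\ref{thm:pval-general} entirely, indicating that one should re-run the argument of Theorem~\ref{thm:pval} (the decomposition \eqref{eq:decomp}, the two independence claims, and the truncated $\chi_q$ calculation) with $\bm\Sigma$ threaded through, whereas you obtain the result as a corollary of Theorem~\ref{thm:pval} via the whitening ${\bf Y}={\bf X}\bm\Sigma^{-1/2}$ and the induced clustering rule $\mathcal{C}'({\bf y})=\mathcal{C}({\bf y}\bm\Sigma^{1/2})$. Your route is legitimate because Theorem~\ref{thm:pval} is stated for an arbitrary clustering algorithm, so $\mathcal{C}'$ is within its scope, and because every object in the definition of $p_\Sigma$ transforms bijectively under right-multiplication by the invertible symmetric matrix $\bm\Sigma^{-1/2}$: the null (since $\bar\mu_{\mathcal{G}_1}=\bar\mu_{\mathcal{G}_2}$ iff $\bar\mu_{\mathcal{G}_1}\bm\Sigma^{-1/2}=\bar\mu_{\mathcal{G}_2}\bm\Sigma^{-1/2}$), the selection event, the test statistic $\|{\bf Y}^T\nu\|_2=\|\bm\Sigma^{-1/2}{\bf X}^T\nu\|_2$, the two extra conditioning events, and the truncation set (your right-multiplication of ${\bf y}'(\phi)$ by $\bm\Sigma^{1/2}$ recovers exactly the set $\mathcal{S}_\Sigma$ in the statement, with scale $\|\nu\|_2=\sqrt{1/|\mathcal{G}_1|+1/|\mathcal{G}_2|}$ since $\sigma=1$ after whitening). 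The only cosmetic discrepancy is that $p_\Sigma$ is stated with squared norms in the tail event, which defines the same event as the unsquared version. What the reduction buys is that no new probabilistic argument is needed; what the paper's parallel re-derivation buys is independence from the specific algebraic form of the transformation, but for a known positive definite $\bm\Sigma$ your argument is complete and arguably cleaner.
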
 
We omit the proof of Theorem \ref{thm:pval-general}, since it closely follows that of Theorem \ref{thm:pval}. In the case of hierachical clustering with squared Euclidean distance, we can adapt Sections \ref{sec:lance}--\ref{sec:single} to compute $\mathcal{S}_\Sigma({\bf x}; \{\hc_1, \hc_2\})$ by replacing $a_{ii'}$, $b_{ii'}$, and $c_{ii'}$ in Lemma \ref{prop:quad} with
$\tilde a_{ii'} = \left ( \frac{\hat \nu_i - \hat \nu_{i'}}{\|\hat \nu\|_2^2}\right )^2 \left ( \frac{\|{\bf x}^T \hat \nu\|_2}{\|\bm \Sigma^{-1/2} {\bf x}^T \hat \nu \|_2}\right )^2,$ 
$\tilde b_{ii'}= 2 \Big ( \Big ( \frac{\hat \nu_i - \hat \nu_{i'}}{\|\hat \nu\|_2^2} \Big ) \Big ( \frac{\|{\bf x}^T \hat \nu\|_2}{\|\bm \Sigma^{-1/2} {\bf x}^T \hat \nu \|_2}\Big  ) \langle \text{dir}({\bf x}^T \hat \nu), x_i - x_{i'} \rangle - \tilde a_{ii'} \|\bm \Sigma^{-1/2}{\bf x}^T \hat \nu\|_2\Big )$, and  $\tilde c_{ii'} = \left \|x_i - x_{i'} - \left ( \frac{\hat \nu_i - \hat \nu_{i'}}{\|\hat \nu\|_2^2}\right )({\bf x}^T \hat \nu) \right \|_2^2$.
 
 \subsection{Unknown variance} 
\label{sec:unknown} 
The selective inference framework in Section \ref{sec:frame} assumes that $\sigma$ in model \eqref{eq:mod} is known. If $\sigma$ is unknown, then we can plug an estimate of $\sigma$ into \eqref{eq:pval-phi}, as follows:
\begin{align} 
\hat p \left ({\bf x}; \{\hc_1, \hc_2\} \right ) = 1 - \mathbb{F} \left (\|\bar{x}_{\hc_1} - \bar{x}_{\hc_2}\|_2; \hat \sigma({\bf x}) \sqrt{\frac{1}{|\hc_1|} + \frac{1}{|\hc_2|}}, ~ \mathcal{\hat S} \right ).\label{eq:est-pval}
\end{align} 
Intuitively, if we plug in a consistent estimate of $\sigma$, we might expect to asymptotically control the selective type I error rate. For example, this is true in the context of selective inference for low-dimensional linear regression \citep{tian2017asymptotics}, where the error variance can be consistently estimated under mild assumptions. 

Unfortunately, consistently estimating $\sigma$ in \eqref{eq:mod} presents a major challenge. Similar issues arise when estimating the error variance in high-dimensional linear regression; see e.g.  \citet{reid2016study}. Thus, we adopt a similar approach to \citet{tibshirani2018uniform}, which studied the theoretical implications of plugging in a simple over-estimate of the error variance in the context of selective inference for high-dimensional linear regression. In the following result, we show that plugging in an asymptotic over-estimate of $\sigma$ in \eqref{eq:est-pval} leads to asymptotic control of the selective type I error rate (Definition \ref{def:selective}). 
\begin{theorem} 
\label{thm:plug-in-conservative} 
For $n = 1, 2, \ldots$, suppose that ${\bf X}^{(n)} \sim \mathcal{MN}_{n \times q}(\bm \mu^{(n)}, {\bf I}_n, \sigma^2 {\bf I}_q)$. Let ${\bf x}^{(n)}$ be a realization of ${\bf X}^{(n)}$, and $\hc_1^{(n)}$ and $\hc_2^{(n)}$ be a pair of clusters estimated from ${\bf x}^{(n)}$. Suppose that 
$\underset{n \rightarrow \infty}{\lim} ~ \mathbb{P}_{H_0^{ \left \{\hc_1^{(n)}, \hc_2^{(n)} \right \}}}\left ( \hat \sigma({\bf X}^{(n)}) \geq \sigma ~\Big |~ \hc_1^{(n)}, \hc_2^{(n)} \in \mathcal{C}({\bf X}^{(n)}) \right ) = 1$. Then, for any $\alpha \in [0, 1]$, we have that $ \underset{n \rightarrow \infty}{\lim} ~\mathbb{P}_{H_0^{ \left \{\hc_1^{(n)}, \hc_2^{(n)} \right \}}} \left (\hat p({\bf X}^{(n)}; \{\hc_1^{(n)}, \hc_2^{(n)}\}) \leq \alpha ~\Big |~ \hc_1^{(n)}, \hc_2^{(n)} \in \mathcal{C}({\bf X}^{(n)}) \right )  \leq  \alpha$. 
\end{theorem}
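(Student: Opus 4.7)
The plan is to compare the plug-in p-value $\hat p({\bf X}^{(n)};\{\hc_1^{(n)},\hc_2^{(n)}\})$ to the ``oracle'' p-value $p^*({\bf X}^{(n)};\{\hc_1^{(n)},\hc_2^{(n)}\})$ obtained by substituting the true $\sigma$ into \eqref{eq:pval-phi}, and to exploit two facts: (i) by Theorem \ref{thm:pval}, $p^*$ is exactly uniform on $(0,1)$ given $\hc_1^{(n)},\hc_2^{(n)}\in\mathcal{C}({\bf X}^{(n)})$ under the null, and (ii) $\hat p$ is a monotone function of $\hat\sigma({\bf X}^{(n)})$, so that $\hat\sigma\geq\sigma$ forces $\hat p\geq p^*$. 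Combining these via a simple union-bound decomposition, together with the assumed convergence $\mathbb{P}(\hat\sigma({\bf X}^{(n)})\geq\sigma\mid\hc_1^{(n)},\hc_2^{(n)}\in\mathcal{C}({\bf X}^{(n)}))\to 1$, will deliver the claim.

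First I would establish the monotonicity: for fixed $t\geq 0$ and an arbitrary measurable set $\mathcal{S}\subseteq[0,\infty)$, the map $c\mapsto 1-\mathbb{F}(t;c,\mathcal{S})$ is non-decreasing. The density of $c\cdot\chi_q$ is proportional to $c^{-q}y^{q-1}\exp(-y^2/(2c^2))\mathds{1}\{y\geq 0\}$, and for $c_1<c_2$ the likelihood ratio $f_{c_2}(y)/f_{c_1}(y)$ is proportional to $\exp\!\big(\tfrac{y^2}{2}(c_1^{-2}-c_2^{-2})\big)$, which is strictly increasing in $y\geq 0$. Thus the family has monotone likelihood ratio in $y$ with respect to $c$. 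Truncating both densities to $\mathcal{S}$ multiplies each by the same indicator and rescales by a constant, so the truncated family still has MLR on $\mathcal{S}$, hence is stochastically ordered in $c$. This yields $1-\mathbb{F}(t;c_2,\mathcal{S})\geq 1-\mathbb{F}(t;c_1,\mathcal{S})$ for $c_1\leq c_2$. Crucially, $\hat{\mathcal S}$ in \eqref{eq:defS} is a function of ${\bf x}$ alone and does not involve $\sigma$, so applying this with $c_1=\sigma\sqrt{1/|\hc_1^{(n)}|+1/|\hc_2^{(n)}|}$ and $c_2=\hat\sigma({\bf X}^{(n)})\sqrt{1/|\hc_1^{(n)}|+1/|\hc_2^{(n)}|}$ gives $\hat p({\bf X}^{(n)};\{\hc_1^{(n)},\hc_2^{(n)}\})\geq p^*({\bf X}^{(n)};\{\hc_1^{(n)},\hc_2^{(n)}\})$ on the event $\{\hat\sigma({\bf X}^{(n)})\geq\sigma\}$.

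Let $E^{(n)}=\{\hc_1^{(n)},\hc_2^{(n)}\in\mathcal{C}({\bf X}^{(n)})\}$ and $A^{(n)}=\{\hat\sigma({\bf X}^{(n)})\geq\sigma\}$. On $E^{(n)}\cap A^{(n)}$ the monotonicity gives $\{\hat p\leq\alpha\}\subseteq\{p^*\leq\alpha\}$, so
\begin{align*}
\mathbb{P}_{H_0}\!\left(\hat p\leq\alpha\mid E^{(n)}\right)
&\leq \mathbb{P}_{H_0}\!\left(\{p^*\leq\alpha\}\cup (A^{(n)})^c\mid E^{(n)}\right) \\
&\leq \mathbb{P}_{H_0}\!\left(p^*\leq\alpha\mid E^{(n)}\right)+\mathbb{P}_{H_0}\!\left((A^{(n)})^c\mid E^{(n)}\right).
\end{align*}
By Theorem \ref{thm:pval} the first term equals $\alpha$, and by hypothesis the second term tends to $0$ as $n\to\infty$. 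Taking limits therefore yields $\underset{n\to\infty}{\lim}\mathbb{P}_{H_0}(\hat p\leq\alpha\mid E^{(n)})\leq\alpha$, as required (interpreting the ``$\lim$'' in the theorem statement as a $\limsup$).

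The only nontrivial step is the monotonicity of $c\mapsto 1-\mathbb{F}(t;c,\mathcal{S})$, which is the main obstacle; once one recognizes the scaled $\chi_q$ family as MLR in $c$ and observes that MLR is preserved under truncation to a set not depending on $c$, the argument is routine. Everything else reduces to a union bound plus the exact finite-sample result of Theorem \ref{thm:pval}.
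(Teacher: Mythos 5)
Your proposal is correct and follows essentially the same route as the paper's proof: the paper's Lemma \ref{lem:monotone} establishes exactly the monotonicity of $c \mapsto 1-\mathbb{F}(t;c,\mathcal{S})$ via the monotone likelihood ratio property of the truncated scaled $\chi_q$ family, and the paper then decomposes $\mathbb{P}(\hat p_n \leq \alpha \mid A_n)$ over the event $\{\hat\sigma({\bf X}^{(n)}) \geq \sigma\}$ and its complement, bounding the first piece by $\alpha$ using \eqref{eq:selective-ours} and sending the second to zero by hypothesis. Your union-bound phrasing of that decomposition and your direct computation of the likelihood ratio are cosmetically different but mathematically identical to the paper's argument.
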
 
We prove Theorem \ref{thm:plug-in-conservative} in Appendix \ref{sec:proof-plug-in}. In Appendix \ref{sec:supp-unknown}, we provide an estimator of $\sigma$ that satisfies the conditions in Theorem \ref{thm:plug-in-conservative}.

\subsection{Consequences of selective type I error rate control} 
\label{sec:consequence}
This paper focuses on developing tests of $H_0^{\{\mathcal{G}_1, \mathcal{G}_2\}}: \bar{\mu}_{\mathcal{G}_1} = \bar{\mu}_{\mathcal{G}_2}$ that control the selective type I error rate (Definition \ref{def:selective}). However, it is cumbersome to demonstrate selective type I error rate control via simulation, as $\mathbb{P}(\mathcal{G}_1, \mathcal{G}_2 \in \mathcal{C}({\bf X}))$ can be small when $H_0^{\{\mathcal{G}_1, \mathcal{G}_2\}}$ holds. 

Nevertheless, two related properties can be demonstrated via simulation. Let $\mathcal{H}_0$ denote the set of null hypotheses of the form $H_0^{\{\mathcal{G}_1, \mathcal{G}_2\}}$ that are true. The following property holds. 
\begin{prop} 
\label{prop:random1}
When $K = 2$, i.e. the clustering algorithm $\mathcal{C}(\cdot)$ estimates two clusters, 
\begin{align} \mathbb{P}\left (p({\bf X}; \mathcal{C}({\bf X})) \leq \alpha ~\Big | ~ H_0^{\mathcal{C}({\bf X})} \in \mathcal{H}_0  \right ) = \alpha, ~ \text{ for all } 0 \leq \alpha \leq 1, \label{eq:random1}
\end{align} 
where $p(\cdot; \cdot)$ is  defined in \eqref{eq:pval-def}. That is, if the two estimated clusters have the same mean, then the probability of falsely declaring otherwise is equal to $\alpha$. 
\end{prop}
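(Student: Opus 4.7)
The plan is to reduce the marginal statement to a sum of selective-type I error statements, one per partition, and then invoke Theorem~\ref{thm:pval}. The key simplification is that when $K=2$, the random clustering $\mathcal{C}({\bf X})$ is itself a two-element partition of $\{1,\dots,n\}$, so the event $\mathcal{G}_1,\mathcal{G}_2 \in \mathcal{C}({\bf X})$ coincides with the event $\mathcal{C}({\bf X}) = \{\mathcal{G}_1,\mathcal{G}_2\}$; in particular, distinct pairs $\{\mathcal{G}_1,\mathcal{G}_2\}$ give disjoint events.

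First I would introduce the collection $\mathcal{P}$ of all two-element partitions $\{\mathcal{G}_1,\mathcal{G}_2\}$ of $\{1,\dots,n\}$ for which $\bar\mu_{\mathcal{G}_1} = \bar\mu_{\mathcal{G}_2}$, so that $\{H_0^{\mathcal{C}({\bf X})} \in \mathcal{H}_0\} = \bigsqcup_{\{\mathcal{G}_1,\mathcal{G}_2\} \in \mathcal{P}} \{\mathcal{C}({\bf X}) = \{\mathcal{G}_1,\mathcal{G}_2\}\}$, a \emph{disjoint} union. A similar disjoint decomposition holds for $\{p({\bf X};\mathcal{C}({\bf X})) \leq \alpha,\ H_0^{\mathcal{C}({\bf X})} \in \mathcal{H}_0\}$, with each summand having the form $\{p({\bf X};\{\mathcal{G}_1,\mathcal{G}_2\}) \leq \alpha,\ \mathcal{C}({\bf X}) = \{\mathcal{G}_1,\mathcal{G}_2\}\}$.

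Next I would rewrite $\mathcal{C}({\bf X}) = \{\mathcal{G}_1,\mathcal{G}_2\}$ as $\mathcal{G}_1,\mathcal{G}_2 \in \mathcal{C}({\bf X})$ using $K=2$, and apply Theorem~\ref{thm:pval}, equation \eqref{eq:selective-ours}, pair by pair. For each $\{\mathcal{G}_1,\mathcal{G}_2\} \in \mathcal{P}$ the null $H_0^{\{\mathcal{G}_1,\mathcal{G}_2\}}$ holds under the true mean matrix $\bm\mu$, so Theorem~\ref{thm:pval} yields
\[
\mathbb{P}\bigl(p({\bf X};\{\mathcal{G}_1,\mathcal{G}_2\}) \leq \alpha,\ \mathcal{G}_1,\mathcal{G}_2 \in \mathcal{C}({\bf X})\bigr) = \alpha \cdot \mathbb{P}(\mathcal{C}({\bf X}) = \{\mathcal{G}_1,\mathcal{G}_2\}).
\]
Summing over $\{\mathcal{G}_1,\mathcal{G}_2\} \in \mathcal{P}$ and dividing both sides by $\mathbb{P}(H_0^{\mathcal{C}({\bf X})} \in \mathcal{H}_0)$ (which I would note is positive whenever the conditional probability in \eqref{eq:random1} is defined) produces exactly the desired identity.

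The argument has no real technical obstacle; the only subtle point is being careful about what it means to condition on the random event $\{H_0^{\mathcal{C}({\bf X})} \in \mathcal{H}_0\}$ when the hypothesis itself is data-dependent. I would handle this by framing everything through the fixed partition $\mathcal{P}$ and using the disjointness afforded by $K=2$, so that the data-dependent hypothesis can be traded for a deterministic enumeration over pairs, each of which can be dispatched by Theorem~\ref{thm:pval}. The step that would not extend beyond $K=2$ is precisely the equivalence $\{\mathcal{G}_1,\mathcal{G}_2 \in \mathcal{C}({\bf X})\} = \{\mathcal{C}({\bf X}) = \{\mathcal{G}_1,\mathcal{G}_2\}\}$; when $K>2$, a single realization of ${\bf X}$ contributes to multiple pairs, the decomposition is no longer disjoint, and the clean equality in \eqref{eq:random1} would need to be replaced by an inequality or a multiple-testing correction.
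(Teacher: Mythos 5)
Your proposal is correct and follows essentially the same route as the paper's proof: decompose the joint event over the true null pairs (using $K=2$ so that $\{\mathcal{G}_1,\mathcal{G}_2 \in \mathcal{C}({\bf X})\} = \{\mathcal{C}({\bf X}) = \{\mathcal{G}_1,\mathcal{G}_2\}\}$ and the union is disjoint), apply \eqref{eq:selective-ours} from Theorem~\ref{thm:pval} to each summand, and sum back up before dividing by $\mathbb{P}(H_0^{\mathcal{C}({\bf X})} \in \mathcal{H}_0)$. Your closing remark about why the disjointness fails for $K>2$ correctly identifies the reason the paper needs the separate randomized-pair argument of Proposition~\ref{prop:random2}.
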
  
We prove Proposition \ref{prop:random1} in Appendix \ref{sec:proof-random1}.
What if $K > 2$? Then, given a data set ${\bf x}$, we can randomly select  (independently of ${\bf x}$) a single pair of estimated clusters $\mathcal{G}_1({\bf x}), \mathcal{G}_2({\bf x}) \in \mathcal{C}({\bf x})$. This leads to the following property.
\begin{prop} 
\label{prop:random2} 
Suppose that $K > 2$, i.e. the clustering algorithm $\mathcal{C}(\cdot)$ estimates three or more clusters, and let $\mathcal{G}_1({\bf x}), \mathcal{G}_2({\bf x}) \in \mathcal{C}({\bf x})$ denote a randomly selected pair. If $\{\mathcal{G}_1({\bf X}), \mathcal{G}_2({\bf X})\}$ and ${\bf X}$ are conditionally independent given $\mathcal{C}({\bf X})$, then for $p(\cdot; \cdot)$ defined in \eqref{eq:pval-def}, 
\begin{align} 
\mathbb{P}\left (p({\bf X};  \{\mathcal{G}_1({\bf X}), \mathcal{G}_2({\bf X}) \}) \leq \alpha  ~\Big | ~ H_0^{\left \{ \mathcal{G}_1({\bf X}), \mathcal{G}_2({\bf X}) \right \}}  \in \mathcal{H}_0 \right ) = \alpha, ~ \text{ for all } 0 \leq \alpha \leq 1. \label{eq:random2}
\end{align} 
\end{prop}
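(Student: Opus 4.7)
The plan is to decompose the probability in \eqref{eq:random2} as a sum over pairs of non-overlapping subsets $(G_1, G_2) \subseteq \{1, \ldots, n\}$ satisfying $\bar{\mu}_{G_1} = \bar{\mu}_{G_2}$, and then apply Theorem~\ref{thm:pval} to each summand. I will first write both the numerator $\mathbb{P}(p({\bf X}; \{\mathcal{G}_1, \mathcal{G}_2\}) \leq \alpha,\, H_0^{\{\mathcal{G}_1, \mathcal{G}_2\}} \in \mathcal{H}_0)$ and the denominator $\mathbb{P}(H_0^{\{\mathcal{G}_1, \mathcal{G}_2\}} \in \mathcal{H}_0)$ by partitioning on the value $\{G_1, G_2\}$ taken by the randomly selected pair; since the truth of $H_0^{\{G_1, G_2\}}$ is a deterministic property of $\bm \mu$, only pairs with $\bar{\mu}_{G_1} = \bar{\mu}_{G_2}$ contribute to either sum.

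Next, for each such fixed pair, I will use the conditional independence assumption to factor out the selection randomness. Let $q(\mathcal{C}; G_1, G_2) = \mathbb{P}(\{\mathcal{G}_1, \mathcal{G}_2\} = \{G_1, G_2\} \mid \mathcal{C}({\bf X}) = \mathcal{C})$; the assumption yields $\mathbb{P}(\{\mathcal{G}_1, \mathcal{G}_2\} = \{G_1, G_2\} \mid {\bf X}) = q(\mathcal{C}({\bf X}); G_1, G_2)$, which vanishes unless $G_1, G_2 \in \mathcal{C}({\bf X})$. Under the natural reading of ``randomly selected'' as uniform among the $\binom{K}{2}$ pairs in $\mathcal{C}({\bf X})$ (recall $K$ is deterministic), this reduces to $\binom{K}{2}^{-1} \mathds{1}\{G_1, G_2 \in \mathcal{C}({\bf X})\}$, and iterated expectation gives
\begin{align*}
\mathbb{P}(p({\bf X}; \{G_1, G_2\}) \leq \alpha,\, \{\mathcal{G}_1, \mathcal{G}_2\} = \{G_1, G_2\})
&= \binom{K}{2}^{-1}\mathbb{P}(p({\bf X}; \{G_1, G_2\}) \leq \alpha,\, G_1, G_2 \in \mathcal{C}({\bf X})) \\
&= \alpha \binom{K}{2}^{-1}\mathbb{P}(G_1, G_2 \in \mathcal{C}({\bf X})),
\end{align*}
where the last step invokes \eqref{eq:selective-ours} under $H_0^{\{G_1, G_2\}}$. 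The same factorization yields $\mathbb{P}(\{\mathcal{G}_1, \mathcal{G}_2\} = \{G_1, G_2\}) = \binom{K}{2}^{-1}\mathbb{P}(G_1, G_2 \in \mathcal{C}({\bf X}))$. Summing over qualifying pairs and taking the ratio, the common factors cancel to yield $\alpha$.

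The main obstacle will be justifying that $q(\mathcal{C}({\bf X}); G_1, G_2)$ is constant on the event $\{G_1, G_2 \in \mathcal{C}({\bf X})\}$. The abstract conditional independence assumption alone allows $q$ to vary with the other clusters appearing in $\mathcal{C}({\bf X})$, and in that generality the identity $\mathbb{E}[\mathds{1}\{p \leq \alpha\}\, q(\mathcal{C}({\bf X}))] = \alpha\, \mathbb{E}[q(\mathcal{C}({\bf X}))]$ does not follow directly from Theorem~\ref{thm:pval}, which only controls $\mathds{1}\{p \leq \alpha\}$ when averaged with the weight $\mathds{1}\{G_1, G_2 \in \mathcal{C}({\bf X})\}$ rather than against arbitrary $\mathcal{C}({\bf X})$-measurable weights. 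The requisite constancy is automatic once $|\mathcal{C}({\bf X})| = K$ is deterministic and the selection is uniform, which is the intended reading of ``randomly selected pair.''
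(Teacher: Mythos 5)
Your proof is correct and follows essentially the same route as the paper's: both decompose the joint probability as a sum over the true null pairs, pull the selection event out via the conditional-independence assumption, apply \eqref{eq:selective-ours} to each summand, and take the ratio so the common factors cancel. The one substantive difference is the obstacle you flag at the end, and you are right to flag it. The paper's proof passes from conditional independence given the random partition $\mathcal{C}({\bf X})$ directly to a product factorization conditional on the coarser event $\{\mathcal{G}_1, \mathcal{G}_2 \in \mathcal{C}({\bf X})\}$, which is not a valid implication for an arbitrary selection rule: writing $q(\mathcal{C}) = \mathbb{P}\left(\{\mathcal{G}_1({\bf X}), \mathcal{G}_2({\bf X})\} = \{\mathcal{G}_1, \mathcal{G}_2\} \mid \mathcal{C}({\bf X}) = \mathcal{C}\right)$, the identity actually needed is $\mathbb{E}\left[\mathds{1}\{p \leq \alpha\}\, q(\mathcal{C}({\bf X}))\right] = \alpha\, \mathbb{E}\left[q(\mathcal{C}({\bf X}))\right]$, and \eqref{eq:selective-ours} only delivers this when $q$ is constant on $\{\mathcal{G}_1, \mathcal{G}_2 \in \mathcal{C}({\bf X})\}$ (a deterministic selection rule that depends non-trivially on the \emph{other} clusters in the partition satisfies the stated hypothesis but breaks the factorization, since conditioning further on the identity of the remaining clusters re-truncates the $\chi_q$ variable to a subset of $\mathcal{\hat S}$). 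Your resolution --- reading ``randomly selected'' as uniform over the $\binom{K}{2}$ pairs, so that $q(\mathcal{C}) = \binom{K}{2}^{-1}\mathds{1}\{\mathcal{G}_1, \mathcal{G}_2 \in \mathcal{C}\}$ with $K$ deterministic --- is exactly what makes the step rigorous, and it matches how the selection is implemented in the paper's simulations. So your write-up is, if anything, more careful than the paper's at the only delicate point of the argument.
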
 
We prove Proposition \ref{prop:random2} in Appendix \ref{sec:proof-random1}.
Recall that in Figure \ref{fig:pval-naive}(c), we simulated data with $\bm \mu = \bm 0_{n \times q}$, so the conditioning event in \eqref{eq:random2} holds with probability 1. Thus, \eqref{eq:random2} specializes to $\mathbb{P}(p({\bf X};  \{\mathcal{G}_1({\bf X}), \mathcal{G}_2({\bf X}) \}) \leq \alpha) = \alpha$, i.e. $p({\bf X};  \{\mathcal{G}_1({\bf X}), \mathcal{G}_2({\bf X}) \}) \sim \text{Uniform}(0, 1)$. This property is illustrated in Figure \ref{fig:pval-naive}(c). 

\section{Simulation results}
\label{sec:sim}

Throughout this section, we use the efficient implementation of hierarchical clustering in the \texttt{fastcluster} package \citep{mullner2013fastcluster} in \texttt{R}. 

\subsection{Uniform p-values under a global null} 
\label{sec:type1}
We generate data from \eqref{eq:mod} with ${\bm \mu} = \bm 0_{n \times q}$, so that  $H_0^{\{\hc_1, \hc_2\}}$ holds for all pairs of estimated clusters.  We simulate 2000 data sets for $n = 150$, $\sigma \in \{1, 2, 10\}$, and $q \in \{2, 10, 100\}$. For each data set, we use average, centroid, single, and complete linkage hierarchical clustering to estimate three clusters, and then test $H_0^{\{\hc_1, \hc_2\}}$ for a randomly-chosen pair of clusters. We compute the p-value defined in \eqref{eq:pval-def} as described in Section \ref{sec:hier} for average, centroid, and single linkage. For complete linkage, we approximate the p-value as described in Section \ref{sec:approx} with $N = 2000$. Figure \ref{fig:type1} displays QQ plots of the empirical distribution of the p-values against the Uniform$(0, 1)$ distribution. The p-values have a Uniform$(0, 1)$ distribution because our proposed test satisfies \eqref{eq:random2} and because $\bm \mu =\bm 0_{n \times q}$; see the end of Section \ref{sec:consequence} for a detailed discussion. In Appendix \ref{sec:supp-sim}, we show that plugging in an over-estimate $\sigma$ as in \eqref{eq:est-pval} yields p-values that are stochastically larger than the Uniform$(0, 1)$ distribution. 

\begin{figure}[h!]
\centering
\includegraphics[scale=0.5]{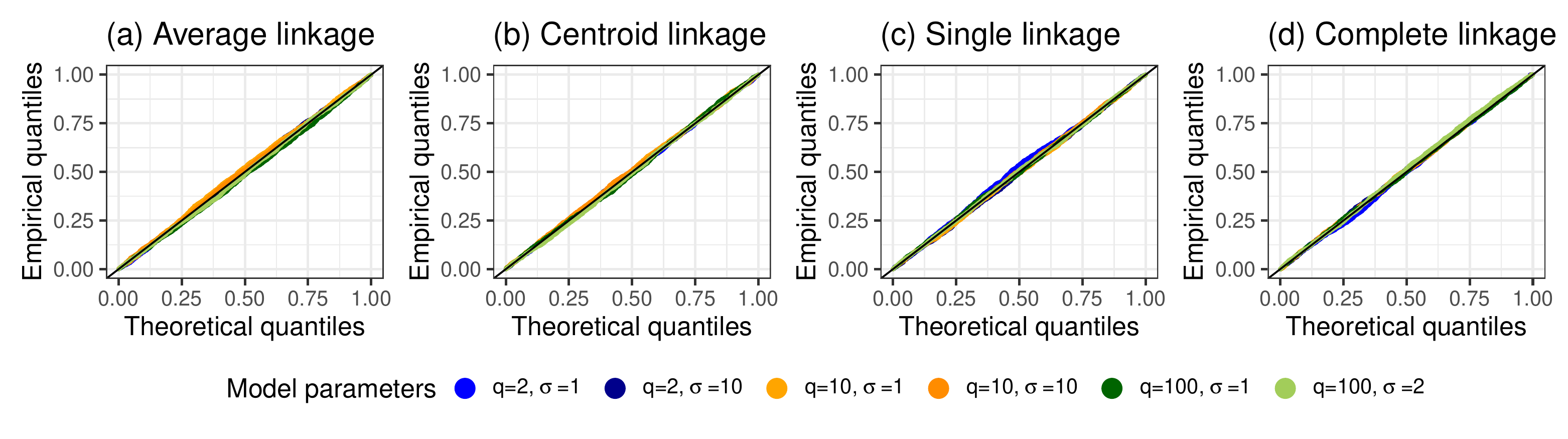}
\caption{\label{fig:type1} For 2000 draws from \eqref{eq:mod} with $\bm \mu = \bm 0_{n \times q}$, $n = 150$, $q \in \{2, 10, 100\}$, and $\sigma \in \{1,  2,10\}$, QQ-plots of the p-values obtained from the test proposed in Section \ref{sec:frame-test}, using (a) average linkage, (b) centroid linkage, (c) single linkage, and (d) complete linkage.}
\end{figure}

\subsection{Conditional power and recovery probability} 
\label{sec:power} 
\label{sec:cond} 
We generate data from \eqref{eq:mod} with $n = 30$, and three equidistant clusters,
\begin{align} 
\mu_1 = \cdots = \mu_{\frac{n}{3}} = \left [ \begin{smallmatrix} -\delta/2 \\ 0_{q-1} \end{smallmatrix} \right ] ,  \mu_{\frac{n}{3}+1} = \cdots = \mu_{\frac{2n}{3}} = \left [ \begin{smallmatrix} 0_{q-1} \\ \sqrt{3}\delta/2 \end{smallmatrix} \right ],   \mu_{\frac{2n}{3}+1} = \cdots = \mu_{n} =\left [ \begin{smallmatrix} \delta/2 \\ 0_{q-1} \end{smallmatrix} \right ], \label{eq:equi-means}
\end{align} 
for $\delta > 0$. For each simulated data set, we use average, centroid, single, and complete linkage hierarchical clustering to estimate three clusters, and then test $H_0^{\{\hc_1, \hc_2\}}$ for a randomly-chosen pair of clusters, with significance level $\alpha = 0.05$. We simulate $300,000$ data sets for $\sigma = 1$,  $q = 10$, and seven evenly-spaced values of $\delta \in [4, 7]$. {We define the \emph{conditional power} to be the conditional probability of rejecting $H_0^{\{\hc_1, \hc_2\}}$ for a randomly-chosen pair of estimated clusters, given that the randomly-chosen pair of estimated clusters correspond to two true clusters. Since this conditions on the event that the randomly-chosen pair of estimated clusters correspond to two true clusters, we are also interested in how often this event occurs. We therefore consider the \emph{recovery probability}, the probability that the randomly-chosen pair of estimated clusters correspond to two true clusters. Figure \ref{fig:cond} displays the conditional power and recovery probability as a function of the distance between the true clusters $(\delta$).

Figure \ref{fig:cond} displays conditional power and recovery probability as a function of the distance between the true clusters $(\delta$).
\begin{figure}[h!]
\centering 
\includegraphics[scale=0.4]{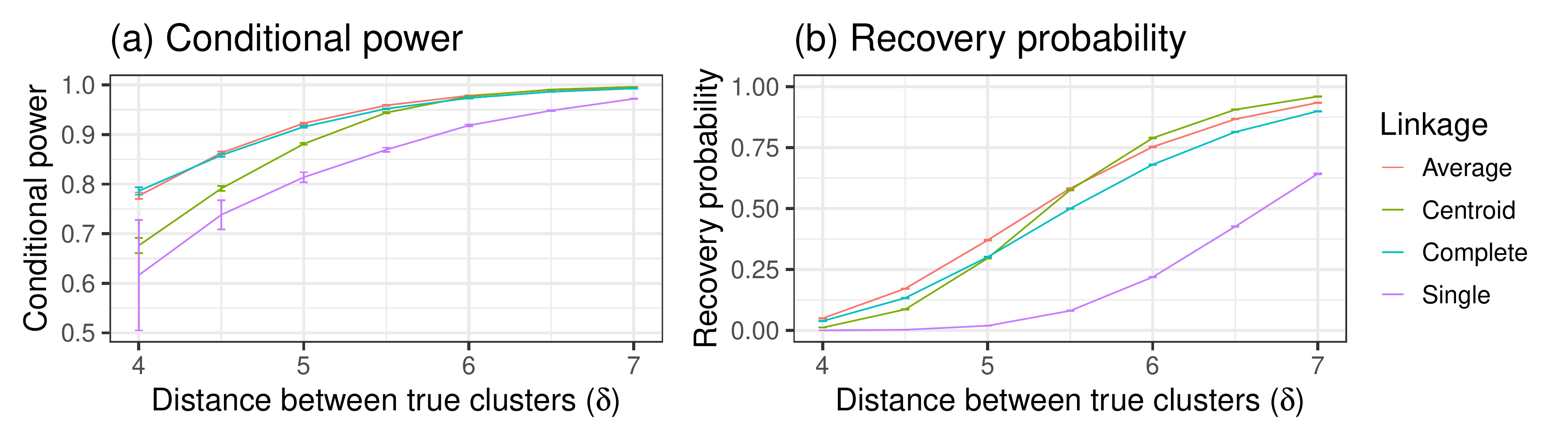}
\caption{\label{fig:cond} For the simulation study described in Section \ref{sec:cond}, (a) conditional power of the test proposed in Section \ref{sec:frame} versus the difference in means between the true clusters $(\delta$), and (b) recovery probability versus $\delta$. Conditional power and recovery probability are defined in Section \ref{sec:cond}.}
\end{figure} 
For all four linkages, the conditional power and recovery probability increase as the distance between the true clusters $(\delta)$ increases. Average and complete linkage have the highest conditional power, and single linkage has the lowest conditional power. Average, centroid, and complete linkage have substantially higher recovery probabilities than single linkage. 

We consider an alternative definition of power that does not condition on having correctly estimated the true clusters in Appendix \ref{sec:effect}. 

\section{Data applications}
\label{sec:app}
\subsection{Palmer penguins \citep{palmerpenguins}}
\label{sec:penguins}
In this section, we analyze the \verb+penguins+ data set from the \verb+palmerpenguins+ package in \verb+R+ \citep{palmerpenguins}. We estimate $\sigma$ with $\hat \sigma({\bf x}) = \sqrt{\sum \limits_{i=1}^n \sum \limits_{j=1}^q (x_i - \bar{x}_j)^2/(nq-q)}$ for $\bar{x}_j = \sum \limits_{i=1}^n x_{ij}/n$, calculated on 
the bill length and flipper length of 58 female penguins observed in the year 2009. We then consider the 107 female penguins observed in the years 2007--2008 with complete data on species, bill length, and flipper length. 
Figure \ref{fig:penguin-scatter}(a) displays the dendrogram obtained from applying average linkage hierarchical clustering with squared Euclidean distance to the penguins' bill length and flipper length, cut to yield five clusters, and Figure \ref{fig:penguin-scatter}(b) displays the data.

\begin{figure}[h!] 
\centering
\includegraphics[scale=0.45]{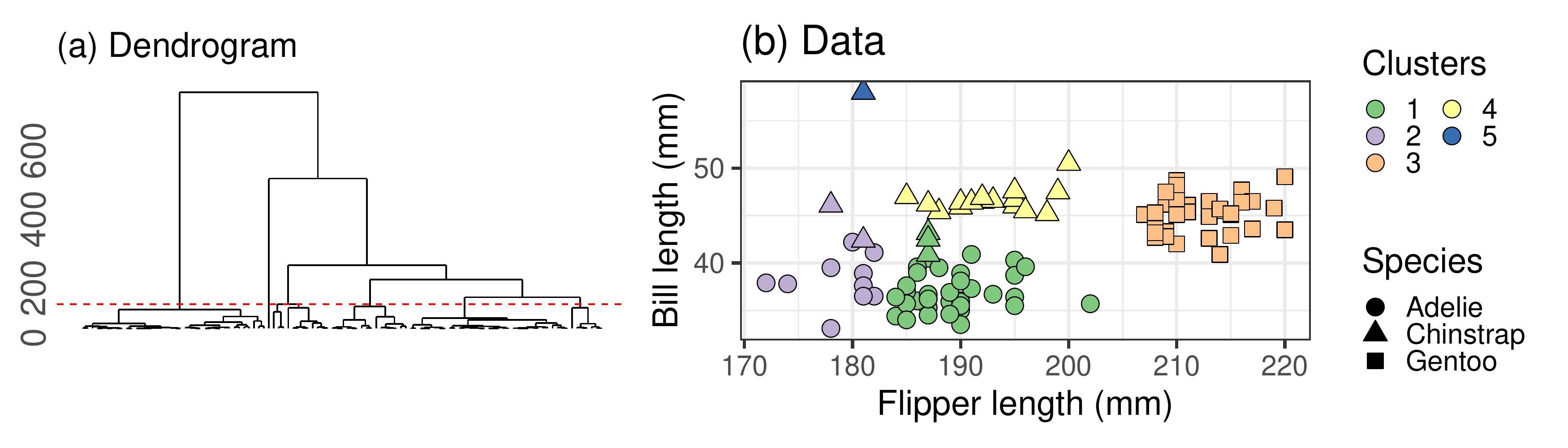}
\caption{\label{fig:penguin-scatter} (a) The average-linkage hierarchical clustering dendrogram and  (b) the bill lengths and flipper lengths, as well as the true species labels and estimated clusters, for the Palmer penguins data described in Section \ref{sec:penguins}. }
\end{figure} 

We  test $H_0^{\{\hc_1, \hc_2\}}$ for all pairs of clusters that contain more than one observation, using the test proposed in Section \ref{sec:frame-test}, and using the Wald test described in \eqref{eq:wald}. (The latter does not account for the fact that the clusters were estimated from the data, and does not control the selective type I error rate.)  Results are in Table \ref{tab:penguins}. The Wald p-values are small, even when testing for a difference in means between a single species (Clusters 1 and 2). Our proposed test yields a large p-value when testing for a difference in means between a single species (Clusters 1 and 2), and small p-values when the clusters correspond to different species (Clusters 1 and 3, and Clusters 3 and 4). The p-values from our proposed test are large for the remaining three pairs of clusters containing different species, even though visual inspection suggests a large difference between these two clusters. This is because the test statistic is close to the left boundary of $\mathcal{\hat S}$ defined in \eqref{eq:defS}, which leads to low power: see the discussion of Figure \ref{fig:boundary} in Appendix \ref{sec:effect}. 

\begin{table}[h!]
\centering
\begin{tabular}{lllllll}
Cluster pairs & $(1, 2)$ & $(1, 3)$ & ($1, 4)$ & $(2, 3$) & $(2, 4)$ & $(3, 4)$  \\  \hline
Test statistic        & $10.1$   & $25.0$   & $10.1$ & $33.8$ & $17.1$ & $18.9$  \\ 
Our p-value           & $0.591$    & $1.70 \times 10^{-14}$ &  $0.714$ & $0.070$ & $0.291$ & $2.10 \times 10^{-6} $      \\ 
Wald p-value          & $0.00383$  &  $< 10^{-307}$ & $0.00101$  &  $< 10^{-307}$ & $4.29 \times 10^{-5}$ & $1.58 \times 10^{-11}$\\ 
\end{tabular}
\caption{\label{tab:penguins} Results from applying the test of $H_0^{\{\hc_1, \hc_2\}}: \bar{\mu}_{\hc_k} = \bar{\mu}_{\hc_{k'}}$ proposed in Section \ref{sec:frame} and the Wald test defined in \eqref{eq:wald} to the Palmer penguins data set, displayed in Figure \ref{fig:penguin-scatter}(b). 
}
\end{table}

\subsection{Single-cell RNA sequencing data \citep{zheng2017massively}} 
\label{sec:single-cell}

Single-cell RNA sequencing data quantifies the gene expression levels of individual cells.
Biologists often cluster the  cells to identify putative cell types, and then test for differential gene expression between the clusters, without properly accounting for the fact that the clusters were estimated from the data \citep{luecken2019current, lahnemann2020eleven, deconinck2021recent}. \citet{zheng2017massively} classified peripheral blood mononuclear cells prior to sequencing. We will use this data set to demonstrate that testing for differential gene expression after clustering with our proposed selective inference framework yields reasonable results.
 
\subsubsection{Data and pre-processing} 
\label{sec:data} 
We subset the data to the memory T cells, B cells, and monocytes. In line with standard pre-processing protocols \citep{duo2018systematic}, we remove cells with a  high mitochondrial gene percentage, cells with a  low or  high number of expressed genes, and cells with a  low number of total counts. Then, we scale the data so that the total number of counts for each cell equals the average count across all cells. Finally, we apply a $\log_2$ transformation with a pseudo-count of 1, and subset to the 500 genes with the largest pre-normalization average expression levels. We separately apply this pre-processing routine to the memory T cells only, and to all of the cells. After pre-processing, we construct a ``no clusters" data set by randomly sampling 600 memory T cells, and  a ``clusters" data set by randomly sampling 200 each of memory T cells, B cells, and monocytes. 

\subsubsection{Data analysis} 
\label{sec:negative}
We apply Ward-linkage hierarchical clustering with squared Euclidean distance to the ``no clusters" data to get three clusters, containing $64$, $428$, and $108$ cells, respectively. For each pair of clusters, we test $H_0^{\{\hc_1, \hc_2\}}: \bar{\mu}_{\hc_1} = \bar{\mu}_{\hc_2}$ using (i) the test proposed in Section \ref{sec:general-cov} under model \eqref{eq:mod2} and (ii) using the Wald test under model \eqref{eq:mod2}, which has p-value 
\begin{align} 
\mathbb{P}_{H_0^{\{\hc_1, \hc_2\}}} \left ( (\bar{X}_{\hc_1} - \bar{X}_{\hc_2})^T \bm \Sigma^{-1}(\bar{X}_{\hc_1} - \bar{X}_{\hc_2})\geq (\bar{x}_{\hc_1} - \bar{x}_{\hc_2})^T \bm \Sigma^{-1}(\bar{x}_{\hc_1} - \bar{x}_{\hc_2}) \right ). \label{eq:wald2}
\end{align} 
For both tests, we estimate $\bm \Sigma$ by applying the principal complement thresholding  (``POET", \citealt{fan2013large}) method to the 
9,303 memory T cells left out of the ``no clusters" data set. Results are in Table \ref{tab:negative}. The p-values from our test are large, and the Wald p-values are small. Recall that all of the cells are memory T cells, and so (as far as we know) there are no true clusters in the data. 

\begin{table}[h!]
\centering
\begin{tabular}{l|lll|lll}
               & \multicolumn{3}{c|}{``No clusters"} & \multicolumn{3}{c}{``Clusters"l} \\ \cline{2-7} 
Cluster pairs  & (1, 2)     & (1, 3)     & (2, 3)      & (1, 2)     & (1, 3)     & (2, 3)     \\ \hline
Test statistic & 4.05       & 4.76       & 2.96        & 3.04       & 4.27       & 4.38       \\
Our p-value    & 0.20       & 0.27       & 0.70        & $4.60 \times 10^{-28} $    & $3.20 \times 10^{-82} $    & $1.13 \times 10^{-73} $    \\
Wald p-value   & $< 10^{-307}$          & $< 10^{-307}$          & $< 10^{-307}$    &  $< 10^{-307}$      &  $< 10^{-307}$      & $< 10^{-307}$ 
\end{tabular} 
\caption{\label{tab:negative} Results from applying the test of $H_0^{\{\hc_1, \hc_2\}}: \bar{\mu}_{\hc_1} = \bar{\mu}_{\hc_2}$ proposed in Section \ref{sec:general-cov} and the Wald test in \eqref{eq:wald2} to the ``no clusters" and ``clusters" data described in Section \ref{sec:data}. }
\end{table}

We now apply the same analysis to the ``clusters" data. Ward-linkage hierarchical clustering with squared Euclidean distance results in three clusters that almost exactly correspond to memory T cells, B cells, and monocytes. For both tests, we estimate $\bm \Sigma$ by applying the POET method to the 21,757 memory T cells, B cells, and monocytes left out of the ``clusters" data set. Results are in Table \ref{tab:negative}. The p-values from both tests are extremely small. This suggests that our proposed approach is able to correctly reject the null hypothesis when it does not hold.

\section{Discussion}
\label{sec:discuss}
In this paper, we proposed a selective inference framework for testing the null hypothesis that there is no difference in means between two estimated clusters, under \eqref{eq:mod}. This directly solves a problem that routinely occurs when biologists analyze data, e.g. when testing for differential expression between clusters estimated from single-cell RNA-sequencing data \citep{luecken2019current, lahnemann2020eleven, deconinck2021recent}. 

The framework proposed in Section \ref{sec:frame} assumed that $\sigma$ in \eqref{eq:mod} was known. Similarly, the extended framework in Section \ref{sec:general-cov} assumed that $\bm \Sigma$ in \eqref{eq:mod2} was known. These assumptions are unlikely to hold in practice. Thus, in the data applications of Sections \ref{sec:penguins}--\ref{sec:single-cell}, we replaced $\sigma$ in \eqref{eq:mod} and $\bm \Sigma$ in \eqref{eq:mod2} with estimates. In Section \ref{sec:unknown}, we explored the theoretical implications of this replacement, under model \eqref{eq:mod}. Future work could include investigating how best to estimate $\sigma$ in \eqref{eq:mod} and $\bm \Sigma$ in \eqref{eq:mod2}. Another potential avenue for future work would be to develop an analogue of Theorem \ref{thm:plug-in-conservative} in Section \ref{sec:unknown} under model \eqref{eq:mod2}.

The tests developed in this paper are implemented in the \verb+R+ package \verb+clusterpval+. Instructions on how to download and use this package can be found at\\  \verb+http://lucylgao.com/clusterpval+. Links to download the data sets in Section \ref{sec:app} can be found at \verb+https://github.com/lucylgao/clusterpval-experiments+, along with code to reproduce the simulation and real data analysis results from this paper.

\section*{Acknowledgments}
Lucy L. Gao was supported by the NSERC Discovery Grants program.  Daniela Witten and Jacob Bien were supported by NIH Grant R01-GM123993.  Jacob Bien was supported by NSF CAREER Award DMS-1653017. Daniela Witten was supported by NSF CAREER Award DMS-1252624, NIH Grants R01-EB026908 and R01-DA047869, and an Simons Investigator Award in Mathematical Modeling of Living Systems (No. 560585). 

\section*{Conflict of interest statement}
The authors have no relevant financial or non-financial competing interests to report.

\bibliography{refs}

\appendix

\numberwithin{equation}{section}
\numberwithin{lemma}{section}
\numberwithin{figure}{section}

\setcounter{equation}{0}

\section{Proofs} 

\subsection{Proof of Theorem \ref{thm:pval}}
\label{sec:proof-pval}
The proof of Theorem \ref{thm:pval} is similar to the proof of Theorem 3.1 in \citet{loftus2015selective}, the proof of Lemma 1 in \citet{yang2016selective}, and the proof of Theorem 3.1 in  \citet{chen2019valid}. 

For any $\nu \in \mathbb{R}^n$, we have 
\begin{align} 
{\bf X} = \bm \pi_{\nu}^\perp {\bf X} + ({\bf I}_n - \bm \pi_{\nu}^\perp) {\bf X} = \bm \pi_{\nu}^\perp {\bf X}  +  \left ( \frac{\| {\bf X}^T \nu \|_2}{\| \nu\|_2^2}\right ) \nu  \text{dir}({\bf X}^T  \nu)^T.
 \label{eq:decomp}
\end{align} 
Therefore,
\begin{align} 
{\bf X} &=  \bm \pi_{\nu(\mathcal{G}_1, \mathcal{G}_2)}^\perp {\bf X}  +  \left ( \frac{\| {\bf X}^T \nu(\mathcal{G}_1, \mathcal{G}_2)\|_2}{\| \nu(\mathcal{G}_1, \mathcal{G}_2)\|_2^2}\right ) \nu ~ \text{dir}({\bf X}^T \nu(\mathcal{G}_1, \mathcal{G}_2))^T \nonumber \\ 
&= \bm \pi_{\nu(\mathcal{G}_1, \mathcal{G}_2)}^\perp {\bf X} +  \left ( \frac{\|\bar{X}_{\mathcal{G}_1} - \bar{X}_{\mathcal{G}_2}\|_2}{\frac{1}{|\mathcal{G}_1|} + \frac{1}{|\mathcal{G}_2|}}  \right ) \nu(\mathcal{G}_1, \mathcal{G}_2) ~ \text{dir}(\bar{X}_{\mathcal{G}_1} - \bar{X}_{\mathcal{G}_2})^T, \label{eq:decomp2} 
\end{align} 
where the first equality follows from \eqref{eq:decomp}, and the second equality follows from the definition of $\nu(\mathcal{G}_1, \mathcal{G}_2)$ in \eqref{eq:def-nu}. Substituting \eqref{eq:decomp2} into the definition of $p({\bf x}; \{\mathcal{G}_1, \mathcal{G}_2\})$ given by \eqref{eq:pval-def} yields
\begin{align} 
& p({\bf x}; \{\mathcal{G}_1, \mathcal{G}_2\}) \label{eq:continue} \\ 
&= \mathbb{P}_{H_0^{\{\mathcal{G}_1, \mathcal{G}_2\}}} \Bigg ( \|\bar{X}_{\mathcal{G}_1} - \bar{X}_{\mathcal{G}_2}\|_2 \geq \|\bar{x}_{\mathcal{G}_1} - \bar{x}_{\mathcal{G}_2}\|_2 ~ \bigg | ~ \nonumber \\
&\hspace{25mm} \mathcal{G}_1, \mathcal{G}_2 \in \mathcal{C} \left (  \bm \pi_{\nu(\mathcal{G}_1, \mathcal{G}_2)}^\perp {\bf x} +  \left ( \frac{\|\bar{X}_{\mathcal{G}_1} - \bar{X}_{\mathcal{G}_2}\|_2}{\frac{1}{|\mathcal{G}_1|} + \frac{1}{|\mathcal{G}_2|}}  \right ) \nu(\mathcal{G}_1, \mathcal{G}_2) \text{dir}(\bar{x}_{\mathcal{G}_1} - \bar{x}_{\mathcal{G}_2})^T \right ), \nonumber \\ 
&\hspace{25mm} \bm \pi_{\nu(\mathcal{G}_1, \mathcal{G}_2)}^\perp {\bf X} = \bm \pi_{\nu(\mathcal{G}_1, \mathcal{G}_2)}^\perp {\bf x}, ~\text{dir}(\bar{X}_{\mathcal{G}_1} - \bar{X}_{\mathcal{G}_2}) = \text{dir}(\bar{x}_{\mathcal{G}_1} - \bar{x}_{\mathcal{G}_2}) \Bigg ). \nonumber
\end{align} 
To simplify \eqref{eq:continue}, we now show that 
\begin{align} 
\|\bar{X}_{\mathcal{G}_1} - \bar{X}_{\mathcal{G}_2}\|_2 ~~\indep ~~\bm \pi_{\nu(\mathcal{G}_1, \mathcal{G}_2)}^\perp {\bf X},\label{eq:ind1} 
\end{align} 
and that under $H_0^{\{\mathcal{G}_1, \mathcal{G}_2\}}: \bar{\mu}_{\mathcal{G}_1} = \bar{\mu}_{\mathcal{G}_2}$, 
\begin{align} 
\|\bar{X}_{\mathcal{G}_1} - \bar{X}_{\mathcal{G}_2}\|_2~~ \indep ~~ \text{dir}(\bar{X}_{\mathcal{G}_1} - \bar{X}_{\mathcal{G}_2}). \label{eq:ind2} 
\end{align} 
 First, we will show \eqref{eq:ind1}. Recall that $\bm \pi_{\nu}^\perp$ is the orthogonal projection matrix onto the subspace orthogonal to $\nu$. Thus, $\bm \pi_{\nu(\mathcal{G}_1, \mathcal{G}_2)}^\perp \nu(\mathcal{G}_1, \mathcal{G}_2) = 0_n$. It follows from properties of the matrix normal and multivariate normal distributions that $\bm \pi_{\nu(\mathcal{G}_1, \mathcal{G}_2)}^\perp {\bf X} ~~\indep~~ {\bf X}^T \nu(\mathcal{G}_1, \mathcal{G}_2)$. This implies \eqref{eq:ind1}, since $\bar{X}_{\mathcal{G}_1} - \bar{X}_{\mathcal{G}_2} = {\bf X}^T \nu(\mathcal{G}_1, \mathcal{G}_2)$. Second, we will show \eqref{eq:ind2}. It follows from \eqref{eq:mod} that ${\bf X}_i \overset{ind}{\sim} N_q(\mu_i, \sigma^2 {\bf I}_q)$. Thus, under $H_0^{\{\mathcal{G}_1, \mathcal{G}_2\}}: \bar{\mu}_{\mathcal{G}_1} = \bar{\mu}_{\mathcal{G}_2}$,   
\begin{align} \frac{\bar{X}_{\mathcal{G}_1} - \bar{X}_{\mathcal{G}_2}}{\sigma \sqrt{1/|\mathcal{G}_1| + 1/|\mathcal{G}_2|}} = \frac{{\bf X}^T \nu(\mathcal{G}_1, \mathcal{G}_2)}{\|\nu(\mathcal{G}_1, \mathcal{G}_2)\|_2} \sim N_q\left(0 , {\bf I}_q \right ), \label{eq:normal} 
\end{align} 
and \eqref{eq:ind2} follows from the independence of the length and direction of a standard multivariate normal random vector. 

We now apply \eqref{eq:ind1} and \eqref{eq:ind2} to \eqref{eq:continue}. This yields 
 \begin{align} 
& \Scale[0.9]{ p({\bf x}; \{\mathcal{G}_1, \mathcal{G}_2\}) }\nonumber \\ 
&=  \Scale[0.9]{\mathbb{P}_{H_0^{\{\mathcal{G}_1, \mathcal{G}_2\}}} \Bigg (\|\bar{X}_{\mathcal{G}_1} - \bar{X}_{\mathcal{G}_2}\|_2 \geq \|\bar{x}_{\mathcal{G}_1} - \bar{x}_{\mathcal{G}_2}\|_2   ~\bigg | ~ \mathcal{G}_1, \mathcal{G}_2 \in  \mathcal{C} \Bigg (  \bm \pi_{\nu(\mathcal{G}_1, \mathcal{G}_2)}^\perp {\bf x} +  \Bigg ( \frac{\|\bar{X}_{\mathcal{G}_1} - \bar{X}_{\mathcal{G}_2}\|_2}{\frac{1}{|\mathcal{G}_1|} + \frac{1}{|\mathcal{G}_2|}}  \Bigg ) \nu(\mathcal{G}_1, \mathcal{G}_2) \text{dir}(\bar{x}_{\mathcal{G}_1} - \bar{x}_{\mathcal{G}_2})^T \Bigg ) \Bigg ) } \nonumber \\ 
&= \Scale[0.9]{\mathbb{P}_{H_0^{\{\mathcal{G}_1, \mathcal{G}_2\}}} \left (\|\bar{X}_{\mathcal{G}_1} - \bar{X}_{\mathcal{G}_2}\|_2 \geq \|\bar{x}_{\mathcal{G}_1} - \bar{x}_{\mathcal{G}_2}\|_2   ~\bigg | ~ \|\bar{X}_{\mathcal{G}_1} - \bar{X}_{\mathcal{G}_2}\|_2 \in \mathcal{S}({\bf x}; \{\mathcal{G}_1, \mathcal{G}_2\}) \right )}, \label{eq:continue2} 
\end{align} 
where \eqref{eq:continue2} follows from the definition of $\mathcal{S}({\bf x}; \{\mathcal{G}_1, \mathcal{G}_2\})$ in \eqref{eq:defS-general}. Under $H_0^{\{\mathcal{G}_1, \mathcal{G}_2\}} : \bar{\mu}_{\mathcal{G}_1} = \bar{\mu}_{\mathcal{G}_2} $, 
\begin{align*} \frac{\| \bar{X}_{\mathcal{G}_1} - \bar{X}_{\mathcal{G}_2}\|_2^2}{\sigma^2 \left ( 1/|\mathcal{G}_1| + 1/|\mathcal{G}_2| \right )} \sim \chi^2_q, 
\end{align*} 
by \eqref{eq:normal}. 
That is, under $H_0^{\{\mathcal{G}_1, \mathcal{G}_2\}}$, we have $\|\bar{X}_{\mathcal{G}_1} - \bar{X}_{\mathcal{G}_2}\|_2 \sim \left ( \sigma  \sqrt{1/|\mathcal{G}_1| + 1/|\mathcal{G}_2|} \right ) \cdot \chi_q$. Therefore, for $\phi \sim \left ( \sigma  \sqrt{1/|\mathcal{G}_1| + 1/|\mathcal{G}_2|} \right ) \cdot \chi_q$, it follows from \eqref{eq:continue2} that 
\begin{align} 
p({\bf x}; \{\mathcal{G}_1, \mathcal{G}_2\}) &= \mathbb{P}\left (\phi \geq \|\bar{x}_{\mathcal{G}_1} - \bar{x}_{\mathcal{G}_2}\|_2  \mid \phi \in \mathcal{S}({\bf x}; \{\mathcal{G}_1, \mathcal{G}_2\} ) \right ) \nonumber  \\ 
&= 1 - \mathbb{F}\left (\|\bar{x}_{\mathcal{G}_1} - \bar{x}_{\mathcal{G}_2}\|_2; \sigma  \sqrt{1/|\mathcal{G}_1| + 1/|\mathcal{G}_2|},  \mathcal{S}({\bf x}; \{\mathcal{G}_1, \mathcal{G}_2\} )  \right ), \label{eq:thing}
\end{align} 
since $\mathbb{F}(t; c, \mathcal{S})$ denotes the cumulative distribution function of a $c \cdot \chi_q$ random variable truncated to the set $\mathcal{S}$. This is \eqref{eq:pval-phi}. 

Finally, we will show \eqref{eq:selective-ours}. It follows from the definition of $p(\cdot; \{\mathcal{G}_1, \mathcal{G}_2\})$ in \eqref{eq:pval-def} that for all ${\bf x} \in \mathbb{R}^{n \times q}$, we have 
\begin{align} 
&\mathbb{P}_{H_0^{\{\mathcal{G}_1, \mathcal{G}_2\}}} \Big( p({\bf X}; \{\mathcal{G}_1, \mathcal{G}_2\}) \leq \alpha \mid \mathcal{G}_1, \mathcal{G}_2 \in \mathcal{C}({\bf X}), \bm \pi_{\nu(\mathcal{G}_1, \mathcal{G}_2)}^\perp {\bf X} = \bm \pi_{\nu(\mathcal{G}_1, \mathcal{G}_2)}^\perp {\bf x}, \nonumber\\ 
&\hspace{56mm} \text{dir}(\bar{X}_{\mathcal{G}_1} - \bar{X}_{\mathcal{G}_2}) = \text{dir}(\bar{x}_{\mathcal{G}_1} - \bar{x}_{\mathcal{G}_2})  \Big ) = \alpha. \label{eq:tower-start} 
\end{align} 
Therefore, letting $\mathbb{E}_{H_0}$ denote $\mathbb{E}_{H_0^{\{\mathcal{G}_1, \mathcal{G}_2\}}}$, we have
\begin{align*} 
&\mathbb{P}_{H_0^{\{\mathcal{G}_1, \mathcal{G}_2\}}}  \Big ( p({\bf X}; \{\mathcal{G}_1, \mathcal{G}_2\}) \leq \alpha ~ \Big |~  \mathcal{G}_1, \mathcal{G}_2 \in \mathcal{C}({\bf X}) \Big ) \\ 
&= \mathbb{E}_{H_0} \Big [ \mathds{1} \{ p({\bf X}; \{\mathcal{G}_1, \mathcal{G}_2\}) \leq \alpha \} ~\Big | ~   \mathcal{G}_1, \mathcal{G}_2 \in \mathcal{C}({\bf X}) \Big ]  \\ 
&= \mathbb{E}_{H_0} \Bigg [ \mathbb{E}_{H_0} \Big [ \mathds{1} \{ p({\bf X}; \{\mathcal{G}_1, \mathcal{G}_2\}) \leq \alpha \} ~ \Big | ~  \mathcal{G}_1, \mathcal{G}_2 \in \mathcal{C}({\bf X}), \bm \pi_{\nu(\mathcal{G}_1, \mathcal{G}_2)}^\perp {\bf X}, \text{dir}(\bar{X}_{\mathcal{G}_1} - \bar{X}_{\mathcal{G}_2})  \Big ]  \bigg | ~ \mathcal{G}_1, \mathcal{G}_2 \in \mathcal{C}({\bf X})  \Bigg]  \\ 
&= \mathbb{E}_{H_0} \Big [\alpha ~ \Big | ~ \mathcal{G}_1, \mathcal{G}_2 \in \mathcal{C}({\bf X}) \Big ]  \\ 
&= \alpha, 
\end{align*} 
where the second equality follows from the law of total expectation, and the third equality follows from \eqref{eq:tower-start}. This is \eqref{eq:selective-ours}. 

\subsection{Proof of Lemma~\ref{lem:heights}}
\label{sec:proof-hier}
First, we state and prove a preliminary result involving the dissimilarities between groups of observations in any perturbed data set ${\bf x}'(\phi)$, which holds for any clustering method $\mathcal{C}$. 
\begin{lemma} 
\label{lem:dissimilar} 
For any $\phi \geq 0$, and for any two sets $\mathcal{G}$ and $\mathcal{G}'$ that are both contained in $\hc_1$, both contained in $\hc_2$, or both contained in $\{1, 2, \ldots, n \} \backslash [\hc_1 \cup \hc_2]$, we have that
\begin{gather*} 
d(\mathcal{G}, \mathcal{G}'; {\bf x}'(\phi)) = d(\mathcal{G}, \mathcal{G}'; {\bf x}),
\end{gather*} 
where ${\bf x}'(\phi)$ is defined in  \eqref{eq:xphi}. 
\end{lemma}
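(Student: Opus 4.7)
The plan is to observe that the perturbation ${\bf x}'(\phi) - {\bf x}$ is a piecewise-constant shift on the rows of ${\bf x}$. By the explicit formula \eqref{eq:xphirow}, every row indexed by $\hc_1$ is shifted by the common vector $\frac{|\hc_2|}{|\hc_1|+|\hc_2|}\bigl(\phi - \|\bar{x}_{\hc_1} - \bar{x}_{\hc_2}\|_2\bigr)\,\text{dir}(\bar{x}_{\hc_1} - \bar{x}_{\hc_2})$, every row indexed by $\hc_2$ is shifted by a single (different) vector, and every row outside $\hc_1 \cup \hc_2$ is unchanged. Consequently, for any two indices $i, i'$ lying in the same one of the three regions identified in the lemma statement, $[{\bf x}'(\phi)]_i - [{\bf x}'(\phi)]_{i'} = x_i - x_{i'}$. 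Combined with the paper's standing assumption (from Section~\ref{sec:review}) that $d(\{i\}, \{i'\}; {\bf x})$ depends on the data only through $x_i - x_{i'}$, this immediately gives $d(\{i\}, \{i'\}; {\bf x}'(\phi)) = d(\{i\}, \{i'\}; {\bf x})$ for every such pair of indices.

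To lift this from singletons to pairs of groups, I would argue that for each linkage considered, $d(\mathcal{G}, \mathcal{G}'; {\bf x})$ is a function only of the singleton dissimilarities $\{d(\{i\}, \{i'\}; {\bf x}) : i, i' \in \mathcal{G} \cup \mathcal{G}'\}$ together with the (data-independent) cluster cardinalities. For single and complete linkage this is immediate from the min/max definitions. For the linkages in Table~\ref{tab:lance} that satisfy the Lance-Williams recursion \eqref{eq:lance1} — average, weighted, Ward, centroid, and median — it follows by induction on the merging history used to build $\mathcal{G}$ and $\mathcal{G}'$, since the Lance-Williams coefficients $\alpha_1, \alpha_2, \beta$ depend only on cluster cardinalities. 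Under the lemma's hypothesis, the entire union $\mathcal{G} \cup \mathcal{G}'$ lies inside one of the three regions, so every singleton dissimilarity appearing in this functional representation is preserved under the passage from ${\bf x}$ to ${\bf x}'(\phi)$, and therefore so is $d(\mathcal{G}, \mathcal{G}'; \cdot)$.

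The only obstacle is the mild bookkeeping in the Lance-Williams induction: I need to know that at every intermediate merge the two sub-clusters being combined are themselves contained in a single region, so that the inductive hypothesis applies to them as well. This is automatic because any sub-cluster of $\mathcal{G}$ or $\mathcal{G}'$ is a subset of $\mathcal{G} \cup \mathcal{G}'$, and $\mathcal{G} \cup \mathcal{G}'$ is contained in one of the three regions by assumption. With this confirmed, the full proof reduces to the singleton translation-invariance observation of the first paragraph.
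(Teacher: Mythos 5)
Your proof is correct and follows essentially the same route as the paper's: the key observation in both is that the rows of ${\bf x}'(\phi)$ within each of the three regions are shifted by a common vector, so all pairwise differences $x_i - x_{i'}$ with $i, i' \in \mathcal{G} \cup \mathcal{G}'$ are preserved, and hence so are the singleton dissimilarities. The paper compresses the lifting from singletons to group dissimilarities into the phrase ``the result follows,'' whereas you spell out the (correct) induction through the Lance--Williams recursion and the min/max definitions; this is a faithful expansion rather than a different argument.
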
  
\begin{proof}
By \eqref{eq:xphi}, $[{\bf x}'(\phi)]_i - [{\bf x}'(\phi)]_{i'} = x_i - x_{i'}$ for all $i, i' \in \mathcal{G} \cup \mathcal{G}'$, and therefore all pairwise distances are preserved for all $i, i' \in \mathcal{G} \cup \mathcal{G}'$. The resut follows. 
\end{proof}

Suppose that $\mathcal{C} = \mathcal{C}^{(n-K+1)}$. Then, \eqref{eq:heights} follows immediately from observing that $\mathcal{W}_1^{(t)}({\bf x})$ and $\mathcal{W}_2^{(t)}({\bf x})$ are clusters merged at an earlier stage of the hierarchical clustering algorithm and thus must satisfy the condition in Lemma \ref{lem:dissimilar}. 

We will now show \eqref{eq:merges}. Applying the right-hand side of \eqref{eq:merges} with $t = n-K+1$ yields the left-hand side, so the $(\Leftarrow)$ direction holds. We will now prove the $(\Rightarrow)$ direction by contradiction. Suppose that there exists some $t \in \{1, \ldots, n-K-1\}$ such that 
\begin{gather} 
\mathcal{C}^{(t+1)}({\bf x}'(\phi)) \neq \mathcal{C}^{(t +1)}({\bf x}), \label{eq:induct-assume} 
\end{gather}
and let $t$ specifically denote the smallest such value. Then, $\{ \mathcal{W}_1^{(t)}({\bf x}'(\phi)), \mathcal{W}_2^{(t)}({\bf x}'(\phi)) \} \neq \{ \mathcal{W}_1^{(t)}({\bf x}), \mathcal{W}_2^{(t)}({\bf x}) \} $, which implies that  
\begin{align} 
d\left (\mathcal{W}_1^{(t)}({\bf x}), \mathcal{W}_2^{(t)}({\bf x} ); {\bf x}'(\phi) \right ) > d\left (\mathcal{W}_1^{(t)}({\bf x}'(\phi)), \mathcal{W}_2^{(t)}({\bf x}'(\phi)); {\bf x}'(\phi) \right ). \label{eq:contra} 
\end{align} 
Since clusters cannot unmerge once they have merged, and $\hc_1, \hc_2 \in \mathcal{C}^{(n-K+1)}({\bf x})$ by definition, it must be the case that $\mathcal{W}_1^{(t)}({\bf x})$ and $\mathcal{W}_2^{(t)}({\bf x})$ are both in $\hc_1$, are both in $\hc_2$, or are both in $[\hc_1 \cup \hc_2]^C$. Similarly, since we assumed that $\hc_1, \hc_2 \in \mathcal{C}^{(n-K+1)}({\bf x}'(\phi))$, it must be the case that $\mathcal{W}_1^{(t)}({\bf x}'(\phi))$ and $\mathcal{W}_2^{(t)}({\bf x}'(\phi))$ are  both in $\hc_1$, are both in $\hc_2$, or are both in $[\hc_1 \cup \hc_2]^C$. Thus, we can apply Lemma \ref{lem:dissimilar} to both sides of \eqref{eq:contra} to yield 
\begin{align} 
d\left (\mathcal{W}_1^{(t)}({\bf x}), \mathcal{W}_2^{(t)}({\bf x} ); {\bf x}\right ) > d\left (\mathcal{W}_1^{(t)}({\bf x}'(\phi)), \mathcal{W}_2^{(t)}({\bf x}'(\phi)); {\bf x} \right ), 
\end{align} 
which implies that $\left \{ \mathcal{W}_1^{(t)}({\bf x}), \mathcal{W}_2^{(t)}({\bf x}) \right \}$ is not the pair of clusters that merged in the $(t)^{th}$ step of the hierarchical clustering procedure applied to ${\bf x}$. This is a contradiction. Therefore, for all $t \in \{1, \ldots, n-K-1\}$ such that $\mathcal{C}^{(t)}({\bf x}'(\phi)) = \mathcal{C}^{(t)}({\bf x})$, it must be the case that $\mathcal{C}^{(t+1)}({\bf x}'(\phi)) = \mathcal{C}^{(t+1)}({\bf x})$. Observing that $\mathcal{C}^{(1)}({\bf x}'(\phi)) = \mathcal{C}^{(1)}({\bf x}) = \{\{1\}, \{2\}, \ldots, \{n\} \}$ for all $\phi \geq 0$ completes the proof of the  $(\Rightarrow)$ direction.

\subsection{Proof of Theorem \ref{thm:hierS}}
\label{sec:proofS}
Observe from Algorithm \ref{alg:hier} that 
 \begin{align} \mathcal{C}^{(t)}({\bf x}'(\phi)) = \mathcal{C}^{(t)}({\bf x}) \text{ for all } t = 1, 2, \ldots, n-K+1, \label{eq:heights1} 
 \end{align} 
if and only if for all $ t = 1, 2, \ldots, n-K$, 
\begin{gather} 
d(\mathcal{G},\mathcal{G}'; {\bf x}'(\phi)) > d \left (\mathcal{W}_1^{(t)}({\bf x}), \mathcal{W}_2^{(t)}({\bf x}); {\bf x}'(\phi) \right ), \label{eq:merges1} \\ 
\forall ~ \mathcal{G}, \mathcal{G}' \in \mathcal{C}^{(t)}({\bf x}), ~ \mathcal{G} \neq \mathcal{G'}, ~ \{\mathcal{G}, \mathcal{G}'\} \neq \left  \{\mathcal{W}_1^{(t)}({\bf x}), \mathcal{W}_2^{(t)}({\bf x}) \right \}.  \nonumber 
\end{gather}
Equation \eqref{eq:heights} in Lemma \ref{lem:heights} says that 
\begin{gather*} 
d \left (\mathcal{W}_1^{(t)}({\bf x}), \mathcal{W}_2^{(t)}({\bf x}); {\bf x}'(\phi) \right ) = d \left (\mathcal{W}_1^{(t)}({\bf x}), \mathcal{W}_2^{(t)}({\bf x}); {\bf x} \right ), ~\forall~ t = 1, 2, \ldots, n-K, ~ \forall ~\phi \geq 0. 
\end{gather*} 
Therefore, \eqref{eq:heights1} is true if and only if  for all $ t = 1, 2, \ldots, n-K$, 
\begin{gather} 
d(\mathcal{G},\mathcal{G}'; {\bf x}'(\phi)) > d \left (\mathcal{W}_1^{(t)}({\bf x}), \mathcal{W}_2^{(t)}({\bf x}); {\bf x} \right ),  \label{eq:l-indexing}\\ 
\forall ~ \mathcal{G}, \mathcal{G}' \in \mathcal{C}^{(t)}({\bf x}), ~ \mathcal{G} \neq \mathcal{G'}, ~ \{\mathcal{G}, \mathcal{G}'\} \neq \left  \{\mathcal{W}_1^{(t)}({\bf x}), \mathcal{W}_2^{(t)}({\bf x}) \right \}.  \nonumber  
\end{gather}
Recall that $\mathcal{L}({\bf x}) = \bigcup \limits_{t=1}^{n-K} \left \{ \{\mathcal{G}, \mathcal{G}'\}: \mathcal{G}, \mathcal{G}' \in \mathcal{C}^{(t)}({\bf x}), \mathcal{G} \neq \mathcal{G}', \{ \mathcal{G}, \mathcal{G}' \} \neq \left \{ \mathcal{W}_1^{(t)}({\bf x}), \mathcal{W}_2^{(t)}({\bf x}) \right \} \right \}$.  Observe that \eqref{eq:l-indexing} is true for all $t = 1, 2, \ldots, n-K$ if and only if for all $\{\mathcal{G}, \mathcal{G}'\} \in \mathcal{L}({\bf x})$, 
\begin{align} 
d(\mathcal{G},\mathcal{G}'; {\bf x}'(\phi)) > \underset{l_{\mathcal{G}, \mathcal{G}'}({\bf x}) \leq t \leq u_{\mathcal{G}, \mathcal{G}'}({\bf x})}{\max}  d\left ( \mathcal{W}_1^{(t)}({\bf x}), \mathcal{W}_2^{(t)}({\bf x}); {\bf x} \right ), \label{eq:pair-indexing}  
\end{align}
where $l_{\mathcal{G}, \mathcal{G}'}({\bf x}) =  \min \left \{1 \leq t \leq n-K: \mathcal{G}, \mathcal{G}' \in \mathcal{C}^{(t)}({\bf x}),  \{\mathcal{G}, \mathcal{G}' \} \neq \left \{ \mathcal{W}_1^{(t)}({\bf x}), \mathcal{W}_2^{(t)}({\bf x})\}  \right \}  \right \}$ and
$u_{\mathcal{G}, \mathcal{G}'}({\bf x}) =  \max \left \{ 1 \leq t \leq n-K: \mathcal{G}, \mathcal{G}' \in \mathcal{C}^{(t)}({\bf x}), \{\mathcal{G}, \mathcal{G}' \} \neq \{ \mathcal{W}_1^{(t)}({\bf x}), \mathcal{W}_2^{(t)}({\bf x})\}  \right \}  $ are the start and end of the lifetime of the cluster pair $\{\mathcal{G}, \mathcal{G}'\}$, respectively.
Therefore, it follows from \eqref{eq:merges} in Lemma \ref{lem:heights} that $\hc_1, \hc_2 \in \mathcal{C}({\bf x}'(\phi))$ if and only if \eqref{eq:pair-indexing} is true for all $\{\mathcal{G}, \mathcal{G}'\} \in \mathcal{L}({\bf x})$. Recalling from \eqref{eq:defS} that $\mathcal{\hat S} = \{ \phi \geq 0: \hc_1, \hc_2 \in  \mathcal{C}({\bf x}'(\phi))\}$, it follows that 
\begin{align*} 
\mathcal{\hat S} = \bigcap \limits_{\{\mathcal{G}, \mathcal{G}'\} \in \mathcal{L}({\bf x})} \left \{ \phi \geq 0: d(\mathcal{G}, \mathcal{G}'; {\bf x}'(\phi)) > \underset{l_{\mathcal{G},\mathcal{G}'}({\bf x}) \leq t \leq u_{\mathcal{G},\mathcal{G}'}({\bf x})}{\max} d\left ( \mathcal{W}_1^{(t)}({\bf x}),\mathcal{W}_2^{(t)}({\bf x})  ; {\bf x} \right ) \right \}. 
\end{align*}
This is \eqref{eq:hierS}. Furthermore, this intersection involves $\mathcal{O}(n^2)$ sets, because 
\begin{align*} 
|\mathcal{L}({\bf x})| = \left ( \binom{n}{2} - 1 \right ) + \sum \limits_{t=1}^{n-K-1} (n-t-1). 
\end{align*}

\subsection{Proof of Proposition \ref{lem:computeH}}
\label{sec:proof-computeH}
Let $\{\mathcal{G}, \mathcal{G}' \} \in \mathcal{L}({\bf x})$, where $\mathcal{L}({\bf x})$ in \eqref{eq:losers} is the set of losing cluster pairs  in ${\bf x}$. Suppose that we are given the start and end of the lifetime of this cluster pair in ${\bf x}$, $l_{\mathcal{G}, \mathcal{G}'}({\bf x})$  and $u_{\mathcal{G}, \mathcal{G}'}({\bf x})$, and we are given the set of steps where inversions occur in the dendrogram of ${\bf x}$ below the $(n-K)$th merge, 
$$\mathcal{M}({\bf x}) = \left \{ 1 \leq t < n-K : d \left ( \mathcal{W}_1^{(t)}({\bf x}), \mathcal{W}_2^{(t)}({\bf x}) ; {\bf x}\right ) > d \left ( \mathcal{W}_1^{(t+1)}({\bf x}), \mathcal{W}_2^{(t+1)}({\bf x}) ; {\bf x}\right ) \right \}.$$

Observe that for $h_{\mathcal{G}, \mathcal{G}'}({\bf x})$ in \eqref{eq:hab}, 
\begin{align*} 
h_{\mathcal{G}, \mathcal{G}'}({\bf x}) &= \underset{l_{\mathcal{G}, \mathcal{G}'}({\bf x}) \leq t \leq u_{\mathcal{G}, \mathcal{G}'}({\bf x})}{\max} d\left ( \mathcal{W}_1^{(t)}({\bf x}),\mathcal{W}_2^{(t)}({\bf x}); {\bf x} \right ) = \underset{t \in \mathcal{M}_{\mathcal{G}, \mathcal{G}'}({\bf x}) \cup \{u_{\mathcal{G}, \mathcal{G}'}({\bf x}) \}}{\max} d \left (\mathcal{W}_1^{(t)}({\bf x}),   \mathcal{W}_2^{(t)}({\bf x}); {\bf x} \right ),
\end{align*} 
where 
$\mathcal{M}_{\mathcal{G}, \mathcal{G}'}({\bf x}) = \left \{ t: l_{\mathcal{G}, \mathcal{G}'}({\bf x}) \leq t < u_{\mathcal{G}, \mathcal{G}'}({\bf x}),  d \left ( \mathcal{W}_1^{(t)}({\bf x}), \mathcal{W}_2^{(t)}({\bf x}) ; {\bf x}\right ) > d \left ( \mathcal{W}_1^{(t+1)}({\bf x}), \mathcal{G}_2^{(l+1)}({\bf x}) ; {\bf x}\right ) \right \}.$
Therefore, we can compute $h_{\mathcal{G}, \mathcal{G}'} ({\bf x})$ for each cluster pair $\{\mathcal{G}, \mathcal{G}'\} \in \mathcal{L}({\bf x})$ as follows: 
\begin{enumerate}
\item Compute $\mathcal{M}_{\mathcal{G}, \mathcal{G}'}({\bf x})$ by checking every step in $\mathcal{M}({\bf x})$ to see if it is in $\Big [l_{\mathcal{G}, \mathcal{G}'}({\bf x}),  u_{\mathcal{G}, \mathcal{G}'}({\bf x}) \Big )$. 
\item Compute $h_{\mathcal{G}, \mathcal{G}'}({\bf x})$ by taking the max over $|\mathcal{M}_{\mathcal{G}, \mathcal{G}'}({\bf x})| + 1$ merge heights. 
\end{enumerate} 
Step 1 requires $2|\mathcal{M}({\bf x})|$ operations, and Step 2 requires $|\mathcal{M}_{\mathcal{G}, \mathcal{G}'}({\bf x})| + 1$ operations. Since  $|\mathcal{M}_{\mathcal{G}, \mathcal{G}'}({\bf x})| <  |\mathcal{M}({\bf x})|$, it follows that we can compute $h_{\mathcal{G}, \mathcal{G}'} ({\bf x})$ in $\mathcal{O}(|\mathcal{M}({\bf x})|+ 1)$ time.

\subsection{Proof of Proposition \ref{prop:single}} 
\label{sec:proof-single}
Recall that 
\begin{align} 
\mathcal{\hat S}_{i, i'} = \left \{ \phi \geq 0: d(\{i\}, \{i'\}; {\bf x}'(\phi)) > \underset{\{\mathcal{G}, \mathcal{G}'\} \in \mathcal{L}({\bf x}):  i \in \mathcal{G}, i' \in \mathcal{G}'}{\max}~ h_{\mathcal{G}, \mathcal{G}'}({\bf x}) \right \}. \label{eq:defSpair}
\end{align} 

We will consider two cases: either 
\begin{align} 
 i, i' \in \hc_1 \text{ or } i, i' \in \hc_2 \text{ or } i, i' \not \in \hc_1 \cup \hc_2, \label{eq:pair-cond}
\end{align} 
or \eqref{eq:pair-cond} doesn't hold. 

\subsubsection*{Case 1: \eqref{eq:pair-cond} holds}
We first state and prove a preliminary result.  
\begin{lemma} 
\label{lem:single-preserve}
Suppose that $\mathcal{C} = \mathcal{C}^{(n-K+1)}$, i.e. we perform hierarchical clustering to obtain $K$ clusters. Let $\{\mathcal{G}, \mathcal{G}'\} \in \mathcal{L}({\bf x})$, $i \in \mathcal{G}$, $i' \in \mathcal{G}'$. Suppose that \eqref{eq:pair-cond} holds. Then, for all $\phi \geq 0$, 
\begin{align} 
d(\mathcal{G}, \mathcal{G}'; {\bf x}'(\phi)) = d(\mathcal{G}, \mathcal{G}'; {\bf x}). \label{eq:single-preserve}
\end{align} 
\end{lemma}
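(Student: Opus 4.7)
The plan is to reduce Lemma \ref{lem:single-preserve} to Lemma \ref{lem:dissimilar}, by using the nesting property of agglomerative hierarchical clustering to show that $\mathcal{G}$ and $\mathcal{G}'$ must lie entirely within the same ``region'' among $\{\hc_1, \hc_2, [\hc_1 \cup \hc_2]^c\}$.

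First, I would establish the following nesting fact: since agglomerative hierarchical clustering only merges clusters and never splits them, every cluster $\mathcal{G} \in \mathcal{C}^{(t)}({\bf x})$ with $t \leq n-K+1$ is a subset of some cluster in $\mathcal{C}^{(n-K+1)}({\bf x}) = \mathcal{C}({\bf x})$. In particular, since $\hc_1, \hc_2 \in \mathcal{C}({\bf x})$, each such $\mathcal{G}$ satisfies exactly one of: $\mathcal{G} \subseteq \hc_1$, $\mathcal{G} \subseteq \hc_2$, or $\mathcal{G} \subseteq \{1,\ldots,n\} \setminus [\hc_1 \cup \hc_2]$ (the latter because $\mathcal{G}$ would be contained in some other cluster of $\mathcal{C}({\bf x})$, which is disjoint from $\hc_1 \cup \hc_2$).

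Next, since $\{\mathcal{G}, \mathcal{G}'\} \in \mathcal{L}({\bf x})$ by \eqref{eq:losers} implies $\mathcal{G}, \mathcal{G}' \in \mathcal{C}^{(t)}({\bf x})$ for some $t \leq n-K$, the above applies to both $\mathcal{G}$ and $\mathcal{G}'$. Because $i \in \mathcal{G}$, the cluster $\mathcal{G}$ must be contained in whichever region contains $i$; similarly for $\mathcal{G}'$ and $i'$. The hypothesis \eqref{eq:pair-cond} says that $i$ and $i'$ lie in the same region, so $\mathcal{G}$ and $\mathcal{G}'$ are both contained in the same region among $\{\hc_1, \hc_2, [\hc_1 \cup \hc_2]^c\}$.

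Finally, I would apply Lemma \ref{lem:dissimilar} directly to $\mathcal{G}$ and $\mathcal{G}'$ to conclude $d(\mathcal{G}, \mathcal{G}'; {\bf x}'(\phi)) = d(\mathcal{G}, \mathcal{G}'; {\bf x})$ for all $\phi \geq 0$, which is \eqref{eq:single-preserve}. The only subtle step is verifying the nesting fact, but this is immediate from the agglomerative nature of Algorithm \ref{alg:hier}; no linkage-specific reasoning is needed, and the lemma in fact does not rely on $d$ being single linkage (though it will be used in that context to derive \eqref{eq:Ssingle}).
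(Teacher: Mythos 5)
Your proof is correct and follows essentially the same route as the paper's: both use the fact that merged clusters never unmerge to deduce that $\mathcal{G}$ and $\mathcal{G}'$ are each nested inside a single cluster of $\mathcal{C}^{(n-K+1)}({\bf x})$, so that under \eqref{eq:pair-cond} both lie in the same region among $\hc_1$, $\hc_2$, and $[\hc_1 \cup \hc_2]^c$, and then both invoke Lemma \ref{lem:dissimilar}. Your explicit remark that the argument is linkage-agnostic matches the paper's treatment, since Lemma \ref{lem:dissimilar} is stated for any clustering method.
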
 
\begin{proof} 
Since  $\{\mathcal{G}, \mathcal{G}'\} \in \mathcal{L}({\bf x})$, it follows that $\mathcal{G}, \mathcal{G}' \in \bigcup \limits_{t=1}^{n-K} \mathcal{C}^{(t)}({\bf x})$. Furthermore, clusters that merge cannot unmerge, and $\hc_1, \hc_2 \in \mathcal{C}^{(n-K+1)}({\bf x})$. This means that all observations in the set $\mathcal{G}$ must belong to the same cluster in  $\mathcal{C}^{(n-K+1)}({\bf x})$, and all observations  in the set $\mathcal{G'}$ must belong to the same cluster in $\mathcal{C}^{(n-K+1)}({\bf x})$. Therefore, \eqref{eq:pair-cond} implies that either
\begin{align*} 
\mathcal{G}, \mathcal{G}' \subseteq \hc_1 \text{ or } \mathcal{G}, \mathcal{G}' \subseteq \hc_2 \text{ or }  \mathcal{G}, \mathcal{G}' \subseteq \{1, 2, \ldots, n\} \backslash [\hc_1 \cup \hc_2].
\end{align*} 
It follows from Lemma \ref{lem:dissimilar} that \eqref{eq:single-preserve} holds. 
\end{proof} 

Suppose that $i, i'$ satisfy \eqref{eq:pair-cond}. Recall from Section \ref{sec:key} that 
\begin{align} 
d(\mathcal{G}, \mathcal{G}'; {\bf x}) > h_{\mathcal{G}, \mathcal{G}'}({\bf x}) \quad \forall ~ \{\mathcal{G}, \mathcal{G}'\} \in \mathcal{L}({\bf x}). \label{eq:single-ge} 
\end{align}
It follows from \eqref{eq:single-ge} and Lemma \ref{lem:single-preserve} that for all $\phi \geq 0$, 
\begin{align} 
d(\mathcal{G}, \mathcal{G}'; {\bf x}'(\phi)) = d(\mathcal{G}, \mathcal{G}'; {\bf x}) > h_{\mathcal{G}, \mathcal{G}'}({\bf x}) \quad \forall ~\{\mathcal{G}, \mathcal{G}'\} \in \mathcal{L}({\bf x}) \text{ s.t. } i \in \mathcal{G}, i' \in \mathcal{G}'. \label{eq:single-ge2} 
\end{align} 
Recall that single linkage defines $d(\mathcal{G}, \mathcal{G}'; {\bf x}'(\phi)) = \underset{ i\in \mathcal{G}, i' \in \mathcal{G}'}{\min} d(\{i\}, \{i'\}; {\bf x}'(\phi))$. Thus, it follows from \eqref{eq:single-ge2} that for all $\phi \geq 0$, 
\begin{align} 
d(\{i\}, \{i'\}; {\bf x}'(\phi)) > h_{\mathcal{G}, \mathcal{G}'}({\bf x}) \quad \forall ~\{\mathcal{G}, \mathcal{G}'\} \in \mathcal{L}({\bf x}) \text{ s.t. } i \in \mathcal{G}, i' \in \mathcal{G}'.  \label{eq:single-pf}
\end{align} 

Applying \eqref{eq:single-pf} to the definition of $\mathcal{\hat S}_{i, i'}$ in \eqref{eq:defSpair} yields $\mathcal{\hat S}_{i, i'} = [0, \infty)$. 

\subsubsection*{Case 2: \eqref{eq:pair-cond} does not hold}
Suppose that $i, i'$ do not satisfy \eqref{eq:pair-cond}. Then, $i$ and $i'$ are assigned to different clusters in $\mathcal{C}({\bf x}) = \mathcal{C}^{(n-K+1)}({\bf x})$. Since $\mathcal{C}^{(n-K+1)}({\bf x})$ is created by merging a pair of clusters in $\mathcal{C}^{(n-K)}({\bf x})$, it follows that $i$ and $i'$ are assigned to different clusters $\mathcal{G}$ and $\mathcal{G}'$ in $\mathcal{C}^{(n-K)}({\bf x})$ such that $\{\mathcal{G}, \mathcal{G}'\}$ is not the winning pair. That is, there exists $\mathcal{G}, \mathcal{G}' \in \mathcal{C}^{(n-K)}({\bf x})$ such that $i \in \mathcal{G}$, and $i' \in \mathcal{G}'$, $\mathcal{G} \neq \mathcal{G}'$, and $\{\mathcal{G}, \mathcal{G}' \} \neq  \left \{\mathcal{W}_1^{(n-K)}({\bf x}), \mathcal{W}_2^{(n-K)}({\bf x} ) \right \}$. This means that $\{\mathcal{G}, \mathcal{G}'\} \in \mathcal{L}({\bf x})$, with $u_{\mathcal{G}, \mathcal{G}'}({\bf x}) = n-K$. Furthermore, single linkage cannot produce inversions \citep{murtagh2012algorithms}, i.e. $d\left (\mathcal{W}_1^{(t)}({\bf x}), \mathcal{W}_2^{(t)}({\bf x}); {\bf x} \right ) < d\left (\mathcal{W}_1^{(t+1)}({\bf x}), \mathcal{W}_2^{(t+1)}({\bf x}); {\bf x} \right )$ for all $t < n-1$. It follows that
\begin{align} 
\underset{\{\mathcal{G}, \mathcal{G}'\} \in \mathcal{L}({\bf x}): i \in \mathcal{G}, i' \in \mathcal{G}'}{\max}~ h_{\mathcal{G}, \mathcal{G}'}({\bf x})  &= \underset{\mathcal{G}, \mathcal{G}' \in \mathcal{L}({\bf x}): i \in \mathcal{G}, i' \in \mathcal{G}'}{\max} ~ \underset{l_{\mathcal{G}, \mathcal{G}'}({\bf x}) \leq t \leq u_{\mathcal{G}, \mathcal{G}'}({\bf x})}{\max} ~ d\left (\mathcal{W}_1^{(t)}({\bf x}), \mathcal{W}_2^{(t)}({\bf x}); {\bf x} \right ) \nonumber \\ 
&= d\left (\mathcal{W}^{(n-K)}_1({\bf x}), \mathcal{W}^{(n-K)}_2({\bf x}); {\bf x} \right ). \label{eq:single-proof} 
\end{align} 
 Applying \eqref{eq:single-proof} to the definition of $\mathcal{\hat S}_{i, i'}$ in \eqref{eq:defSpair} yields 
$$\mathcal{\hat S}_{i, i'} = \left \{ \phi \geq 0: d(\{i\}, \{i'\}; {\bf x}'(\phi)) > d\left (\mathcal{W}^{(n-K)}_1({\bf x}), \mathcal{W}^{(n-K)}_2({\bf x}); {\bf x} \right ) \right \}. $$

\subsection{Proof of Theorem \ref{thm:plug-in-conservative}} 
\label{sec:proof-plug-in}

The proof of Theorem \ref{thm:plug-in-conservative} is adapted from Appendices A.10 and A.11 in \citet{tibshirani2018uniform}. 
We first state and prove an intermediate result. 
\begin{lemma} 
\label{lem:monotone}
Recall that $F(t; c, \mathcal{S})$ denotes the CDF of a $c \cdot \chi_q$ random variable truncated to the set $\mathcal{S}$.  If $0 < c_1 < c_2$, then
$$1 - F(t; c_1, \mathcal{S}) < 1 - F(t; c_2, \mathcal{S}), \quad \text{for all } t \in \mathcal{S}.$$
\end{lemma}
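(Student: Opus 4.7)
The plan is to show that the scale family $\{c\cdot\chi_q : c>0\}$ has strictly monotone likelihood ratio (MLR) in the observation, and then invoke the fact that MLR families remain stochastically ordered after truncation to a common set. From this, strict comparison of truncated survival functions at any interior point $t\in\mathcal{S}$ follows immediately.

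First I would write down the density of $Y\sim c\cdot\chi_q$ on $(0,\infty)$, namely
\begin{equation*}
f_c(y) \;\propto\; \frac{1}{c^q}\, y^{q-1}\exp\!\left(-\frac{y^2}{2c^2}\right),
\end{equation*}
and compute the likelihood ratio for $0<c_1<c_2$:
\begin{equation*}
\frac{f_{c_2}(y)}{f_{c_1}(y)} \;=\; \left(\frac{c_1}{c_2}\right)^{q}\exp\!\left(\frac{y^2}{2}\left(\frac{1}{c_1^2}-\frac{1}{c_2^2}\right)\right).
\end{equation*}
Since $c_1<c_2$ implies $1/c_1^2 - 1/c_2^2 > 0$, this ratio is a strictly increasing function of $y$ on $(0,\infty)$. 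Hence the family has strict MLR in $y$.

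Next I would use the standard fact that MLR implies stochastic ordering under truncation to a common event. Concretely, let $Y_j$ ($j=1,2$) denote a random variable with density $f_{c_j}$ truncated to $\mathcal{S}$. For any $t\in\mathcal{S}$,
\begin{equation*}
1-F(t;c_j,\mathcal{S}) \;=\; \frac{\int_{\mathcal{S}\cap[t,\infty)} f_{c_j}(y)\,dy}{\int_{\mathcal{S}} f_{c_j}(y)\,dy}.
\end{equation*}
Writing $A=\mathcal{S}\cap[t,\infty)$ and $B=\mathcal{S}\cap[0,t)$, the inequality $1-F(t;c_1,\mathcal{S}) < 1-F(t;c_2,\mathcal{S})$ is equivalent, after cross-multiplying and cancelling, to
\begin{equation*}
\int_{A} f_{c_1}(y)\,dy \int_{B} f_{c_2}(y)\,dy \;<\; \int_{A} f_{c_2}(y)\,dy \int_{B} f_{c_1}(y)\,dy,
\end{equation*}
i.e.\ $\int\int_{A\times B}\big[f_{c_1}(y)f_{c_2}(y') - f_{c_2}(y)f_{c_1}(y')\big]\,dy\,dy' < 0$, where $y\in A$ and $y'\in B$ so $y\ge t > y'$. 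By the strict MLR property, the bracketed integrand is strictly negative almost everywhere on $A\times B$, giving the strict inequality provided $A$ and $B$ both have positive measure.

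The only subtle point is handling the boundary cases. If $t$ is the minimum of $\mathcal{S}$ so $B$ has measure zero, both sides of the claimed inequality equal $1$ and the statement is vacuous as stated with strict ``$<$''; I would therefore interpret the lemma as covering the nontrivial case where $t$ lies strictly above $\inf\mathcal{S}$ (as it will be when invoked for computing p-values). This boundary bookkeeping is the only real friction in the argument; the MLR/truncation step itself is standard and short.
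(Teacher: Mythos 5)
Your proof is correct and follows essentially the same route as the paper's: both establish the strict monotone likelihood ratio property of the $c \cdot \chi_q$ scale family and then integrate the resulting cross-product inequality $f_{c_1}(u_2)f_{c_2}(u_1) < f_{c_1}(u_1)f_{c_2}(u_2)$ over $\left(\mathcal{S}\cap[t,\infty)\right)\times\left(\mathcal{S}\cap[0,t)\right)$ to compare the truncated survival functions. Your caveat about the degenerate case where $\mathcal{S}\cap[0,t)$ has measure zero is a valid observation that the paper's own proof silently shares; it is harmless in the application, where the lemma is only used to conclude the non-strict inequality $\hat p \geq p$.
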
 
\begin{proof} 
Let $f(t; c, \mathcal{S})$ denote the probability density function (pdf) of a $c \cdot \chi_q$ random variable truncated to the set $\mathcal{S}$. Observe that 
\begin{align*} 
f(t; c, \mathcal{S}) &= \frac{ \frac{1}{2^{\frac{q}{2} - 1} \Gamma(q/2)} c^{-q} t^{q-1} e^{-\frac{t^2}{2c^2}} \mathds{1} \{t \in \mathcal{S}\}}{ \int \frac{1}{2^{\frac{q}{2} - 1} \Gamma(q/2)} c^{-q} u^{q-1} e^{-\frac{u^2}{2c^2}} \mathds{1} \{u \in \mathcal{S}\} ~ du } \\
&= \frac{t^{q-1} e^{-\frac{t^2}{2c^2}} \mathds{1} \{t \in \mathcal{S}\}}{\int u^{q-1} e^{-\frac{u^2}{2c^2}} \mathds{1} \{u \in \mathcal{S}\} ~ du  }  \\ 
&= \left ( \frac{t^{q-1}  \mathds{1} \{t \in \mathcal{S}\}}{\int u^{q-1} e^{-\frac{u^2}{2c^2}} \mathds{1} \{u \in \mathcal{S}\} ~ du}  \right ) \left ( \exp \left \{ -\frac{t^2}{2c^2} \right \} \right ). 
\end{align*} 
Thus, $\{f(t; c, \mathcal{S})\}_{c > 0}$ is an exponential family with natural parameter $\frac{1}{c^2}$. Therefore, it has a monotone non-increasing likelihood ratio in its sufficient statistic, $-t^2/2$. This means that for any fixed $0 < c_1 < c_2$, 
\begin{align}
\frac{f(u_2; c_1, \mathcal{S})}{f(u_2; c_2, \mathcal{S}) } < \frac{f(u_1; c_1, \mathcal{S})}{f(u_1; c_2, \mathcal{S}) }, \quad \text{for all } u_1, u_2 \in \mathcal{S}, ~ u_1 < u_2. \label{eq:mlr} 
\end{align} 
Rearranging terms in \eqref{eq:mlr} yields: 
\begin{align} 
f(u_2; c_1, \mathcal{S})f(u_1; c_2, \mathcal{S}) < f(u_1; c_1, \mathcal{S})f(u_2; c_2, \mathcal{S}), \quad \text{for all } u_1, u_2 \in \mathcal{S}, ~ u_1 < u_2.\label{eq:mlr2} 
\end{align} 
Let $t, u_2 \in \mathcal{S}$ with $t < u_2$. Then, 
\begin{align*} 
f(u_2; c_1, \mathcal{S})F(t; c_2, \mathcal{S}) &= \int_{-\infty}^t f(u_2; c_1, \mathcal{S}) f(u_1; c_2, \mathcal{S}) du_1 \nonumber \\ 
&< \int_{-\infty}^t f(u_1; c_1, \mathcal{S})f(u_2; c_2, \mathcal{S}) du_1 =  F(t; c_1, \mathcal{S})f(u_2; c_2, \mathcal{S}), 
\end{align*}  
where the inequality follows from \eqref{eq:mlr2}. That is, we have shown that
\begin{align} 
f(u_2; c_1, \mathcal{S})F(u_1; c_2, \mathcal{S}) < F(u_1; c_1, \mathcal{S})f(u_2; c_2, \mathcal{S}),  \quad \text{for all } u_1, u_2 \in \mathcal{S}, ~ u_1 < u_2. \label{eq:mlr3} 
\end{align} 
Let $t \in \mathcal{S}$. Then, 
\begin{align} 
(1 - F(t; c_1, \mathcal{S}))F(t; c_2, \mathcal{S}) &= \int_t^\infty f(u_2; c_1, \mathcal{S})F(t; c_2, \mathcal{S}) du_2  \label{eq:mlr4} \\ 
&< \int_t^\infty F(t; c_1, \mathcal{S})f(u_2; c_2, \mathcal{S}) du_2 = F(t; c_1, \mathcal{S})(1 -  F(t; c_2, \mathcal{S})), \nonumber
\end{align}
where the inequality follows from \eqref{eq:mlr3}  and the fact that $f(t; c, \mathcal{S}) = 0$ for all $t \not \in \mathcal{S}$. Rearranging terms in \eqref{eq:mlr4} yields  
$$1 - F(t; c_1, \mathcal{S}) < 1 - F(t; c_2, \mathcal{S}).$$
\end{proof} 
We will now  prove Theorem \ref{thm:plug-in-conservative}. In the following, we will use $A_n$ to denote the event $ \left \{ \hc_1^{(n)}, \hc_2^{(n)} \in \mathcal{C} ( {\bf X}^{(n)} )  \right \} $, $\hat p_n$ to denote $\hat p\left ({\bf X}^{(n)}; \left \{\hc_1^{(n)}, \hc_2^{(n)} \right \} \right )$, and $p_n$ to denote $p\left ({\bf X}^{(n)}; \left \{\hc_1^{(n)}, \hc_2^{(n)} \right \} \right )$. 
Observe that 
\begin{align} 
& \mathbb{P}_{H_0^{ \left \{\hc_1^{(n)}, \hc_2^{(n)} \right \}}} \left ( \hat p_n \leq \alpha ~\Big | ~ A_n \right ) \nonumber \\ 
&= \mathbb{P}_{H_0^{ \left \{\hc_1^{(n)}, \hc_2^{(n)} \right \}}} \left ( \hat p_n \leq \alpha, ~ \hat \sigma({\bf X}^{(n)}) \geq \sigma ~\Big | ~ A_n \right ) + \mathbb{P}_{H_0^{ \left \{\hc_1^{(n)}, \hc_2^{(n)} \right \}}} \left ( \hat p_n \leq \alpha, ~ \hat \sigma({\bf X}^{(n)}) < \sigma ~\Big | ~ A_n \right ) \nonumber \\ 
&\leq\mathbb{P}_{H_0^{ \left \{\hc_1^{(n)}, \hc_2^{(n)} \right \}}} \left ( \hat p_n \leq \alpha, ~ \hat \sigma({\bf X}^{(n)}) \geq \sigma ~\Big | ~ A_n \right ) + \mathbb{P}_{H_0^{ \left \{\hc_1^{(n)}, \hc_2^{(n)} \right \}}} \left ( \hat \sigma({\bf X}^{(n)}) < \sigma ~\Big | ~ A_n \right ).\label{eq:finite1}
\end{align} 
Recall from \eqref{eq:pval-phi} that 
$$ p\left ({\bf x}^{(n)}; \left \{\hc_1^{(n)}, \hc_2^{(n)} \right \} \right )  = 1 - \mathbb{F} \left (\left \|\bar{x}_{\hc_1^{(n)}} - \bar{x}_{\hc_2^{(n)}} \right \|_2; \sigma \sqrt{\frac{1}{|\hc_1^{(n)}|} +\frac{1}{|\hc_2^{(n)}|}} , \mathcal{ S}\left ({\bf x};  \left \{\hc_1^{(n)}, \hc_2^{(n)} \right \} \right ) \right ), $$
and recall from \eqref{eq:est-pval} that
$$\hat p\left ({\bf x}^{(n)}; \left \{\hc_1^{(n)}, \hc_2^{(n)} \right \} \right )  = 1 - \mathbb{F} \left (\left \|\bar{x}_{\hc_1^{(n)}} - \bar{x}_{\hc_2^{(n)}} \right \|_2; \hat \sigma ( {\bf x})  \sqrt{\frac{1}{|\hc_1^{(n)}|} +\frac{1}{|\hc_2^{(n)}|}} , \mathcal{ S}\left ({\bf x};  \left \{\hc_1^{(n)}, \hc_2^{(n)} \right \} \right ) \right ).$$
It follows from Lemma \ref{lem:monotone} that for any ${\bf x}$ such that $\hat \sigma({\bf x}) \geq \sigma$, we have that $$\hat p\left ({\bf x}; \left \{\hc_1^{(n)}, \hc_2^{(n)} \right \} \right ) \geq p \left ({\bf x}; \left \{\hc_1^{(n)}, \hc_2^{(n)} \right \} \right ).$$ Thus, for all $n = 1, 2, \ldots$ and for all $0 \leq \alpha \leq 1$, 
\begin{align} \mathbb{P}_{H_0^{ \left \{\hc_1^{(n)}, \hc_2^{(n)} \right \}}} \left (\hat p_n \leq \alpha,  \hat \sigma ({\bf X}^{(n)}) \geq \sigma  ~\Big | ~ A_n \right ) &\leq \mathbb{P}_{H_0^{ \left \{\hc_1^{(n)}, \hc_2^{(n)} \right \}}} \left ( p_n \leq \alpha,  \hat \sigma ({\bf X}^{(n)}) \geq \sigma  ~\Big | ~ A_n \right )  \nonumber \\ 
&\leq \mathbb{P}_{H_0^{ \left \{\hc_1^{(n)}, \hc_2^{(n)} \right \}}} \Big (p_n \leq \alpha ~ \Big | ~ A_n \Big ) \nonumber \\ 
&= \alpha, \label{eq:ineq}
\end{align} 
where the equality follows from \eqref{eq:selective-ours}. Applying \eqref{eq:ineq} to \eqref{eq:finite1} yields 
\begin{align} 
&\mathbb{P}_{H_0^{ \left \{\hc_1^{(n)}, \hc_2^{(n)} \right \}}} \left ( \hat p_n \leq \alpha ~\Big | ~ A_n \right )  \leq \alpha + \mathbb{P}_{H_0^{ \left \{\hc_1^{(n)}, \hc_2^{(n)} \right \}}} \left ( \hat \sigma({\bf X}^{(n)}) < \sigma ~\Big | ~ A_n \right ).\label{eq:finite2}
\end{align} 
By our assumption in Theorem \ref{thm:plug-in-conservative}, $\underset{n \rightarrow \infty}{\lim} ~ \mathbb{P}_{H_0^{ \left \{\hc_1^{(n)}, \hc_2^{(n)} \right \}}} \left ( \hat \sigma({\bf X}^{(n)}) < \sigma ~\Big | ~ A_n \right ) = 0$. Applying this fact to \eqref{eq:finite2} yields 
$$ \underset{n \rightarrow \infty}{\lim} \mathbb{P}_{H_0^{ \left \{\hc_1^{(n)}, \hc_2^{(n)} \right \}}} \left ( \hat p_n \leq \alpha ~\Big | ~ A_n \right ) \leq \alpha.$$
This completes the proof of Theorem \ref{thm:plug-in-conservative}.

\subsection{Proof of Propositions \ref{prop:random1} and \ref{prop:random2}}
\label{sec:proof-random1} 

Recall that $H_0^{\{\mathcal{G}_1, \mathcal{G}_2\}}$ denotes the null hypothesis that $\bar{\mu}_{\mathcal{G}_1} = \bar{\mu}_{\mathcal{G}_2}$. Further recall that $\mathcal{H}_0$ denotes the set of hypotheses of the form $H_0^{\{\mathcal{G}_1, \mathcal{G}_2\}}$ that are true. We start by rewriting \eqref{eq:selective-ours} as follows. 
\begin{align} \text{If } H_0^{\{\mathcal{G}_1, \mathcal{G}_2\}} \in \mathcal{H}_0, \text{ then } \mathbb{P}\left ( p({\bf X}; \{\mathcal{G}_1, \mathcal{G}_2\}) \leq \alpha ~\Big |  ~\mathcal{G}_1, \mathcal{G}_2 \in  \mathcal{C}({\bf X})  \right ) = \alpha. \label{eq:selective-appendix} 
\end{align} 

We will first prove Proposition \ref{prop:random1}. Let $K = 2$. Then, 
\begin{align} 
&\mathbb{P}\left (p({\bf X}; \mathcal{C}({\bf X})) \leq \alpha, H_0^{\mathcal{C}({\bf X})} \in \mathcal{H}_0  \right ) \nonumber \\ 
&= \sum \limits_{H_0^{\{\mathcal{G}_1, \mathcal{G}_2\}} \in \mathcal{H}_0} \mathbb{P}\left ( p({\bf X}; \{\mathcal{G}_1, \mathcal{G}_2\}) \leq \alpha, \mathcal{C}({\bf X}) = \{ \mathcal{G}_1, \mathcal{G}_2\} \right ) \nonumber\\ 
&= \sum \limits_{H_0^{\{\mathcal{G}_1, \mathcal{G}_2\}} \in \mathcal{H}_0} \mathbb{P}\left (p({\bf X}; \{\mathcal{G}_1, \mathcal{G}_2\}) \leq \alpha \mid \mathcal{C}({\bf X}) = \{ \mathcal{G}_1, \mathcal{G}_2\} \right ) \mathbb{P}(\mathcal{C}({\bf X}) = \{ \mathcal{G}_1, \mathcal{G}_2\})\nonumber \\ 
&=  \sum \limits_{H_0^{\{\mathcal{G}_1, \mathcal{G}_2\}} \in \mathcal{H}_0}  \alpha ~ \mathbb{P}(\mathcal{C}({\bf X}) = \{ \mathcal{G}_1, \mathcal{G}_2\}) \label{eq:apply-selective1} \\ 
&= \alpha ~ \mathbb{P}(H_0^{\mathcal{C}({\bf X})} \in \mathcal{H}_0), \nonumber
\end{align}
where \eqref{eq:apply-selective1} follows from \eqref{eq:selective-appendix} and the fact that $K = 2$. Therefore, 
$$\mathbb{P}\left (p({\bf X}; \mathcal{C}({\bf X})) \leq \alpha \mid  H_0^{\mathcal{C}({\bf X})} \in \mathcal{H}_0  \right ) = \alpha. $$
This is \eqref{eq:random1}. 

We will now prove Proposition \ref{prop:random2}. Let $K > 2$, and let $\{\mathcal{G}_1({\bf X}), \mathcal{G}_2({\bf X})\}$ and ${\bf X}$ be conditionally independent given $\mathcal{C}({\bf X})$. Then, 
\begin{align} 
&\mathbb{P}\left (p({\bf X};  \{\mathcal{G}_1({\bf X}), \mathcal{G}_2({\bf X}) \}) \leq \alpha, ~  H_0^{\left \{ \mathcal{G}_1({\bf X}), \mathcal{G}_2({\bf X}) \right \}}  \in \mathcal{H}_0 \right )  \label{eq:start}\\ 
&= \sum \limits_{H_0^{\{\mathcal{G}_1, \mathcal{G}_2\}} \in \mathcal{H}_0} \mathbb{P} \left (  p({\bf X}; \{\mathcal{G}_1, \mathcal{G}_2 \}) \leq \alpha, \{\mathcal{G}_1({\bf X}), \mathcal{G}_2({\bf X})\} = \{ \mathcal{G}_1, \mathcal{G}_2\} \right ) \nonumber  \\ 
&= \sum \limits_{H_0^{\{\mathcal{G}_1, \mathcal{G}_2\}} \in \mathcal{H}_0} \mathbb{P}(\{ p({\bf X}; \{\mathcal{G}_1, \mathcal{G}_2 \}) \leq \alpha \} \cap \{ \mathcal{G}_1, \mathcal{G}_2 \in \mathcal{C}({\bf X}) \} \cap \{ \{\mathcal{G}_1({\bf X}), \mathcal{G}_2({\bf X})\} = \{\mathcal{G}_1, \mathcal{G}_2\}\}  )  \nonumber
\end{align} 
where the last equality follows from the fact that $\{\mathcal{G}_1({\bf X}), \mathcal{G}_2({\bf X})\} = \{\mathcal{G}_1, \mathcal{G}_2\}$ implies that $\mathcal{G}_1, \mathcal{G}_2 \in \mathcal{C}({\bf X})$. Since we assumed that $\{\mathcal{G}_1({\bf X}), \mathcal{G}_2({\bf X})\}$ and ${\bf X}$ are conditionally independent given $\mathcal{C}({\bf X})$, we have for any $H_0^{\{\mathcal{G}_1, \mathcal{G}_2\}} \in \mathcal{H}_0$ that 
\begin{align*}
& \mathbb{P} \left ( p({\bf X}; \{\mathcal{G}_1, \mathcal{G}_2 \}) \leq \alpha,  \{\mathcal{G}_1({\bf X}), \mathcal{G}_2({\bf X})\} = \{\mathcal{G}_1, \mathcal{G}_2\} \mid \mathcal{G}_1, \mathcal{G}_2 \in \mathcal{C}({\bf X}) \right )  \\ 
 &= \mathbb{P} \left ( p({\bf X}; \{\mathcal{G}_1, \mathcal{G}_2\}) \leq \alpha \mid \mathcal{G}_1, \mathcal{G}_2 \in \mathcal{C}({\bf X}) \right ) \mathbb{P} \left (\{\mathcal{G}_1({\bf X}), \mathcal{G}_2({\bf X})\} = \{\mathcal{G}_1, \mathcal{G}_2\} \mid \mathcal{G}_1, \mathcal{G}_2 \in \mathcal{C}({\bf X})  \right )  \\ 
 &= \alpha ~ \mathbb{P} \left (\{\mathcal{G}_1({\bf X}), \mathcal{G}_2({\bf X})\} = \{\mathcal{G}_1, \mathcal{G}_2\} \mid \mathcal{G}_1, \mathcal{G}_2 \in \mathcal{C}({\bf X})  \right ), 
\end{align*}
where the last equality follows from \eqref{eq:selective-appendix}. It follows that for any $H_0^{\{\mathcal{G}_1, \mathcal{G}_2\}} \in \mathcal{H}_0$, we have that 
\begin{align} 
&\mathbb{P}(\{ p({\bf X}; \{\mathcal{G}_1, \mathcal{G}_2 \}) \leq \alpha \} \cap \{ \mathcal{G}_1, \mathcal{G}_2 \in \mathcal{C}({\bf X}) \} \cap \{ \{\mathcal{G}_1({\bf X}), \mathcal{G}_2({\bf X})\} = \{\mathcal{G}_1, \mathcal{G}_2\}\}  )  \nonumber \\ 
&= \alpha ~ \mathbb{P} \left (\{\mathcal{G}_1({\bf X}), \mathcal{G}_2({\bf X})\} = \{\mathcal{G}_1, \mathcal{G}_2\} \mid \mathcal{G}_1, \mathcal{G}_2 \in \mathcal{C}({\bf X})  \right ) \mathbb{P} \left ( \mathcal{G}_1, \mathcal{G}_2 \in \mathcal{C}({\bf X})  \right ) \nonumber \\ 
&= \alpha ~\mathbb{P} \left ( \{\mathcal{G}_1({\bf X}), \mathcal{G}_2({\bf X})\} = \{\mathcal{G}_1, \mathcal{G}_2\},  \mathcal{G}_1, \mathcal{G}_2 \in \mathcal{C}({\bf X}) \right ). \label{eq:apply-selective2} 
\end{align}  
Applying \eqref{eq:apply-selective2} to \eqref{eq:start} yields 
\begin{align*} 
&\mathbb{P}\left (p({\bf X};  \{\mathcal{G}_1({\bf X}), \mathcal{G}_2({\bf X}) \}) \leq \alpha, ~  H_0^{\left \{ \mathcal{G}_1({\bf X}), \mathcal{G}_2({\bf X}) \right \}}  \in \mathcal{H}_0 \right )  \\ 
&= \sum \limits_{H_0^{\{\mathcal{G}_1, \mathcal{G}_2\}} \in \mathcal{H}_0}  \alpha ~ \mathbb{P} \left ( \{\mathcal{G}_1({\bf X}), \mathcal{G}_2({\bf X})\} = \{\mathcal{G}_1, \mathcal{G}_2\},  \mathcal{G}_1, \mathcal{G}_2 \in \mathcal{C}({\bf X}) \right ) \\ 
&=  \sum \limits_{H_0^{\{\mathcal{G}_1, \mathcal{G}_2\}} \in \mathcal{H}_0}  \alpha ~ \mathbb{P} \left ( \{\mathcal{G}_1({\bf X}), \mathcal{G}_2({\bf X})\} = \{\mathcal{G}_1, \mathcal{G}_2\} \right ) \\ 
&= \alpha ~ \mathbb{P} \left ( H_0^{\left \{ \mathcal{G}_1({\bf X}), \mathcal{G}_2({\bf X}) \right \}}  \in \mathcal{H}_0 \right ), 
\end{align*} 
where the second-to-last equality follows from the fact that $\{\mathcal{G}_1({\bf X}), \mathcal{G}_2({\bf X})\} = \{\mathcal{G}_1, \mathcal{G}_2\}$ implies that $\mathcal{G}_1, \mathcal{G}_2 \in \mathcal{C}({\bf X})$. 
Therefore, 
$$\mathbb{P}\left (p({\bf X};  \{\mathcal{G}_1({\bf X}), \mathcal{G}_2({\bf X}) \}) \leq \alpha \mid  H_0^{\left \{ \mathcal{G}_1({\bf X}), \mathcal{G}_2({\bf X}) \right \}}  \in \mathcal{H}_0 \right ) = \alpha. $$
This is \eqref{eq:random2}.

\section{Algorithm for computing $\mathcal{\hat S}$ in the case of squared Euclidean distance and linkages that satisfy \eqref{eq:lance1}} 
\label{sec:lance-algo} 
We begin by defining quantities used in the algorithm. For all $\mathcal{G} \in \bigcup \limits_{t=1}^{n-K} \mathcal{C}^{(t)}({\bf x})$, let
\begin{align} 
l_{\mathcal{G}}({\bf x}) \equiv \min \{1 \leq t \leq n-K: \mathcal{G} \in \mathcal{C}^{(t)}({\bf x})\}, ~ u_{\mathcal{G}}({\bf x}) \equiv \max \{1 \leq t \leq n-K: \mathcal{G} \in \mathcal{C}^{(t)}({\bf x})\}, \label{eq:lifetimes1} 
\end{align}
so that for any $\{ \mathcal{G}, \mathcal{G}' \} \in \mathcal{L}({\bf x})$, 
\begin{align} 
l_{\mathcal{G}, \mathcal{G}'}({\bf x}) &= \max \{ l_{\mathcal{G}}({\bf x}), l_{\mathcal{G}'}({\bf x})\}, \label{eq:lifetimes2}\\ 
 u_{\mathcal{G}, \mathcal{G}'}({\bf x}) &= \begin{cases} \min \{ u_{\mathcal{G}}({\bf x}), u_{\mathcal{G}'}({\bf x})\} - 1, & \text{if } \{\mathcal{G}, \mathcal{G}'\} \in \left \{ \left \{ \mathcal{W}_1^{(t)}({\bf x}), \mathcal{W}_2^{(t)}({\bf x}) \right \} : 1 \leq t \leq n-K \right \}, \\  \min \{ u_{\mathcal{G}}({\bf x}), u_{\mathcal{G}'}({\bf x})\}, & \text{ otherwise.} \end{cases}  \nonumber
\end{align} 
Recall from Section \ref{sec:lance} that 
\begin{align} 
h_{\mathcal{G}, \mathcal{G}'}({\bf x}) &= \underset{t \in \mathcal{M}_{\mathcal{G}, \mathcal{G}'}({\bf x}) \cup \{u_{\mathcal{G}, \mathcal{G}'}({\bf x}) \}}{\max} d \left (\mathcal{W}_1^{(t)}({\bf x}),   \mathcal{W}_2^{(t)}({\bf x}); {\bf x}\right ), \text{ where } \label{eq:h}\\ 
\mathcal{M}_{\mathcal{G}, \mathcal{G}'}({\bf x}) &= \left \{t: t \in \mathcal{M}({\bf x}), l_{\mathcal{G}, \mathcal{G}'}({\bf x}) \leq t < u_{\mathcal{G}, \mathcal{G}'}({\bf x}) \right \}. \label{eq:m}
\end{align} 

Let $\mathcal{G}_{new}^{(t)}({\bf x}) = \mathcal{W}_1^{(t-1)}({\bf x}) \cup \mathcal{W}_2^{(t-1)}({\bf x})$ for all $t = 2, \ldots, n-K$. Define
\begin{align} 
\mathcal{L}_1({\bf x}) &= \left \{ \{ \mathcal{G}, \mathcal{G}' \}: \mathcal{G}, \mathcal{G}' \in \mathcal{C}^{(1)}({\bf x}), \mathcal{G} \neq \mathcal{G}', \{\mathcal{G}, \mathcal{G}' \} \neq \{\mathcal{W}^{(1)}_1({\bf x}), \mathcal{W}^{(1)}_2({\bf x}) \} \right \},\label{eq:losers1}
\end{align} 
and for $t = 2, \ldots, n-K$, define 
\begin{align} 
\mathcal{L}_t({\bf x}) &= \left \{ \left \{ \mathcal{G}_{new}^{(t)}({\bf x}) , \mathcal{G}' \right \}: \mathcal{G}' \in \mathcal{C}^{(t)}({\bf x}) \backslash \mathcal{G}_{new}^{(t)}({\bf x}) ,  \left \{ \mathcal{G}_{new}^{(t)}({\bf x}) , \mathcal{G}' \right \} \neq \left \{\mathcal{W}^{(t)}_1({\bf x}), \mathcal{W}^{(t)}_2({\bf x}) \right \} \right \},\label{eq:losersl}
\end{align} 
so that for $\mathcal{L}({\bf x})$ defined in \eqref{eq:losers},  $\mathcal{L}({\bf x}) = \bigcup \limits_{t=1}^{n-K} \mathcal{L}_t({\bf x}).$ Thus, it follows from Theorem \ref{thm:hierS} that 
\begin{align} 
\mathcal{\hat S} = \bigcap \limits_{t=1}^{n-K} \bigcap \limits_{\{ \mathcal{G}, \mathcal{G}' \} \in \mathcal{L}_t ({\bf x})}\mathcal{\hat S}_{\mathcal{G}, \mathcal{G}'} ,  \label{eq:Sintersect} 
\end{align} 
where for all $\{\mathcal{G}, \mathcal{G}'\} \in \mathcal{L}({\bf x})$, 
\begin{align} 
\mathcal{\hat S}_{\mathcal{G}, \mathcal{G}'} =  \{\phi \geq 0: d(\mathcal{G}, \mathcal{G}'; {\bf x}'(\phi)) > h_{\mathcal{G}, \mathcal{G}'}({\bf x}) \}. \label{eq:Ssub} 
\end{align} 

The following algorithm computes the set $\mathcal{\hat S}$ (defined in \eqref{eq:defS}) in the case of squared Euclidean distance and linkages that satisfy \eqref{eq:lance1} in $\mathcal{O}\Big ((|\mathcal{M}({\bf x})| + \log(n))n^2 \Big )$, where $|\mathcal{M}({\bf x})|$ is the number of inversions in the dendrogram of ${\bf x}$ below the $(n-K)$th merge:
\begin{enumerate} 
\item Compute $\mathcal{M}({\bf x}) = \left \{ 1 \leq t < n-K: d \left ( \mathcal{W}_1^{(t)}({\bf x}), \mathcal{W}_2^{(t)}({\bf x}) ; {\bf x}\right ) > d \left ( \mathcal{W}_1^{(t+1)}({\bf x}), \mathcal{W}_2^{(t+1)}({\bf x}) ; {\bf x}\right )  \right \}$.
\item Compute $l_{\mathcal{G}}({\bf x})$ and $u_{\mathcal{G}}({\bf x})$ defined in \eqref{eq:lifetimes1} for all $\mathcal{G} \in \bigcup \limits_{t=1}^{n-K} \mathcal{C}^{(t)}({\bf x})$ as follows: 
\begin{enumerate} 
\item  Let $l_{\{i\}}({\bf x}) = 1$ for all $i$. Let $l_{\mathcal{G}_{new}^{(t)}({\bf x})}({\bf x}) = t$ for all $2 \leq t < n-K$. 
\item For all $\mathcal{G} \in \bigcup \limits_{t=1}^{n-K} \mathcal{C}^{(t)}({\bf x})$, initialize $u_{\mathcal{G}}({\bf x}) = n-K$. 
\item For $t' = 1, 2, \ldots, n-K$, update $u_{\mathcal{W}_1^{(t')}({\bf x})}({\bf x}) = t'$ and  $u_{\mathcal{W}_2^{(t')}({\bf x})}({\bf x}) = t'$.
\end{enumerate} 
\item For all $\{\{i\}, \{i'\}\} \in \mathcal{L}_1({\bf x})$, compute $\mathcal{\hat S}_{\{i\}, \{i'\}}$ defined in \eqref{eq:Ssub} as follows:
\begin{enumerate} 
\item Use $l_{\{i\}}({\bf x})$ and $l_{\{i'\}}({\bf x})$ to compute $l_{\{i\}, \{i'\}}({\bf x})$ and likewise for $u_{\{i\}, \{i'\}}({\bf x})$, according to \eqref{eq:lifetimes2}. 
\item Compute $\mathcal{M}_{\{i\}, \{i'\}}({\bf x})$ in \eqref{eq:m} by checking which elements of $\mathcal{M}({\bf x})$ are in $[l_{\{i\}, \{i'\}}({\bf x}), u_{\{i\}, \{i'\}}({\bf x})]$. 
\item Use $\mathcal{M}_{\{i\}, \{i'\}}({\bf x})$ and $u_{\{i\}, \{i'\}}({\bf x})$ to compute $h_{\{i\}, \{i'\}}({\bf x})$ according to \eqref{eq:h}. 
\item Compute the coefficients corresponding to the quadratic function $d\left (\{i\}, \{i'\}; {\bf x}'(\phi) \right )$ in constant time, using Lemma \ref{prop:quad}. 
\item Use these coefficients and $h_{\{i\}, \{i'\}}({\bf x})$ to evaluate $\mathcal{\hat S}_{\{i\}, \{i'\}} = \Big \{\phi \geq 0: d (\{i\}, \{i'\}; {\bf x}'(\phi)) > h_{\{i\}, \{i'\}}({\bf x})  ) \Big \}$. This amounts to solving a quadratic inequality in $\phi$.
\end{enumerate} 
\item For all $t = 2, \ldots, n-K$ and for all $\{\mathcal{G}_{new}^{(t)}({\bf x}), \mathcal{G}'\}\in \mathcal{L}_t({\bf x})$, compute  $\mathcal{\hat S}_{\mathcal{G}_{new}^{(t)}({\bf x}), \mathcal{G}'}$ defined in \eqref{eq:Ssub} as follows: 
\begin{enumerate} 
\item Use $l_{\mathcal{G}_{new}^{(t)}({\bf x})}({\bf x})$ and $l_{\mathcal{G}'}({\bf x})$ to compute $l_{\mathcal{G}_{new}^{(t)}({\bf x}), \mathcal{G}'}({\bf x})$ and likewise for $u_{\mathcal{G}_{new}^{(t)}({\bf x}), \mathcal{G}'}({\bf x})$, according to \eqref{eq:lifetimes2}. 
\item Compute $\mathcal{M}_{\mathcal{G}_{new}^{(t)}({\bf x}), \mathcal{G}'}({\bf x})$  in \eqref{eq:m} by checking which elements of $\mathcal{M}({\bf x})$ are between $l_{\mathcal{G}_{new}^{(t)}({\bf x}), \mathcal{G}'}({\bf x})$ and $u_{\mathcal{G}_{new}^{(t)}({\bf x}), \mathcal{G}'}({\bf x})$. 
\item Use $\mathcal{M}_{\mathcal{G}_{new}^{(t)}({\bf x}), \mathcal{G}'}({\bf x})$ and $u_{\mathcal{G}_{new}^{(t)}({\bf x}), \mathcal{G}'}({\bf x})$ to compute $h_{\mathcal{G}_{new}^{(t)}({\bf x}), \mathcal{G}'}({\bf x})$ in \eqref{eq:h}. 
\item Compute the coefficients corresponding to the quadratic function $d\left (\mathcal{G}_{new}^{(t)}({\bf x}), \mathcal{G}'; {\bf x}'(\phi)\right )$ in constant time, using \eqref{eq:lance1} (Proposition \ref{prop:average}). 
\item Use these coefficients and $h_{\mathcal{G}_{new}^{(t)}({\bf x}), \mathcal{G}'}({\bf x})$ to evaluate \\$\mathcal{\hat S}_{\mathcal{G}_{new}^{(t)}({\bf x}), \mathcal{G}'} = \Big \{\phi \geq 0: d \Big (\mathcal{G}_{new}^{(t)}({\bf x}), \mathcal{G}'; {\bf x}'(\phi) \Big ) > h_{\mathcal{G}_{new}^{(t)}({\bf x}), \mathcal{G}'}({\bf x}) \Big \}$. This amounts to solving a quadratic inequality in $\phi$.
\end{enumerate} 
\item Compute $\mathcal{\hat S} = \bigcap \limits_{t=1}^{n-K} \bigcap \limits_{\{ \mathcal{G}, \mathcal{G}' \} \in \mathcal{L}_t ({\bf x})} \mathcal{\hat S}_{\mathcal{G}, \mathcal{G}'}$.
\end{enumerate} 
To see that this algorithm computes $\mathcal{\hat S}$ in $\mathcal{O}\Big ((|\mathcal{M}({\bf x})| + \log(n))n^2 \Big )$ time,
observe that: 
\begin{itemize} 
\item Steps 1 and 2 can each be performed in $\mathcal{O}(n)$ time. Note that $\left | \bigcap \limits_{t=1}^{n-K} \mathcal{C}^{(t)} \right | = \mathcal{O}(n)$. 
\item Each of the $\mathcal{O}(n^2)$ iterations in Step 3 can be performed in $\mathcal{O}(|\mathcal{M}({\bf x})| + 1)$ time. 
\item Each of the $\mathcal{O}(n^2)$  iterations in Step 4 can be performed in $\mathcal{O}(|\mathcal{M}({\bf x})| + 1)$ time.
\item Step 5 can be performed in $\mathcal{O}(n^2 \log(n))$ time. This is because $\mathcal{\hat S}_{\mathcal{G}, \mathcal{G}'}$ solves a quadratic inequality for each $\{\mathcal{G}, \mathcal{G}'\} \in \mathcal{L}({\bf x})$, we can take the intersection over the solution sets of $N$ quadratic inequalities in $\mathcal{O}(N \log N)$ time \citep{bourgon2009intervals}, and $\left | \mathcal{L}({\bf x}) \right | = \mathcal{O}(n^2)$. 
\end{itemize} 

\section{Supplementary material for Section \ref{sec:unknown}}
\label{sec:supp-unknown}
The goal of this section is to establish an estimator of $\sigma$ that satisfies the condition from Theorem \ref{thm:plug-in-conservative}. For any data set ${\bf x}$, define 
\begin{align} 
\hat \sigma({\bf x}) = \sqrt{\frac{1}{nq - q} \sum \limits_{i=1}^n  \sum \limits_{j=1}^q ( x_{ij} - \bar{x}_j)^2},\label{eq:def-sigma} 
\end{align} 
where $\bar{x}_j = \sum \limits_{i=1}^n x_{ij}/n$ for all $j = 1, 2, \ldots, q$. 
We will first show that under some assumptions, $\hat \sigma({\bf X}^{(n)})$ asymptotically over-estimates $\sigma$. Then, we will show that if we estimate $\sigma$ by applying $\hat \sigma(\cdot)$ to an independent and identically distributed copy of ${\bf X}^{(n)}$, then the condition from Theorem \ref{thm:plug-in-conservative} is satisfied. 

Let $K^* > 1$ be unknown. We assume that the sequence of mean matrices for $\{{\bf X}^{(n)}\}_{n=1}^\infty$ in Theorem \ref{thm:plug-in-conservative}, which we denote as $\left \{\bm \mu^{(n)} \right \}^{\infty}_{n=1}$, satisfies the following assumptions.
\begin{assumption} 
For all $n = 1, 2, \ldots$, $\left \{\mu^{(n)}_i \right \}_{i=1}^n = \{\theta_1, \ldots, \theta_{K^*}\}$, i.e. there are exactly $K^*$ distinct mean vectors among the first $n$ observations. 
\label{Kmeans} 
\end{assumption} 
\begin{assumption} \label{mixing} 
For all $k = 1, 2, \ldots, K^*$, 
$ \underset{n \rightarrow \infty}{\lim} \left ( \sum \limits_{i=1}^n \mathds{1} \{\mu^{(n)}_i = \theta_k\}/n  \right ) = \lambda_k,$
where $\sum \limits_{k=1}^{K^*} \lambda_k = 1$ and $\lambda_k > 0$. That is, the proportion of the  first $n$ observations that have mean vector $\theta_k$ converges to $\lambda_k$ for all $k$.  
\end{assumption} 
This leads to the following result, which says that under Assumptions \ref{Kmeans} and \ref{mixing}, $\hat \sigma({\bf X}^{(n)})$ asymptotically over-estimates $\sigma$. 
\begin{lemma} \label{lem:marginal}
For $n = 1, 2, \ldots$, suppose that ${\bf X}^{(n)} \sim \mathcal{MN}_{n \times q}(\bm \mu^{(n)}, {\bf I}_n, \sigma^2 {\bf I}_q)$, and that Assumptions \ref{Kmeans} and \ref{mixing} hold for some $K^* > 1$. Then,  
\begin{align} 
\underset{n\rightarrow\infty}{\lim} ~ \mathbb{P}(\hat \sigma({\bf X}^{(n)}) \geq \sigma) = 1. \label{eq:marginal-overshoot} 
\end{align} 
where $\hat \sigma(\cdot)$ is defined in \eqref{eq:def-sigma}. 
\end{lemma}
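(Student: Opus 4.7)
The plan is to show that $\hat\sigma({\bf X}^{(n)})^2$ converges in probability to a constant strictly greater than $\sigma^2$, which immediately yields \eqref{eq:marginal-overshoot} via the continuous mapping theorem. To this end, write $X^{(n)}_{ij} = \mu^{(n)}_{ij} + \varepsilon^{(n)}_{ij}$, where the $\varepsilon^{(n)}_{ij}$ are i.i.d.\ $N(0,\sigma^2)$, so that $X^{(n)}_{ij} - \bar X^{(n)}_j = (\mu^{(n)}_{ij} - \bar\mu^{(n)}_j) + (\varepsilon^{(n)}_{ij} - \bar\varepsilon^{(n)}_j)$. Expanding the square in \eqref{eq:def-sigma} decomposes $\hat\sigma({\bf X}^{(n)})^2$ as $A_n + B_n + C_n$, where
\begin{align*}
A_n &= \frac{1}{q(n-1)}\sum_{j=1}^q\sum_{i=1}^n (\mu^{(n)}_{ij}-\bar\mu^{(n)}_j)^2, \\
B_n &= \frac{1}{q(n-1)}\sum_{j=1}^q\sum_{i=1}^n (\varepsilon^{(n)}_{ij}-\bar\varepsilon^{(n)}_j)^2, \\
C_n &= \frac{2}{q(n-1)}\sum_{j=1}^q\sum_{i=1}^n (\mu^{(n)}_{ij}-\bar\mu^{(n)}_j)(\varepsilon^{(n)}_{ij}-\bar\varepsilon^{(n)}_j).
\end{align*}

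Next, I would treat each term separately. The noise term $B_n$ is a scaled average of $q$ independent sample variances of size $n$, so by the classical law of large numbers $B_n \to \sigma^2$ in probability. For the deterministic term $A_n$, Assumptions \ref{Kmeans} and \ref{mixing} give $\frac{1}{n}\sum_{i=1}^n \mathds{1}\{\mu^{(n)}_i = \theta_k\} \to \lambda_k$, so $\bar\mu^{(n)}_j \to \bar\theta_j := \sum_{k=1}^{K^*}\lambda_k \theta_{kj}$ and $\frac{1}{n}\sum_{i=1}^n(\mu^{(n)}_{ij}-\bar\mu^{(n)}_j)^2 \to \sum_{k=1}^{K^*}\lambda_k(\theta_{kj}-\bar\theta_j)^2$ for each $j$. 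Since $\tfrac{n}{q(n-1)}\to\tfrac{1}{q}$, this gives $A_n \to \tau^2 := \tfrac{1}{q}\sum_{j=1}^q\sum_{k=1}^{K^*}\lambda_k(\theta_{kj}-\bar\theta_j)^2$. Because $K^*>1$, the $\theta_k$'s are distinct as vectors and the $\lambda_k$'s are all strictly positive (Assumption \ref{mixing}), so there is at least one coordinate $j$ for which the inner weighted variance is positive; hence $\tau^2 > 0$.

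For the cross term $C_n$, the identity $\sum_{i=1}^n(\mu^{(n)}_{ij}-\bar\mu^{(n)}_j)(\varepsilon^{(n)}_{ij}-\bar\varepsilon^{(n)}_j) = \sum_{i=1}^n(\mu^{(n)}_{ij}-\bar\mu^{(n)}_j)\varepsilon^{(n)}_{ij}$ (which follows because $\sum_i(\mu^{(n)}_{ij}-\bar\mu^{(n)}_j)=0$) shows that $C_n$ has mean zero and variance $\tfrac{4\sigma^2}{q^2(n-1)^2}\sum_{j=1}^q\sum_{i=1}^n(\mu^{(n)}_{ij}-\bar\mu^{(n)}_j)^2 = O(1/n)$, since the inner double sum is $O(n)$. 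Thus $C_n \to 0$ in probability by Chebyshev's inequality. Combining, $\hat\sigma({\bf X}^{(n)})^2 \to \sigma^2 + \tau^2 > \sigma^2$ in probability, and the continuous mapping theorem (applied to $\sqrt{\cdot}$) together with the fact that convergence in probability to a constant implies $\mathbb{P}(\hat\sigma({\bf X}^{(n)}) \geq \sigma) \to 1$ completes the argument.

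The one subtlety that requires genuine use of the assumptions, and which I anticipate as the main step worth stating carefully, is the strict positivity $\tau^2 > 0$: it relies on $K^* > 1$, the distinctness of the $\theta_k$ as vectors in $\mathbb{R}^q$, and $\lambda_k > 0$ for all $k$. Everything else is a routine decomposition plus LLN/Chebyshev.
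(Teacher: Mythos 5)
Your argument is correct and reaches the same limiting constant as the paper, namely $\hat\sigma^2({\bf X}^{(n)})\overset{p}{\to}\sigma^2+c$ with $c=\tfrac{1}{q}\sum_{j=1}^q\sum_{k=1}^{K^*}\lambda_k(\theta_{kj}-\tilde\theta_j)^2>0$, followed by the continuous mapping theorem; but you get there by a different decomposition. The paper works with the identity $\bigl(\tfrac{nq-q}{nq}\bigr)\hat\sigma^2({\bf X}^{(n)})=\tfrac{1}{nq}\sum_{i,j}(X^{(n)}_{ij})^2-\tfrac{1}{q}\sum_j(\bar X^{(n)}_j)^2$ and controls the first term by a direct mean--variance computation that invokes the normal fourth moment $\mathbb{E}[(X^{(n)}_{ij})^4]=(\mu^{(n)}_{ij})^4+6(\mu^{(n)}_{ij})^2\sigma^2+3\sigma^4$ before applying Chebyshev. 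You instead use the ANOVA-style split $\hat\sigma^2=A_n+B_n+C_n$ into a deterministic between-cluster term, a pure-noise sample variance, and a mean-zero cross term. This isolates the randomness more cleanly: $B_n$ is handled by the standard consistency of the sample variance, and the only explicit variance bound needed is the $O(1/n)$ bound on $C_n$, which involves only second moments of the noise. A further point in your favor is that you justify the strict positivity $c>0$ (distinctness of the $\theta_k$ as vectors, so some coordinate has positive weighted variance, combined with $\lambda_k>0$), a step the paper asserts without comment. Both routes are valid and of comparable length; yours trades the paper's brute-force moment calculation for a slightly more structured decomposition.
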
 

\begin{proof} 
Observe that 
\begin{align} \left ( \frac{nq-q}{nq} \right ) \hat \sigma^2({\bf X}^{(n)}) =\frac{1}{nq} \sum \limits_{i=1}^n  \sum \limits_{j=1}^q \left ( X_{ij}^{(n)} \right ) ^2 - \frac{1}{q} \sum \limits_{j=1}^q \left ( \bar{X}^{(n)}_j \right ) ^2. \label{eq:factorize} 
\end{align} 
We will first show that 
\begin{align} 
\frac{1}{nq} \sum \limits_{i=1}^n  \sum \limits_{j=1}^q \left ( X^{(n)}_{ij} \right ) ^2 \overset{p}{\longrightarrow} \sigma^2 + \frac{1}{q} \sum \limits_{j=1}^q  \sum \limits_{k=1}^{K^*}  \lambda_k \theta_{kj}^2. \label{eq:converge1} 
\end{align} 
Since for any positive integer $r$ and for any $j = 1, 2, \ldots, q$, we have
\begin{align} 
\underset{n \rightarrow \infty}{\lim} ~ \frac{1}{n} \sum \limits_{i=1}^n \left [ \mu_{ij}^{(n)} \right ]^r &=  \underset{n \rightarrow \infty}{\lim} ~ \frac{1}{n} \sum \limits_{i=1}^n  \sum \limits_{k=1}^{K^*} \mathds{1} \left \{\mu_{i}^{(n)} = \theta_k \right \} \left ( \theta_{kj} \right ) ^r = \sum  \limits_{k=1}^{K^*}  \lambda_k \left ( \theta_{kj} \right )^r, \label{eq:mixasym} 
\end{align} 
it follows that 
\begin{align} 
\underset{n \rightarrow \infty}{\lim} ~ \mathbb{E}  \left [ \frac{1}{nq} \sum \limits_{i=1}^n  \sum \limits_{j=1}^q \left ( X^{(n)}_{ij} \right ) ^2 \right ] &= \underset{n \rightarrow \infty}{\lim}  ~ \frac{1}{nq} \sum \limits_{i=1}^n  \sum \limits_{j=1}^q\left  [\text{Var}\left [ X^{(n)}_{ij} \right ]  + \left ( \mathbb{E}\left [ X^{(n)}_{ij}  \right ] \right )^2 \right ] \nonumber \\ 
&= \underset{n \rightarrow \infty}{\lim}  \left ( \sigma^2 + \frac{1}{nq} \sum \limits_{i=1}^n \sum \limits_{j=1}^q \left [ \mu_{ij}^{(n)} \right ]^2 \right ) \nonumber \\ 
&= \sigma^2 + \frac{1}{q} \sum \limits_{j=1}^q \sum  \limits_{k=1}^{K^*}  \lambda_k \left ( \theta_{kj} \right )^2. \label{eq:mean}
\end{align} 
Furthermore, 
\begin{align} 
 \text{Var} \left [ \frac{1}{nq} \sum \limits_{i=1}^n  \sum \limits_{j=1}^q \left ( X^{(n)}_{ij} \right ) ^2 \right ] &\leq \frac{1}{(nq)^2} \sum \limits_{i=1}^n \sum \limits_{j=1}^q \mathbb{E}\left [\left ( X^{(n)}_{ij} \right )^4 \right ]  \nonumber \\ 
&=\frac{1}{(nq)^2} \sum \limits_{i=1}^n \sum \limits_{j=1}^q \left [\left  (\mu_{ij}^{(n)} \right )^4  + 6 \left  (\mu_{ij}^{(n)} \right )^2 \sigma^2 + 3\sigma^4 \right ]   \label{eq:normal-fact}\\
&= \frac{1}{nq} \left [ \frac{1}{nq} \sum \limits_{i=1}^n \sum \limits_{j=1}^q \left ( \mu_{ij}^{(n)} \right )^4 \right ] + \frac{6 \sigma^2}{nq} \left [ \frac{1}{nq} \sum \limits_{i=1}^n \sum \limits_{j=1}^q \left [ \mu_{ij}^{(n)} \right ]^2 \right ]  + \frac{3\sigma^4}{nq}, \nonumber
\end{align}
where \eqref{eq:normal-fact} follows from the fact that $X_{ij}^{(n)} \sim N(\mu_{ij}^{(n)}, \sigma^2)$. It follows from \eqref{eq:mixasym} that \begin{align} 
\underset{n\rightarrow \infty}{\lim} ~ \text{Var} \left [ \frac{1}{nq} \sum \limits_{i=1}^n  \sum \limits_{j=1}^q \left ( X^{(n)}_{ij} \right ) ^2 \right ] = 0. \label{eq:var}
\end{align}
Equation \eqref{eq:converge1} follows from \eqref{eq:mean} and \eqref{eq:var}. 
Next, since \eqref{eq:mixasym} holds for $r = 1$ and for all $j = 1, 2, \ldots, q$, we have that 
\begin{gather*} 
\underset{n \rightarrow \infty}{\lim} ~\mathbb{E}\left [  \bar{X}^{(n)}_j \right ] = \underset{n \rightarrow \infty}{\lim} ~\frac{1}{n} \sum \limits_{i=1}^n \mu_{ij}^{(n)} = \sum  \limits_{k=1}^{K^*}  \lambda_k \theta_{kj}, \quad \quad \underset{n \rightarrow \infty}{\lim}  \text{Var}\left [ \bar{X}^{(n)}_j \right ] = \underset{n \rightarrow \infty}{\lim} \frac{\sigma^2}{n} = 0. 
\end{gather*} 
Thus, for all $j = 1, 2, \ldots, q$, we have that $\bar{X}^{(n)}_j \overset{p}{\rightarrow} \sum \limits_{k=1}^{K^*} \lambda_k \theta_{kj}$. Combining this fact with \eqref{eq:converge1} and \eqref{eq:factorize} yields: 
\begin{align} 
\left ( \frac{nq-q}{nq} \right ) \hat \sigma^2({\bf X}^{(n)}) \overset{p}{\rightarrow} \sigma^2 + \frac{1}{q} \sum \limits_{j=1}^q \sum \limits_{k=1}^{K^*}  \lambda_k (\theta_{kj} - \tilde \theta_j)^2, \label{eq:square-overshoot}
\end{align} 
where $\tilde \theta_j = \sum \limits_{k=1}^{K^*} \lambda_k \theta_{kj}$ for all $j = 1, 2, \ldots, q$. It follows from
 \eqref{eq:square-overshoot} that 
\begin{align} 
\left ( \frac{nq-q}{nq} \right ) \hat \sigma^2({\bf X}^{(n)}) \overset{p}{\rightarrow} \sigma^2 + c, \label{eq:converge2} 
\end{align}
for $c = \frac{1}{q} \sum \limits_{j=1}^q \sum \limits_{k=1}^{K^*}  \lambda_k (\theta_{kj} - \tilde \theta_j)^2 > 0$. Applying the continuous mapping theorem to \eqref{eq:converge2} yields 
$ \hat \sigma({\bf X}^{(n)}) \overset{p}{\rightarrow} \sqrt{\sigma^2 + c}, $
where $\sqrt{\sigma^2 + c} > \sigma$.  Therefore, 
$ \underset{n \rightarrow \infty}{\lim} \mathbb{P} \left (\hat \sigma({\bf X}^{(n)}) \geq \sigma \right ) = 1.$
\end{proof} 
The following result establishes an estimator of $\sigma$ that satisfies the condition in Theorem \ref{thm:plug-in-conservative}, by making use of an independent and identically distributed copy of ${\bf X}^{(n)}$, if such a copy is available. If not, then sample-splitting can be used.
\begin{corollary} 
\label{cor:sigma} 
For $n = 1, 2, \ldots$, suppose that ${\bf X}^{(n)} \sim \mathcal{MN}_{n \times q}(\bm \mu^{(n)}, {\bf I}_n, \sigma^2 {\bf I}_q)$, and let ${\bf x}^{(n)}$ be a realization from ${\bf X}^{(n)}$, and that Assumptions 1 and 2 hold for some $K^* > 1$. For all $n = 1, 2, \ldots, $ let  $\hc_1^{(n)}$ and $\hc_2^{(n)}$ be a pair of estimated clusters in  $\mathcal{C}({\bf x}^{(n)})$. For all $n = 1, 2, \ldots, $ let ${\bf Y}^{(n)}$ be an independent and identically distributed copy of ${\bf X}^{(n)}$. Then,
\begin{align} \Scale[0.9]{\underset{m,n \rightarrow \infty}{\lim} ~\mathbb{P}_{H_0^{ \left \{\hc_1^{(n)}, \hc_2^{(n)} \right \}}}\left (\hat p({\bf X}^{(n)}, \hat \sigma({\bf Y}^{(m)})) \leq \alpha ~\big | ~ \hc_1^{(n)}, \hc_2^{(n)} \in \mathcal{C}({\bf X}^{(n)}) \right ) \leq \alpha,  \text{ for all } 0 \leq \alpha \leq 1}.  \label{eq:conditional2} 
\end{align} 
\end{corollary}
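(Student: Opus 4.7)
The plan is to adapt the proof of Theorem \ref{thm:plug-in-conservative}, replacing the in-sample estimator $\hat \sigma({\bf X}^{(n)})$ with the out-of-sample estimator $\hat \sigma({\bf Y}^{(m)})$ and leveraging the independence of ${\bf Y}^{(m)}$ and ${\bf X}^{(n)}$ to reduce the conditional overshoot hypothesis required by Theorem \ref{thm:plug-in-conservative} to the marginal overshoot statement already established in Lemma \ref{lem:marginal}. Setting $A_n = \{\hc_1^{(n)}, \hc_2^{(n)} \in \mathcal{C}({\bf X}^{(n)})\}$, writing $\hat p_{n,m} = \hat p({\bf X}^{(n)}, \hat \sigma({\bf Y}^{(m)}))$ and $p_n = p({\bf X}^{(n)}; \{\hc_1^{(n)}, \hc_2^{(n)}\})$, and splitting on whether the sigma estimator overshoots, I would obtain, exactly as in \eqref{eq:finite1},
\[
\mathbb{P}_{H_0}(\hat p_{n,m} \leq \alpha \mid A_n) \leq \mathbb{P}_{H_0}(\hat p_{n,m} \leq \alpha,\, \hat \sigma({\bf Y}^{(m)}) \geq \sigma \mid A_n) + \mathbb{P}_{H_0}(\hat \sigma({\bf Y}^{(m)}) < \sigma \mid A_n).
\]

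To handle the first summand, I would invoke the monotonicity in Lemma \ref{lem:monotone} on the event $\{\hat \sigma({\bf Y}^{(m)}) \geq \sigma\}$, which implies $\hat p_{n,m} \geq p_n$ pointwise; the first summand is then bounded above by $\mathbb{P}_{H_0}(p_n \leq \alpha \mid A_n) = \alpha$ by Theorem \ref{thm:pval}. For the second summand, since ${\bf Y}^{(m)}$ is independent of ${\bf X}^{(n)}$ and $A_n$ is measurable with respect to $\sigma({\bf X}^{(n)})$, conditioning on $A_n$ has no effect, so the second summand equals the marginal probability $\mathbb{P}_{H_0}(\hat \sigma({\bf Y}^{(m)}) < \sigma)$. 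Because ${\bf Y}^{(m)} \overset{d}{=} {\bf X}^{(m)}$ and the corollary assumes that Assumptions \ref{Kmeans} and \ref{mixing} hold, Lemma \ref{lem:marginal} applies to ${\bf Y}^{(m)}$ and gives that this marginal probability tends to $0$ as $m \to \infty$. Combining these two bounds and letting $m, n \to \infty$ yields \eqref{eq:conditional2}.

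There is essentially no serious obstacle here: Theorem \ref{thm:plug-in-conservative} is already designed around the monotonicity-plus-overshoot strategy, and the only novelty is that by sampling $\hat \sigma$ from an independent copy we are free of the awkward conditional-overshoot hypothesis of Theorem \ref{thm:plug-in-conservative} and instead need only the marginal overshoot statement of Lemma \ref{lem:marginal}. The one subtlety worth flagging in the writeup is that Lemma \ref{lem:marginal} must be applied under the null $H_0^{\{\hc_1^{(n)}, \hc_2^{(n)}\}}$, but this is immediate because Assumptions \ref{Kmeans} and \ref{mixing} constrain the full mean matrix $\bm \mu^{(m)}$ of ${\bf Y}^{(m)}$ and are independent of whether the particular subset averages $\bar \mu_{\hc_1^{(n)}}$ and $\bar \mu_{\hc_2^{(n)}}$ (which involve $\bm \mu^{(n)}$, not $\bm \mu^{(m)}$) happen to coincide.
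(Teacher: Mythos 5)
Your proposal is correct and takes essentially the same route as the paper: the paper also uses the independence of ${\bf Y}^{(m)}$ and ${\bf X}^{(n)}$ to reduce the conditional overshoot condition to the marginal overshoot statement of Lemma \ref{lem:marginal}, and then invokes Theorem \ref{thm:plug-in-conservative}. The only difference is that you inline the overshoot-splitting argument from the proof of Theorem \ref{thm:plug-in-conservative} rather than citing the theorem as a black box, which is a presentational rather than a mathematical distinction.
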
 
\begin{proof} 
By the independence of ${\bf X}^{(n)}$ and ${\bf Y}^{(n)}$ for all $n = 1, 2, \ldots$, and by Lemma \ref{lem:marginal}, 
$$\underset{m,n\rightarrow\infty}{\lim} ~ \mathbb{P}_{H_0^{ \left \{\hc_1^{(n)}, \hc_2^{(n)} \right \}}}\left ( \hat \sigma({\bf Y}^{(m)}) > \sigma ~ \big | ~ \hc_1^{(n)}, \hc_2^{(n)} \in \mathcal{C}({\bf X}^{(n)}) \right ) = 1.$$ Thus, \eqref{eq:conditional2} follows from Theorem \ref{thm:plug-in-conservative}.
\end{proof} 

\section{Additional simulation studies} 

\subsection{Null p-values when $\sigma$ is unknown}
\label{sec:supp-sim}

In Section \ref{sec:unknown}, we propose plugging an estimator of $\sigma$ into the p-value defined in \eqref{eq:pval-def} to get \eqref{eq:est-pval}, and we established conditions under which rejecting $H_0^{\{\hc_1, \hc_2\}}$ when \eqref{eq:est-pval} is below $\alpha$ asymptotically controls the selective type I error rate (Theorem \ref{thm:plug-in-conservative}). 
Furthermore, in Section \ref{sec:unknown}, we establish an estimator of $\sigma$ that satisfisfies the conditions in Theorem \ref{thm:plug-in-conservative} (Corollary \ref{cor:sigma}). In the following, we will investigate the finite-sample implications of applying the approach outlined in Section \ref{sec:supp-unknown}.  

We generate data from \eqref{eq:mod} with $n = 200$, $q = 10$, $\sigma = 1$, and two clusters,
\begin{align} 
\mu_1 = \cdots = \mu_{100} = \left [ \begin{smallmatrix} \delta/2 \\ 0_{q-1} \end{smallmatrix} \right ] ,  \mu_{101} = \cdots = \mu_{200} = \left [ \begin{smallmatrix} 0_{q-1} \\ -\delta/2 \end{smallmatrix} \right ], \label{eq:equi-means-supp}
\end{align} 
for $\delta \in \{2, 4, 6\}$. We randomly split each data set into equally sized ``training" and ``test" sets. For each training set, we use average, centroid, single, and complete linkage hierarchical clustering to estimate three clusters, and then test $H_0^{\{\hc_1, \hc_2\}}: \bar{\mu}_{\hc_1} = \bar{\mu}_{\hc_2}$ for a randomly chosen pair of clusters. We estimate $\sigma$ by applying $\hat \sigma (\cdot)$ in \eqref{eq:def-sigma} to the corresponding test set, then use it to compute p-values for  average, centroid, and single linkage. In the case of complete linkage, we approximate \eqref{eq:est-pval} by replacing $\sigma$ in the importance sampling procedure described in Section \ref{sec:approx} with $\hat \sigma (\cdot)$ in \eqref{eq:def-sigma} applied to the corresponding test set. Figure \ref{fig:type1-supp} displays QQ plots of the empirical distribution of the p-values against the Uniform$(0, 1)$ distribution, over 500 simulated data sets where $H_0^{\{\hc_1, \hc_2\}}: \bar{\mu}_{\hc_1} = \bar{\mu}_{\hc_2}$ holds, for $\delta \in \{2, 4, 6\}$.  

\begin{figure}[h!]
\centering
\includegraphics[scale=0.5]{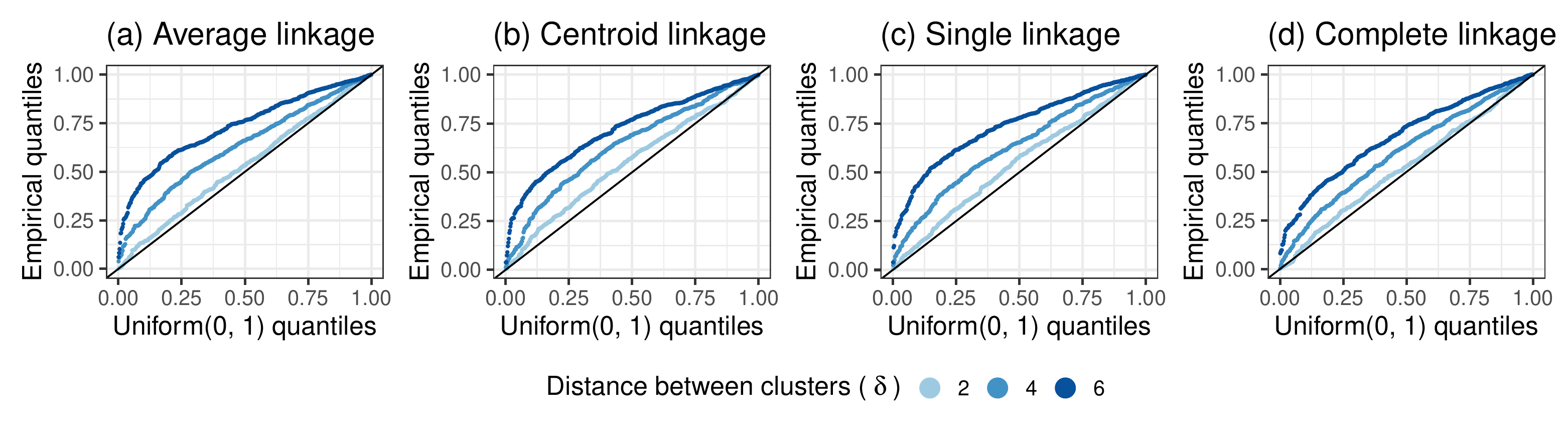}
\caption{\label{fig:type1-supp} For the simulation study described in Section \ref{sec:supp-sim}, QQ-plots of the p-values, using (a) average linkage, (b) centroid linkage, (c) single linkage, and (d) complete linkage, over 500 simulated data sets where $H_0^{\{\hc_1, \hc_2\}}: \bar{\mu}_{\hc_1} = \bar{\mu}_{\hc_2}$ holds, for $\delta \in \{2, 4, 6\}$. } 
\end{figure} 

The p-values appear to be stochastically larger than the Uniform$(0, 1)$ distribution, across all linkages and all values of $\delta$ in \eqref{eq:equi-means-supp}. As $\delta$ decreases and the clusters become less separated,  $\hat \sigma (\cdot)$ in \eqref{eq:def-sigma} overestimates $\sigma$ less severely. Thus, as $\delta$ decreases, the distribution of the p-values becomes closer to the Uniform$(0, 1)$ distribution.

\subsection{Power as a function of effect size}
\label{sec:effect}
Recall that in Section \ref{sec:power}, we considered the \emph{conditional} power of the test proposed in Section \ref{sec:frame-test} to detect a difference in means between two true clusters. We now evaluate the power of the test proposed in Section \ref{sec:frame-test} to detect a difference in means between estimated clusters, without conditioning on having correctly estimated the true clusters. 

We generate data from \eqref{eq:mod} with $n = 150$, $\sigma = 1$, $q = 10$, and $\mu$ given by \eqref{eq:equi-means}. We simulate $10,000$ data sets for nine evenly-spaced values of $\delta \in [3, 7]$. For each simulated data set, we test $H_0^{\{\hc_1, \hc_2\}}: \bar{\mu}_{\hc_1} = \bar{\mu}_{\hc_2}$ for a randomly chosen pair of clusters obtained from single, average, centroid, and complete linkage hierarchical clustering, with significance level $\alpha = 0.05$. We define the \emph{effect size} to be 
$\Delta = \|\bar{\mu}_{\hc_1} - \bar{\mu}_{\hc_2} \|_2/\sigma$
and consider the probability of rejecting $H_0^{\{\hc_1, \hc_2\}}:  \bar{\mu}_{\hc_1} = \bar{\mu}_{\hc_2}$ as a function of $\Delta$.  We smooth our power estimates by fitting a regression spline using the \verb+gam+ function in the \verb+R+ package \verb+mgcv+ \citep{wood2015package}. Figure \ref{fig:power}(a)--(d) displays the smoothed estimates when $\min\{|\hc_1|, |\hc_2|\} \geq 10$, and Figure \ref{fig:power}(e)--(h) displays the smoothed estimates when $\min\{|\hc_1|, |\hc_2|\} < 10$. (We stratify our results because the power is much lower when $\min\{|\hc_1|, |\hc_2|\}$ is small.)

\begin{figure}[h!] 
\centering
\includegraphics[scale=0.55]{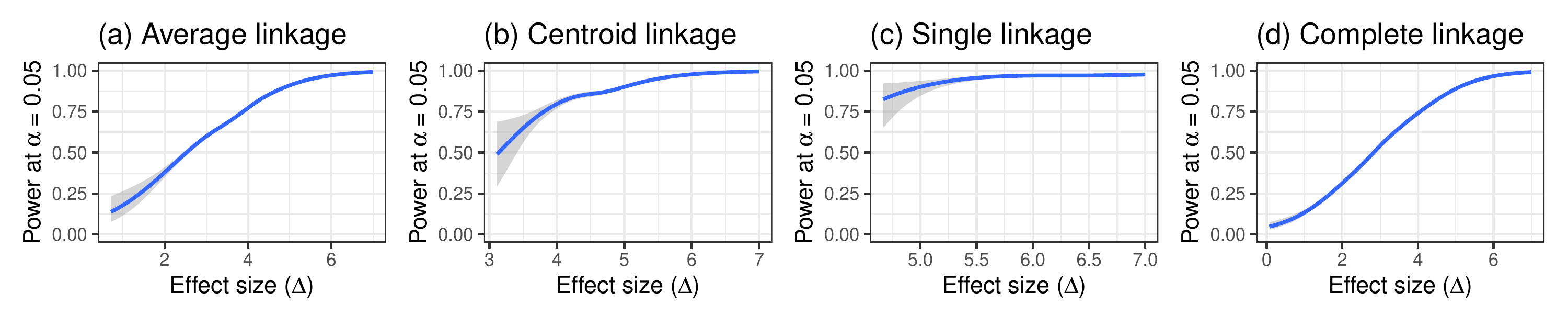} \\ 
\includegraphics[scale=0.55]{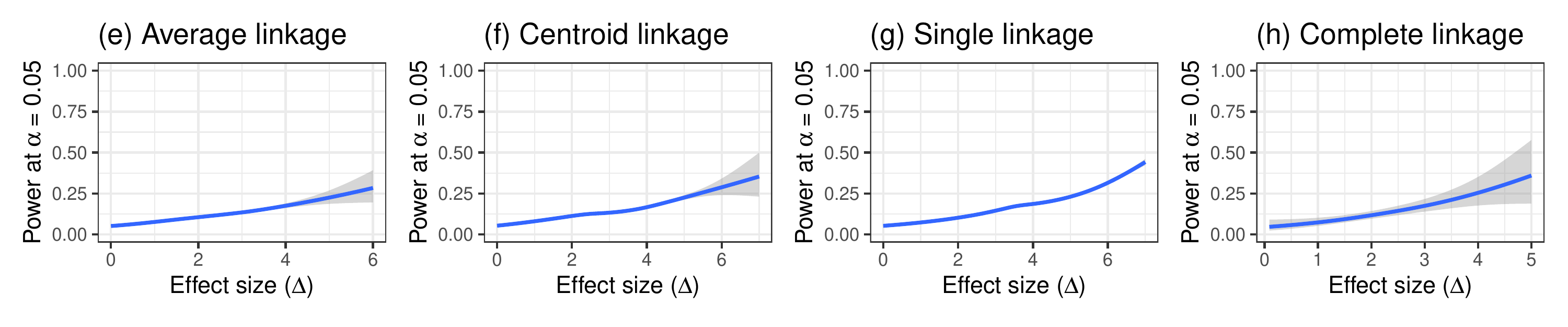} 
\caption{\label{fig:power}  For the simulation study described in Section \ref{sec:effect}, smoothed power estimates as a function of the effect size $\Delta = \|\bar{\mu}_{\hc_1} - \bar{\mu}_{\hc_2} \|_2/\sigma$. In (a)--(d),  $\min\{|\hc_1|, |\hc_2|\} \geq 10$, and in (e)--(h), $\min\{|\hc_1|, |\hc_2|\} < 10$. } 
\end{figure} 

It may come as a surprise that the $x$-axis on Figure \ref{fig:power}(c) starts at 4.5, while the $x$-axis on Figure \ref{fig:power}(d) starts at 0. This is because single linkage hierarchical clustering produces unbalanced clusters unless it successfully detects the true clusters. Thus, the condition $\min \{|\hc_1|, |\hc_2|\} \geq 10$ is only satisfied for single linkage hierarchical clustering when the true clusters are detected, which happens when $\Delta$ is greater than $4.5$. The $x$-axis on Figure (b) starts at 3 rather than 0 for a similar reason. 

In Figure \ref{fig:power}, for all four linkages, the power to reject $H_0^{\{\hc_1, \hc_2\}}: \bar{\mu}_{\hc_1} = \bar{\mu}_{\hc_2}$  increases as the effect size ($\Delta$) increases. 
All four linkages have similar power where the $x$-axes overlap. 

\begin{figure}[h!] 
\centering
\includegraphics[scale=0.4]{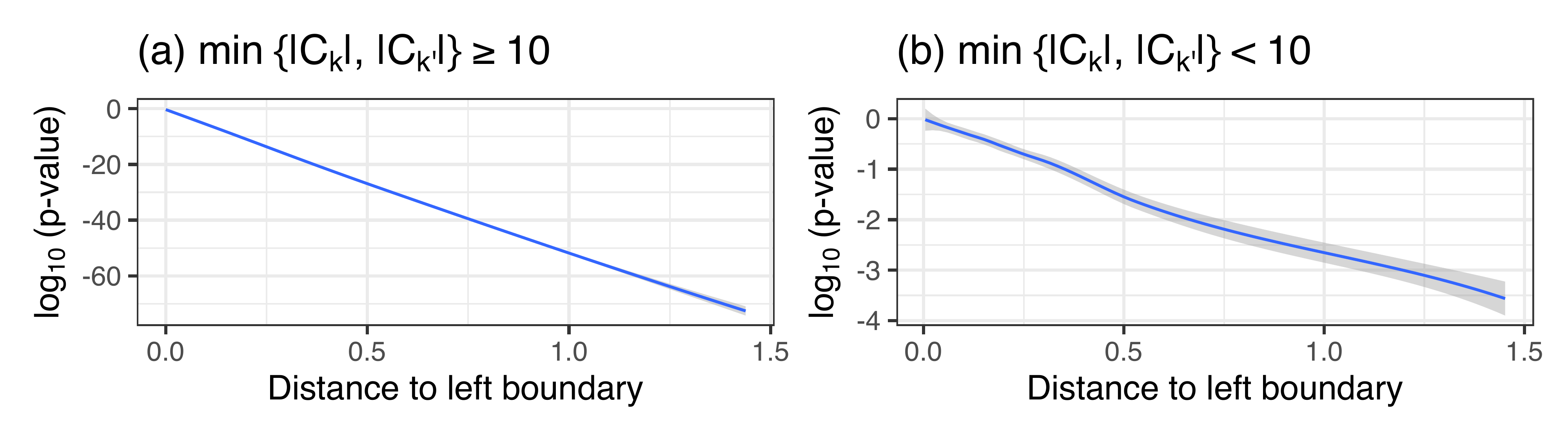}
\caption{\label{fig:boundary}  For the simulation study described in Section \ref{sec:effect} with average linkage and $\Delta = \|\bar{\mu}_{\hc_1} - \bar{\mu}_{\hc_2} \|_2/\sigma = 5$, smoothed p-values on the $\log_{10}$ scale as a function of the distance between the test statistic and the left boundary of $\mathcal{\hat S}$, defined in \eqref{eq:defS}. }
\end{figure} 

Our test has low power when the test statistic ($\|\bar{x}_{\hc_1} - \bar{x}_{\hc_2} \|_2$) is very close to the left boundary of $\mathcal{\hat S}$ defined in \eqref{eq:defS}. This is a direct result of the fact that $p({\bf x}; \{\hc_1, \hc_2\}) = \mathbb{P}(\phi \geq \|\bar{x}_{\hc_1} - \bar{x}_{\hc_2} \|_2 \mid \phi \in \mathcal{\hat S})$, for $\phi \sim (\sigma \sqrt{\frac{1}{|\hc_1} + \frac{1}{|\hc_2|}})  \cdot \chi_q$ (Theorem \ref{thm:pval}). Figure \ref{fig:boundary} displays the smoothed p-values from average linkage hierarchical clustering with effect size $\Delta = \|\bar{\mu}_{\hc_1} - \bar{\mu}_{\hc_2} \|_2/\sigma = 5$ as a function of the distance between the test statistic and the left boundary of $\mathcal{\hat S}$. In Figure \ref{fig:boundary}, the p-values increase as the distance increases. Similar results have been observed in other selective inference frameworks; see \citet{kivaranovic2020length} for a detailed discussion.



\end{document}